 \newcommand{\dist}{{\operatorname{dist}}}
 \newcommand{\length}{{\operatorname{length}}}
  \newcommand{\diam}{{\operatorname{diam}}}
\DeclareMathOperator{\operatorClassP}{{\sf P}\xspace}
\newcommand{\classP}{\ensuremath{\operatorClassP}\xspace}
\DeclareMathOperator{\operatorClassNP}{{\sf NP}}
\newcommand{\classNP}{\ensuremath{\operatorClassNP}}
\DeclareMathOperator{\operatorClassFPT}{{\sf FPT}\xspace}
\newcommand{\classFPT}{\ensuremath{\operatorClassFPT}\xspace}
 \newcommand{\probKPath}{\textsc{Longest Path}\xspace}
\definecolor{blueish}{rgb}{0.122, 0.435, 0.698}
\definecolor{dagstuhlyellow}{rgb}{0.99,0.78,0.07}
\definecolor{lightgray}{rgb}{0.9,0.9,0.9}
\newtcbox{\colbox}{
size=title,
  nobeforeafter,
  colframe=white,
  colback=lightgray,
  arc=10pt,
  tcbox raise base}
\newcommand{\defparproblemR}[4]{
  \vspace{1mm}
\noindent\colbox{
  \begin{minipage}{0.94\textwidth}
  \begin{tabular*}{\textwidth}{@{\extracolsep{\fill}}lr} #1  &
    {\textbf{Parameter:}} #3 \\ \end{tabular*}
  {\textbf{Input:}} #2  \\
  {\textbf{Task:}} #4
  \end{minipage}
  }
  \vspace{1mm}
}
\newcommand{\Oh}{\mathcal{O}}
\newtheorem{theorem}{Theorem}
\newtheorem{lemma}{Lemma}
\newtheorem{claim}{Claim}
\newtheorem{corollary}{Corollary}
\newtheorem{observation}{Observation}
\newtheorem{proposition}{Proposition}
\newcommand{\yes}{{yes}}
\newcommand{\no}{{no}}
\newcommand{\yesinstance}{\yes-instance\xspace}
\newcommand{\pname}{\textsc}
\newcommand{\ProblemFormat}[1]{\pname{#1}}
\newcommand{\ProblemIndex}[1]{\index{problem!\ProblemFormat{#1}}}
\newcommand{\ProblemName}[1]{\ProblemFormat{#1}\ProblemIndex{#1}{}\xspace}
 \newcommand{\probLD}{\ProblemName{Longest Detour}}
  \newcommand{\probED}{\ProblemName{Exact Detour}}
  \newcommand{\probLDPu}{\ProblemName{Long   $(s,t)$-Path}}
   \newcommand{\probLDP}{\ProblemName{Long Directed $(s,t)$-Path}}
   \newcommand{\probDP}{$p$-\ProblemName{Disjoint Paths}}
    \newcommand{\probDPa}{$3$-\ProblemName{Disjoint Paths}}
    \newcommand{\probLPDiam}{\ProblemName{Longest Path above Diameter}}
\newlength{\RoundedBoxWidth}
\newsavebox{\GrayRoundedBox}
\newenvironment{GrayBox}[1]%
   {\setlength{\RoundedBoxWidth}{.93\textwidth}
    \def\boxheading{#1}
    \begin{lrbox}{\GrayRoundedBox}
       \begin{minipage}{\RoundedBoxWidth}}%
   {   \end{minipage}
    \end{lrbox}
    \begin{center}
    \begin{tikzpicture}%
       \node(Text)[draw=black!20,fill=white,rounded corners,%
             inner sep=2ex,text width=\RoundedBoxWidth]%
             {\usebox{\GrayRoundedBox}};
        \coordinate(x) at (current bounding box.north west);
        \node [draw=white,rectangle,inner sep=3pt,anchor=north west,fill=white] 
        at ($(x)+(6pt,.75em)$) {\boxheading};
    \end{tikzpicture}
    \end{center}}     
\newenvironment{defproblemx}[2][]{\noindent\ignorespaces%
                                \FrameSep=6pt%
                                \parindent=0pt%
                \vspace*{-1.5em}
                \ifthenelse{\isempty{#1}}{%
                  \begin{GrayBox}{\textsc{#2}}%
                }{%
                  \begin{GrayBox}{\textsc{#2}  parameterized by~{#1}}%
                }
                \begin{tabular*}{\textwidth}{@{\hspace{.1em}} >{\itshape} p{1.8cm} p{0.8\textwidth} @{}}%
            }{
                \end{tabular*}%
                \end{GrayBox}%
                \ignorespacesafterend
            }  
\begin{document}
\title{Detours in Directed Graphs\thanks{The research received funding from European Research Council (ERC) under the European Union’s Horizon 2020 research and innovation programme (grant no.~819416),  the Swarnajayanti Fellowship grant DST/SJF/MSA-01/2017-18, the Research Council of Norway via the project BWCA (grant no. 314528), and the Austrian Science Fund (FWF) via project Y1329 (Parameterized Analysis in Artificial Intelligence).
}}

\author{Fedor V. Fomin\thanks{
Department of Informatics, University of Bergen, Norway.} \addtocounter{footnote}{-1}
\and 
Petr A. Golovach\footnotemark{} \addtocounter{footnote}{-1}
\and 
William Lochet\footnotemark{} 
\and 
Danil Sagunov\thanks{St.\ Petersburg Department of V.A.\ Steklov Institute of Mathematics, Russia.}~\thanks{JetBrains Research, Saint Petersburg, Russia.} 
\and 
Kirill Simonov\thanks{Algorithms and Complexity Group, TU Wien, Austria} 
\addtocounter{footnote}{-4}
\and  
Saket Saurabh \footnotemark{}\addtocounter{footnote}{3}
\thanks{Institute of Mathematical Sciences, HBNI, Chennai, India.}
}

\date{}

\maketitle

\begin{abstract}
 We study two ``above guarantee'' versions of the classical \probKPath problem on undirected and directed graphs 
 and obtain the following results. 
 In the first variant of \probKPath that we study, called \probLD, the task is to decide whether a graph has an $(s,t)$-path of length at least $\dist_G(s,t)+k$ (where $\dist_G(s,t)$ denotes the length of a shortest path from $s$ to $t$). Bez{\'{a}}kov{\'{a}} et al.~\cite{BezakovaCDF19} proved that on undirected graphs the problem is fixed-parameter tractable (\classFPT) by providing an algorithm of running time $2^{\Oh (k)}\cdot n$. Further, they left the parameterized complexity of the problem on directed graphs open. Our first main result establishes a connection between \probLD on directed graphs and  \probDPa on directed graphs. 
 Using these new insights, we design  a $2^{\Oh (k)}\cdot n^{\Oh(1)}$ time algorithm for the problem on directed planar graphs. Further, the new approach yields a significantly faster \classFPT algorithm on undirected graphs. 
 
 In the second variant of \probKPath, namely \probLPDiam, the task is to decide whether the graph has a path of length at least
 $\diam(G)+k$ ($\diam(G)$ denotes the length of a longest shortest path in a graph $G$). We obtain dichotomy results about \probLPDiam on undirected and directed graphs. 
For (un)directed graphs,  \probLPDiam is \classNP-complete even for $k=1$. However, if the input undirected graph is $2$-connected, then the problem is \classFPT. 
On the other hand, for $2$-connected directed graphs, we show that 
\probLPDiam is solvable in polynomial time for each $k\in\{1,\dots, 4\}$ and is \classNP-complete for every $k\geq 5$. 
The parameterized complexity of \probLD on general directed graphs remains an interesting open problem.
\end{abstract}

\section{Introduction}\label{sec:intro}
In the \probKPath problem, we are given an $n$-vertex graph $G$ and an integer~$k$. (Graph $G$ could be undirected or directed.) The task is to decide whether~$G$ contains a path of length at least $k$. \probKPath is a fundamental algorithmic problem that played one of the central roles in developing parameterized complexity~\cite{Monien85,Bodlaender93a,AlonYZ,HuffnerWZ08,KneisMRR06,ChenLSZ07,ChenKLMR09,Koutis08,Williams09,FominLS14,FominLS14,FominK13,KoutisW16,Bjorklund2017119}. To further our algorithmic knowledge about  the \probKPath problem,   Bez{\'{a}}kov{\'{a}} et al.~\cite{BezakovaCDF19} introduced  a novel ``above guarantee'' parameterization for the problem. 
For a pair of vertices $s,t$ of an $n$-vertex graph $G$, let $\dist_G(s,t)$ be the distance from $s$ to $t$, that is, the length of a shortest path from $s$ to $t$. In this variant of \probKPath, the task is to decide whether a graph has an $(s,t)$-path of length at least
$\dist_G(s,t)+k$. The difference with the ``classical'' parameterization is that instead of parameterizing by the path length, the parameterization is by the offset $k$. 

\defparproblemR{\probLD}{A graph $G$, vertices $s,t\in V(G)$, and an integer
  $k$.}{$k$}{Decide whether there is an $(s,t)$-path in $G$ of length at least
  $\dist_G(s,t) +k$.}
 
Since the length of a shortest
path between $s$ and $t$  can be found in linear time, such a parameterization could provide significantly better solutions than parameterization by the path length.
Bez{\'{a}}kov{\'{a}} et al.~\cite{BezakovaCDF19}  proved that on undirected graphs the problem is fixed-parameter tractable (\classFPT) by providing an algorithm of running time $2^{\Oh (k)}\cdot n$. Parameterized complexity of  \probLD  on directed graphs was left as the main open problem in~\cite{BezakovaCDF19}. Our paper makes significant step towards finding a solution to this open problem.

\medskip \noindent\textbf{Our results.}
Our first main result establishes a connection between  \probLD and another fundamental algorithmic problem  \probDP. 
 Recall that the   \probDP problem is to decide whether $p$ pairs of \emph{terminal} vertices $(s_i,t_i)$,  $i\in\{1,\ldots,p\}$,  in a (directed) graph $G$ could be connected by pairwise internally vertex disjoint $(s_i,t_i)$-paths.
 We prove (the formal statement of our result is given in Theorem~\ref{thm:main}) that if $\mathcal{C}$ is a class of (directed) graphs such that \probDP admits a polynomial time algorithm on $\mathcal{C}$ for $p=3$, then \probLD  is  \classFPT on $\mathcal{C}$. Moreover,    the \classFPT algorithm for \probLD    on $\mathcal{C}$ is single-exponential in $k$ (running in time $2^{\Oh (k)}\cdot n^{\Oh(1)}$). 

Unfortunately, our result does not resolve the question about parameterized complexity of  \probLD on directed graphs. Indeed,   Fortune, Hopcroft, and Wyllie~\cite{FortuneHW80} proved that \probDP is \classNP-complete on directed graphs for every fixed $p\geq 2$.  However, the new insight   helps   to establish the tractability of \probLD  on planar directed graphs, whose complexity was also  open. 
The theorem of  Schrijver from~\cite{Schrijver94} states that  \probDP
could  be solved in time $n^{\Oh(p)}$ when the input is restricted to planar directed graphs.  (This result was improved by Cygan et al.~\cite{CyganMPP13} who proved that \probDP parameterized by $p$ is \classFPT on planar directed graphs.)  Pipelined with our theorem, it immediately implies that \probLD is  \classFPT  on planar directed graphs.

Besides establishing parameterized complexity of \probLD on planar directed graphs our theorem has several  advantages over the previous work even  on undirected graphs. 
  By the seminal result of Robertson and Seymour~\cite{RobertsonS95b},   \probDP is solvable in $f(p)\cdot n^3$ time on undirected graphs for some function $f$ of $p$ only.  Therefore on undirected graphs
 \probDP  is solvable in polynomial time for every fixed $p$, and    for $p=3$ in particular. Later the result of Robertson and Seymour was improved by Kawarabayashi, Kobayashi, and Reed~\cite{KawarabayashiKR12} who gave an algorithm with quadratic dependence on the input size. Pipelined with our result, this brings us to  a Monte Carlo randomized algorithm solving \probLD on undirected graphs in time $10.8^k\cdot n^{\Oh(1)}$.  Our algorithm can be derandomized, and the deterministic algorithm runs in time $45.5^k\cdot n^{\Oh(1)}$.
 While the algorithm of Bez{\'{a}}kov{\'{a}} et al.~\cite{BezakovaCDF19} for undirected graphs runs in time $\Oh(c^k\cdot n)$, that is,   is single-exponential in $k$, the constant  $c$ is huge. The reason is that their algorithm exploits the Win/Win approach based on excluding graph minors.  More precisely, Bez{\'{a}}kov{\'{a}} et al.  proved that 
if a 2-connected graph $G$ contains as a minor,  a graph  obtained from the complete graph $K_4$ by replacing each edge by a path with $k$ edges, then $G$ has an  $(s,t)$-path of length at least $\dist_G(s,t)+k$. Otherwise, in the absence of such a graph as a minor,  the treewidth of $G$ is at most $32k+46$.  
Combining this fact with an FPT 3-approximation  algorithm~\cite{BodlaenderDDFLP16}, running in time $2^{\Oh(k)}\cdot n^{\Oh(1)}$,  to compute the treewidth of a graph,  brings us to a graph of treewidth at most $96k+\Oh(1)$. Finally,   solving \probLD on graphs of bounded treewidth by one of the known single-exponential algorithms, see ~\cite{cut-and-count,BodlaenderCKN15,FominLPS17}, will result in running time 
$3^{96k}\cdot n^{\Oh(1)}$. Thus on undirected graphs,  our algorithm reduces the constant $c$ in the base of the exponent from $3^{96}$ down to $10.8$!

\medskip

Our second set of results addresses the complexity of the problem strongly related to  \probLD. 
 The length of a longest shortest path in a graph $G$ is denoted by  \emph{diameter of $G$},  $\diam(G)$. Thus every graph $G$ has a path of length at least  $\diam(G)$. But does it have a path of length longer than  $\diam(G)$? This leads to 
the following parameterized problem. 

\defparproblemR{\probLPDiam}{A graph $G$ and an integer
  $k$.}{$k$}{Decide whether there is a path in $G$ of length at least
 $\diam(G)+k$.}

 As in \probLD, the parameterization is by the offset $k$. 
When $(s,t)$ is a pair of   diametral vertices in $G$, the length of the shortest $(s,t)$-path in $G$ is the diameter of $G$.
However, this does not allow to reduce \probLPDiam to \probLD --- if there is a path of length  $\diam(G)+k$ in $G$, it is not necessarily an $(s,t)$-path. Moreover, such a path might connect two vertices with a much smaller distance between them than $\diam(G)$. 
In fact, our hardness results for \probLPDiam are based precisely on instances where the target path has this property: its length is very close to $\diam(G)$, but much larger than the shortest distance between its endpoints.
Thus, the lower bounds we obtain for \probLPDiam are not applicable to \probLD.

We obtain the following dichotomy results about \probLPDiam on undirected and directed graphs. 
For undirected graphs,   \probLPDiam is \classNP-complete even for $k=1$. However, if the input undirected graph is $2$-connected, that is, it remains connected after deleting any of its vertices, then the problem is \classFPT. 
For directed graphs, the problem  is also \classNP-complete even for $k=1$.  However, the situation is more complicated  and interesting on $2$-connected directed graphs. (Let us remind that a strongly connected digraph $G$ is $2$-connected or strongly $2$-connected, if for every vertex $v\in V(G)$, graph $G-v$ remains strongly connected.) In this case, we show that 
\probLPDiam is solvable in polynomial time for each $k\in\{1,\dots, 4\}$ and is \classNP-complete for every $k\geq 5$.

\medskip \noindent\textbf{Our approach.} 
 A natural way to approach  \probLD on directed graphs would be to mimic the algorithm for undirected graphs.  By the result of Kawarabayashi and Kreutzer~\cite{KawarabayashiK15}, every directed graph of sufficiently large directed treewidth contains a sizable directed grid as a ``butterfly minor''. 
However, as reported in \cite{BezakovaCDF17}, there are several obstacles towards applying the grid theorem of Kawarabayashi and Kreutzer for obtaining a Win/Win algorithm. 
After several unsuccessful attempts, we switched to another strategy.

We start the proof of Theorem~\ref{thm:main} by checking whether  $G$ has an $(s,t)$-path of length $\dist_G(s,t)+\ell$ for $k\leq \ell<2k$.
This can be done in time $2^{\Oh(k)}\cdot n^{\Oh(1)}$ by calling the algorithm of Bez{\'{a}}kov{\'{a}} et al.~\cite{BezakovaCDF19} that 
 finds an $(s,t)$-path in a directed $G$ of length \emph{exactly} $\dist_G(s,t)+\ell$. 
If such a path is not found, we conclude that if $(G,k)$ is a \yesinstance, then $G$ contains  an $(s,t)$-path of length at least $\dist_G(s,t)+2k$.

Next, we check whether there exist two vertices $v$ and $w$ reachable from $s$ such that $\dist_G(s,w)-\dist_G(s,v)\geq k$ and $G$ has pairwise disjoint $(s,w)$-, $(w,v)$-, and $(v,t)$-paths. If such a pair of vertices exists, we obtain a solution by  concatenating  disjoint $(s,w)$-, $(w,v)$-, and $(v,t)$-paths. This is the place in our algorithm, where we require a subroutine solving \textsc{$3$-Disjoint Paths}.

 When none of the above procedures finds a detour, we prove a combinatorial claim that allows reducing the search of a solution to a significantly smaller region of the graph. This combinatorial claim is the essential part of our algorithm. More precisely, we show that there are two vertices $u$ and $x$, and a specific induced subgraph $H$ of $G$ (depending on $u$ and $x$) such that $G$ has an $(s,t)$-path of length at least $\dist_G(s,t)+k$ if and only if $H$ has an $(u,x)$-path of length at least $\ell$ for a specific $\ell\leq 2k$ (also depending on $u$ and $x$). Moreover, given $u$,  in polynomial time, we can find a feasible domain for vertex $x$, and for each choice of $x$, we can also determine $\ell$ and construct $H$ in polynomial time. Then we apply the algorithm of Fomin et al.~\cite{FominLPSZ18} to check whether $H$ has an $(u,x)$-path in $H$ of length at least $\ell$.

 Our strategy for \probLPDiam is different. For undirected graphs, the solution turns out to be reasonably simple. It easy to show that \probLPDiam is 
\classNP-complete for $k=1$ by reducing 
 \textsc{Hamiltonian Path} to it. When an undirected graph $G$ is $2$-connected, and the diameter is larger than $k+1$, then $G$ always contains a path of length at least $d+k$. If the diameter is at most $k$, it suffices to run a \probKPath algorithm to show that the problem is \classFPT. For directed graphs, a similar reduction shows that the problem is  \classNP-complete for $k=1$. However, for  2-strongly-connected directed graphs, the situation is much more interesting. It is not too difficult to prove that when the diameter of a 2-strongly-connected digraph is sufficiently large, it always contains a path of length $\diam(G)+1$. 
 With much more careful arguments, it is possible to push this up to $k=4$. Thus for each $k\leq 4$, the problem is solvable in polynomial time. For $k=5$ we can construct a family of 2-strongly-connected digraphs of arbitrarily large diameter that do not have a path of length $\diam(G)+5$. These graphs become extremely useful as gadgets that we use to prove that the problem is \classNP-complete for each $k\geq 5$.

 \medskip \noindent\textbf{Related work.} 
There is a vast literature in the field of parameterized complexity devoted to \probKPath 
\cite{Monien85,Bodlaender93a,AlonYZ,HuffnerWZ08,KneisMRR06,ChenLSZ07,ChenKLMR09,Koutis08,Williams09,FominLS14,FominLS14,Bjorklund2017119}. The surveys \cite{FominK13,KoutisW16} and the textbook
\cite[Chapter~10]{cygan2015parameterized} provide an  overview of the advances in the area.

\probLD was introduced by Bez{\'{a}}kov{\'{a}} et al.  in~\cite{BezakovaCDF19}. They gave an \classFPT algorithm for undirected graphs and posed the question about detours in directed graphs. Even the existence of a polynomial time algorithm for   \probLD with $k=1$, that is, deciding whether a directed graph has a path longer than a shortest $(s,t)$-path, is open. For the related \probED problem, deciding whether there is a detour of length \emph{exactly}  $\dist_G(s,t) +k$  is \classFPT both on directed and undirected graphs~\cite{BezakovaCDF19}.

Another problem related to our work is \probLDPu. Here for vertices $s$ and $t$ of  a graph $G$, and integer parameter $k$, we have to 
decide whether there is an $(s,t)$-path in $G$ of length at least
  $k$.  A simple trick, see \cite[Exercise~5.8]{cygan2015parameterized}, allows to use color-coding to show   that \probLDPu  is \classFPT on undirected graph. For directed graphs the situation is more involved, and the first \classFPT algorithm for   \probLDPu on directed graphs was obtained only recently~\cite{FominLPSZ18}.
  The proof of Theorem~\ref{thm:main} uses some of the ideas developed in~\cite{FominLPSZ18}.

  Both \probLD and \probLPDiam fit into the research subarea of 
  parameterized complexity called 
 ``above guarantee'' parameterization~\cite{MahajanR99,AlonGKSY10,CrowstonJMPRS13,GargP16,DBLP:journals/mst/GutinKLM11,GutinIMY12,GutinP16,GutinRSY07,LokshtanovNRRS14,MahajanRS09}.
  Besides the work of  Bez{\'{a}}kov{\'{a}} et al.~\cite{BezakovaCDF17},  several papers study parameterization of longest paths and cycles above different guarantees.
 Fomin et al.  \cite{fomin_et_al:LIPIcs:2020:11724}  designed parameterized algorithms for computing paths and cycles longer than the girth of a graph. The same set of the authors in~\cite{FominGLPSZ20}  studied \classFPT algorithms that finds  paths and cycles above degeneracy. Fomin et al.  \cite{fominGSS20Dirac} developed an \classFPT algorithm computing a cycle of length $2\delta +k$, where $\delta$ is the minimum vertex degree of the input graph.  
 Jansen, Kozma, and Nederlof  in \cite{DBLP:conf/wg/Jansen0N19}  looked at parameterized complexity of Hamiltonicity below Dirac's conditions.  Berger,   Seymour, and Spirkl in~\cite{berger2020finding},  gave a polynomial time algorithm that, with input a graph $G$ and two vertices $s, t$ of $G$, that decides whether there is an \emph{induced} $(s,t)$-path that is longer than a shortest $(s,t)$-path. All these algorithms for computing long paths and cycles above some guarantee are for undirected graphs.

\medskip
The remaining part of this paper is organized as follows. In Section~\ref{sec:prelim}, we give preliminaries. 
In Section~\ref{sec:algorithm}, we prove our first main result establishing connections between
\probDPa and \probLD (Theorem~\ref{thm:main}). 
 Section~\ref{sec:LPAD} is devoted to \probLPDiam. The concluding Section~\ref{sec:concl} provides open questions for further research.

\section{Preliminaries}\label{sec:prelim} 
 \noindent\textbf{Parameterized Complexity.}
We refer to the recent books~\cite{CyganFKLMPPS15,DowneyF13} for the detailed introduction to Parameterized Complexity. Here we just remind that the  computational complexity of an algorithm solving a parameterized problem is measured as a function of the input size $n$ of a problem and an integer \emph{parameter} $k$ associated with the input.
A parameterized problem is said to be \emph{fixed-parameter tractable} (or \classFPT) if it can be solved in time $f(k)\cdot n^{\Oh(1)}$ for some function~$f(\cdot)$.

\medskip
 \noindent\textbf{Graphs.}
Recall that an undirected graph is a pair $G=(V,E)$, where $V$  is a set of vertices and $E$ is a set of unordered pairs $\{u,v\}$ of distinct vertices called \emph{edges}.  A directed graph $G=(V,A)$ is a pair, where $V$ is a set of vertices and 
$A$ is a set of ordered pairs $(u,v)$ of distinct vertices called \emph{arcs}. Note that we do not allow loops and multiple edges or arcs.
We use $V(G)$ and $E(G)$ ($A(G)$, respectively) to denote the set of vertices and the set of edges (set of arcs, respectively) of $G$. 
We write $n$ and $m$ to denote the number of vertices and edges (arcs, respectively) if this does not create confusion. 
For a (directed) graph $G$ and a subset $X\subseteq V(G)$ of vertices, we write $G[X]$ to denote the subgraph of $G$ induced by $X$.
For a set of vertices $S$, $G-S$ denotes the (directed) graph obtained by deleting the vertices of $S$, that is, $G-S=G[V(G)\setminus S]$.
We write $P=v_1\cdots v_k$ to denote a \emph{path} with the vertices $v_1,\ldots,v_k$ and the edges  $\{v_1,v_2\},\ldots,\{v_{k-1},v_k\}$ (arcs $(v_1,v_2),\ldots,(v_{k-1},v_k)$, respectively);
$v_1$ and $v_k$ are the \emph{end-vertices} of $P$ and the vertices $v_2,\dots,v_{k-1}$ are \emph{internal}. We say that $P$ is an \emph{$(v_1,v_k)$-path}. The \emph{length} of $P$, denoted by $\length(P)$, is the number of edges (arcs, respectively) in $P$.
Two paths are \emph{disjoint} if they have no common vertex and they are \emph{internally disjoint} if no internal vertex of one path is a vertex of the other.
For a $(u,v)$-path $P_1$ and a $(v,w)$-path $P_2$ that are internally disjoint, we denote by $P_1\circ P_2$ the \emph{concatenation} of $P_1$ and $P_2$. 
A vertex $v$ is \emph{reachable} from a vertex $u$ in a (directed) graph $G$ if $G$ has a $(u,v)$-path. 
For $u,v\in V(G)$, $\dist_G(u,v)$ denotes the \emph{distance} between $u$ and $v$ in $G$, that is, the minimum number of edges (arcs, respectively) in an $(u,v)$-path. An undirected graph $G$ is \emph{connected} if for every two vertices $u$ and $v$, $G$ has a $(u,v)$-path. A directed graph $G$ is \emph{strongly-connected} if for every two vertices $u$ and $v$ both $u$ is reachable form $v$ and $v$ is reachable from $u$. For a positive integer $k$, an undirected (directed, respectively) graph $G$ is \emph{$k$-connected} (\emph{$k$-strongly-connected}, respectively) if $|V(G)|\geq k$ and $G-S$ is connected (strongly-connected, respectively) for every $S\subseteq V(G)$ of size at most $k-1$.
For a directed graph $G$, by $G^T$ we denote the \emph{transpose} of $G$, i.e.\ $G^T$ is a directed graph defined on the same set of vertices and the same set of arcs, but the direction of each arc in $G^T$ is reversed.

\medskip

We use several known parameterized algorithms for finding long paths. First of all, let us recall the currently fastest deterministic algortihm for \probKPath on directed graphs due to Tsur~\cite{Tsur19b}.

\begin{proposition}[\cite{Tsur19b}]\label{prop:KPath}
There is a deterministic algorithm for \probKPath with running time $2.554^k\cdot n^{\Oh(1)}$.
\end{proposition}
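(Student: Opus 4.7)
The plan is to employ the representative-families framework in the uniform matroid, which underlies all the current fastest deterministic algorithms for \probKPath on directed graphs. First, I would set up a dynamic program over path length: for each vertex $v\in V(G)$ and each $i\in\{1,\dots,k\}$, maintain a family $\mathcal{F}_i(v)$ of size-$i$ vertex subsets $S\subseteq V(G)$ such that $G$ contains a directed path of length $i-1$ ending at $v$ whose vertex set is exactly $S$. Iterating over a guessed starting vertex (or prepending a universal source), each $\mathcal{F}_i(v)$ is produced from $\mathcal{F}_{i-1}(\cdot)$ by extending its entries by one more arc.

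Second, to prevent the exponential blow-up of $|\mathcal{F}_i(v)|$, at every layer I would replace it by a $(k-i)$-representative subfamily $\widehat{\mathcal{F}}_i(v)$ with respect to the uniform matroid $U_{|V(G)|,k}$; such a subfamily has size at most $\binom{k}{i}$ and preserves the existence of a vertex-disjoint extension to a full $k$-path. Using the deterministic fast computation of representative families due to Fomin, Lokshtanov, Panolan and Saurabh, which reduces the bookkeeping to rectangular matrix multiplication, these subfamilies can be maintained with only a polynomial overhead in $\binom{k}{i}$ per update. A crude summation $\sum_i \binom{k}{i}^{\Oh(1)}$ already yields a base of the exponent around $4$ in the most naive implementation.

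Third, to bring the constant all the way down to $2.554$, the key refinement is to split the target path at a carefully chosen fractional position: guess a split vertex $v^\star$ at position $\alpha k$ along the path, run the representative-families DP independently for a prefix of length $\alpha k$ ending at $v^\star$ and for a suffix of length $(1-\alpha)k$ starting at $v^\star$ (working in $G^T$ for the latter), and then combine the two sides by a single matroid-based disjointness check, which itself can be cast as another representative-family computation. The base of the exponent becomes a function of $\alpha$ and of the matrix-multiplication exponent $\omega$, which one then optimizes numerically. The hard part -- and the technical core of Tsur's improvement over Zehavi's earlier $2.619^k$ deterministic bound -- is the tight amortized analysis across all DP layers together with a careful exploitation of fast rectangular matrix multiplication inside the representative-family subroutine; balancing these two ingredients simultaneously is what squeezes the constant down to $2.554$, and getting this balance right is where one expects to spend nearly all of the proof's effort.
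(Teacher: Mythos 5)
The paper does not prove this proposition at all: it is imported verbatim from Tsur~\cite{Tsur19b}, and the ``proof'' in the paper is simply the citation. There is therefore no in-paper argument to compare your sketch against; the authors only use the $2.554^k\cdot n^{\Oh(1)}$ bound as a black-box subroutine (e.g.\ in Observation~\ref{obs:diam-FPT}).

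Your outline is a plausible high-level description of the lineage behind the bound -- dynamic programming over path prefixes, pruning via $q$-representative families over the uniform matroid computed deterministically \`a la Fomin--Lokshtanov--Panolan--Saurabh, and a refined balanced-cutting/recursion scheme optimized against the fast-matrix-multiplication exponent. But it remains a narrative rather than a proof: you never fix the actual cut structure (Tsur's gain comes from a multi-level, deliberately \emph{unbalanced} cutting scheme layered over Zehavi's ``mixing'' framework, not a single split vertex at position $\alpha k$), you do not carry out the optimization that lands on $2.554$, and the ``around~$4$'' base for the naive representative-set DP and the attribution of the prior deterministic bound $2.619^k$ to Zehavi (that figure belongs to Fomin et al.; Zehavi's deterministic bound was $2.597^k$) are both inaccurate. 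None of this is a defect in the paper under review, which correctly treats the proposition as a cited external result; for the paper's purposes, citing~\cite{Tsur19b} is the complete and appropriate justification.
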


We also need the result of Fomin et al.~\cite{FominLPSZ18} for the \probLDP problem. This problem asks, given a directed graph $G$, two vertices $s,t\in V(G)$, and an integer $k\geq 0$, whether $G$ has an $(s,t)$-path of length at least $k$. 

\begin{proposition}[\cite{FominLPSZ18}]\label{prop:LDP}
\probLDP can be deterministically solved in time $4.884^k\cdot n^{\Oh(1)}$.
\end{proposition}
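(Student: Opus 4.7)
The plan would be to adapt the representative-sets / algebraic machinery developed for the classical \probKPath problem (Proposition~\ref{prop:KPath}) to the fixed-endpoints setting. First, I would dispose of the trivial case: if $\dist_G(s,t) \geq k$, then a shortest $(s,t)$-path already witnesses a \yes-answer, so we may assume $\dist_G(s,t) < k$ and we are genuinely looking for a ``detour''. In that regime, color-coding on its own is problematic because the random coloring must make a specific $(s,t)$-path colorful rather than an arbitrary long path, so I would move straight to a deterministic representative-family approach.

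My next step would be a divide-and-conquer over the path. Guess a midpoint vertex $v$ together with lengths $\ell_1,\ell_2$ satisfying $\ell_1+\ell_2 \geq k$, and search for vertex-disjoint paths $P_1$ of length $\ell_1$ from $s$ to $v$ and $P_2$ of length $\ell_2$ from $v$ to $t$. To implement each side efficiently, I would, for every vertex $v$ and length $i\leq k$, maintain a family $\mathcal{F}_v^{(i)}$ of vertex subsets $S$ with $s,v\in S$, $|S|=i+1$, such that $G[S]$ carries an $(s,v)$-path using exactly the vertices of $S$, and iteratively replace $\mathcal{F}_v^{(i)}$ by a representative subfamily in the sense of Fomin--Lokshtanov--Panolan--Saurabh. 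An analogous backward computation in $G^T$ produces families for the $(v,t)$-side. Combining the two sides at $v$ reduces to a representative-family product, which tests disjointness of the $S$-sets.

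The running-time analysis follows the usual template: a pass of length $i$ costs $\binom{k}{i}^{\omega}$ up to polynomial factors, summing over $i$ and $v$ yields a running time of the form $c^k\cdot n^{\Oh(1)}$, and balancing the split point together with the matrix-multiplication exponent in the representative-family product should drive $c$ below $5$. To hit the specific constant $4.884$, I would combine this pipeline with a more refined computation of representative families on the ``heavier'' side (in the spirit of the narrow-sieves / weighted-matroid refinements) and an asymmetric split that exploits the additional structure given by fixing both $s$ and $t$.

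The main obstacle is precisely this optimization of the constant: the off-the-shelf representative-family bound of $\binom{k}{i}$-type yields a base close to $(2e)\approx 5.44$, and pushing it down to $4.884$ requires either a sharper matroid-truncation argument or a hybrid algorithm that applies different techniques depending on the relationship between $\dist_G(s,t)$, the guessed split, and the matrix-multiplication constant $\omega$. A secondary obstacle is derandomization: the cleanest variants of this framework are randomized, and keeping the algorithm deterministic while not inflating the exponential base tends to be where most of the technical work would go.
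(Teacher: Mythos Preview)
This proposition is not proved in the paper at all: it is quoted from~\cite{FominLPSZ18} and used as a black box, so there is no in-paper proof to compare your sketch against.

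Viewed as an independent proof attempt, your plan has a real gap. Your symmetric split---guess $v$ and lengths $\ell_1,\ell_2$ with $\ell_1+\ell_2\ge k$, then run representative families forward from $s$ for length $\ell_1\le k$ and backward from $t$ in $G^T$ for length $\ell_2\le k$---can only certify $(s,t)$-paths whose total length lies in $[k,2k]$. In a directed graph there is no reason such a path must exist: it is entirely possible that every $(s,t)$-path of length at least $k$ has length strictly greater than $2k$, since you cannot truncate an $(s,t)$-path to a shorter $(s,t)$-subpath the way you can truncate an unconstrained long path. On such instances your algorithm would wrongly answer \no. Allowing $\ell_2$ to be unbounded does not help either, because then the backward representative-family pass is parameterized by $\ell_2$, not by $k$.

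The argument in~\cite{FominLPSZ18} is asymmetric precisely to get around this: only the $(s,v)$-prefix is kept of length $\Oh(k)$ and handled via representative families, while the $(v,t)$-suffix---which may be arbitrarily long---is dealt with by a structural/reachability argument rather than a second bounded-length family computation. That asymmetry, and the combinatorial lemma justifying why a bounded-length prefix together with plain reachability for the suffix suffices, is the heart of the result and is absent from your outline. Your remarks about tuning the base to $4.884$ are therefore premature; the correctness issue has to be fixed first.
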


Clearly, both results holds for the variant of the problem on undirected graphs.

Finally, we use the result of Bez{\'{a}}kov{\'{a}} et al.~\cite{BezakovaCDF19} for the variant of \probLD whose task is, given a (directed) graph $G$, two vertices $s,t\in V(G)$, and an integer $k\geq 0$, decide whether $G$ has an $(s,t)$-path of length \emph{exactly} $\dist_G(s,t)+k$.

\begin{proposition}[\cite{BezakovaCDF19}]\label{prop:exact-detour}
There is a bounded-error randomized algorithm that solves \probED on undirected  graphs in time $2.746^k\cdot n^{\Oh(1)}$ and on directed graphs
in time $4^k\cdot n^{\Oh(1)}$. For both undirected and directed graphs, there is a deterministic
algorithm that runs in time $6.745^k\cdot n^{\Oh(1)}$.
\end{proposition}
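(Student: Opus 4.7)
The plan is to exploit the BFS-layered structure rooted at $s$. I would first compute BFS layers $L_0, L_1, \ldots, L_d$ with $d = \dist_G(s,t)$, and associate to every vertex $v$ its BFS level $\lambda(v)$. For a candidate $(s,t)$-path $P$ of length exactly $d+k$, every arc $(u,v)$ of $P$ satisfies $\lambda(v) \leq \lambda(u) + 1$ in the directed case (and $|\lambda(v) - \lambda(u)| \leq 1$ in the undirected case), and the level jumps along $P$ sum to $d$. Consequently, at least $d$ arcs of $P$ are ``forward'' with jump $+1$, and the number of non-forward arcs is at most $k$. In particular, $P$ contains only $O(k)$ ``slack'' vertices, namely those entered by a non-forward arc; between two consecutive slack vertices $P$ traces a shortest path in the BFS layering.

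Given this structural bound, the proof proceeds by encoding the search for $P$ algebraically. For every vertex $v$ I would introduce a variable $x_v$ over a field of characteristic two and build a polynomial $Q$ whose monomials enumerate $(s,t)$-walks of length exactly $d+k$, weighted by a product of vertex variables. The crucial trick is not to mark every vertex of the walk by a distinct variable, which would give effective degree $d+k$ and would be hopeless when $d \gg k$, but rather to use the BFS-level information so that only slack vertices contribute a genuine symbolic variable, while maximal forward runs between slack vertices are folded into fixed shortest-path transfer coefficients precomputed from the BFS structure. Working over characteristic two so that non-simple walks cancel in pairs, the existence of a simple $(s,t)$-path of the desired length becomes equivalent to $Q \not\equiv 0$, which can be tested by a Schwartz--Zippel style random evaluation over a small extension field. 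Since the effective degree of $Q$ is $O(k)$, this yields a randomized algorithm of running time $c^k \cdot n^{\Oh(1)}$; a careful implementation in the style of the narrow-sieves technique of Bj\"orklund, Husfeldt, Kaski, and Koivisto gives $c = 2.746$ in the undirected case, where one can additionally exploit the two-way cancellation from reversing edge orientations along the walk, and $c = 4$ in the directed case, where this extra cancellation is not available.

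For the deterministic $6.745^k$ bound I would replace randomization by $(n,\Oh(k))$-universal sets combined with color-coded dynamic programming over the slack vertices. For each coloring in the universal family, a color-consistent choice of the $\Oh(k)$ slack vertices can be completed greedily to a candidate $(s,t)$-path using the precomputed BFS shortest-path distances between consecutive slack vertices, all in polynomial time per coloring. The size of the universal family dictates the base of the exponent, and standard constructions together with the usual optimization over the partition of ``color budget'' yield the stated $6.745^k \cdot n^{\Oh(1)}$.

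The hard part will be engineering the algebraic encoding so that its effective degree scales with $k$ rather than with $d+k$. This demands a clean separation of the vertices on $P$ into ``forced'' ones (lying on a BFS-forward run between two consecutive slack vertices) and ``slack'' ones, and requires the forced runs to be absorbed into fixed deterministic transition coefficients while only slack vertices carry symbolic variables; ensuring that the resulting polynomial still vanishes on non-simple walks is the main technical point. The directed setting is notably more delicate because directed shortest paths lack the symmetry of the undirected case, which is precisely why the base of the exponent is worse in the directed setting than in the undirected one.
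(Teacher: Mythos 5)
Your structural starting point is correct: in an $(s,t)$-path of length $d+k$ (where $d=\dist_G(s,t)$), every arc advances the BFS level by at most one, the total advance is $d$, hence at most $k$ arcs are non-forward and the path carries only $O(k)$ ``slack.'' But the proposed algebraic encoding has a genuine gap, and it sits exactly where you flag ``the main technical point.'' Characteristic-two cancellation in the narrow-sieves framework relies on every vertex of the enumerated walk carrying a symbolic marker, so that a walk with a repeated vertex admits an involutive pairing (swap the two visits) that kills it mod~$2$. If the forward-run vertices are unmarked and folded into precomputed transfer coefficients, then a walk that repeats a vertex inside a forward run, or whose forward runs overlap each other or a slack vertex, produces a monomial indistinguishable from that of a genuinely simple walk with the same slack vertices. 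Consequently $Q\not\equiv 0$ would only certify the existence of a walk with distinct slack vertices together with an odd number of forward-run completions---strictly weaker than the existence of a simple $(s,t)$-path of length $d+k$. You acknowledge this but supply no mechanism to force the spurious walks to cancel, and I do not see one compatible with keeping the effective degree at $O(k)$; that is a missing idea, not a detail.

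Two further points. First, the constants indicate that the effective parameter in Bez{\'a}kov{\'a} et al.\ is about $2k$, not $k$: note $2.746\approx 1.657^2$, $4=2^2$, $6.745\approx 2.597^2$, each the square of a familiar $k$-path base. This matches the bound of at most $2k$ path vertices lying in BFS levels that contain more than one path vertex (a count obtained from $\sum_i(|V(P)\cap L_i|-1)=k$), whereas you count only the $\leq k$ heads of non-forward arcs; the decomposition you actually need is therefore somewhat coarser than the one you describe, and your predicted base would beat the stated bounds. Second, the deterministic $6.745^k$ bound is not obtained from $(n,\Oh(k))$-universal sets as you suggest---that route gives a visibly larger base; the per-unit factor $2.597$ is the signature of representative-sets-based deterministic long-path subroutines, invoked with parameter roughly $2k$. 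Since the present paper only cites~\cite{BezakovaCDF19} for this proposition rather than reproving it, your proposal should be weighed against that source; the central algebraic step in your write-up remains unjustified, and the derandomization mechanism is misattributed.
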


\section{An FPT algorithm for finding detours}\label{sec:algorithm}
In this section we show the first main result of our paper. 

\begin{theorem}\label{thm:main}
Let $\mathcal{C}$ be a class of directed graphs such that \probDPa 
can be solved in $f(n)$ time 
time on $\mathcal{C}$. Then \probLD can be solved in $45.5^k\cdot n^{\Oh(1)}+\Oh(f(n)n^2)$ time by a deterministic algorithm and in $23.86^k\cdot n^{\Oh(1)}+\Oh(f(n)n^2)$ time by a bounded-error randomized algorithm
when the input is restricted to graphs from $\mathcal{C}$.
\end{theorem}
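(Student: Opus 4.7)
The plan is to proceed in three stages, each of which either solves the instance or constrains it further.

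\emph{Stage~1.} For every offset $\ell\in\{k,k+1,\dots,2k-1\}$, I would invoke the \probED algorithm of Proposition~\ref{prop:exact-detour} to decide whether $G$ has an $(s,t)$-path of length \emph{exactly} $\dist_G(s,t)+\ell$. Any \yes-answer finishes the algorithm. Otherwise, we learn the crucial structural fact that any $(s,t)$-path of length at least $\dist_G(s,t)+k$ must in fact have length at least $\dist_G(s,t)+2k$, so from this point on we only need to search for ``long'' detours. The deterministic cost of this stage is dominated by $\ell=2k-1$, giving $6.745^{2k}\cdot n^{\Oh(1)}\le 45.5^k\cdot n^{\Oh(1)}$; the randomized cost is $4^{2k}\cdot n^{\Oh(1)}=16^k\cdot n^{\Oh(1)}$.

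\emph{Stage~2.} Next, I would look for a detour assembled from three internally disjoint paths. For every pair of vertices $v,w$ reachable from $s$ with $\dist_G(s,w)-\dist_G(s,v)\ge k$, I hand the triple of terminal pairs $\{(s,w),(w,v),(v,t)\}$ to the \probDPa oracle for $\mathcal{C}$. If the oracle returns pairwise internally disjoint paths $P_{sw}$, $P_{wv}$, $P_{vt}$, their concatenation $P_{sw}\circ P_{wv}\circ P_{vt}$ is a simple $(s,t)$-path of length at least $\dist_G(s,w)+\dist_G(v,t)\ge\dist_G(s,v)+k+\dist_G(v,t)\ge\dist_G(s,t)+k$, and we answer \yes. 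There are $\Oh(n^2)$ candidate pairs and each oracle call costs $f(n)$, so Stage~2 takes $\Oh(f(n)n^2)$ time.

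\emph{Stage~3.} If Stages~1 and~2 both fail, I would prove a combinatorial lemma asserting the existence of vertices $u,x\in V(G)$, an induced subgraph $H$ of $G$ (constructible from $u$ and $x$), and an integer $\ell\le 2k$ such that $(G,s,t,k)$ is a \yes-instance of \probLD if and only if $H$ admits a $(u,x)$-path of length at least $\ell$; moreover, $u$ can be enumerated from a polynomial list, and for each $u$ a polynomially bounded candidate set for $x$ can be computed in polynomial time. Given the lemma, for each candidate $(u,x)$ I build $H$ and $\ell$ in polynomial time and invoke Proposition~\ref{prop:LDP} on $(H,u,x,\ell)$, costing $4.884^{\ell}\cdot n^{\Oh(1)}\le 4.884^{2k}\cdot n^{\Oh(1)}=23.86^k\cdot n^{\Oh(1)}$. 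Combining the three stages yields the claimed deterministic bound $45.5^k\cdot n^{\Oh(1)}+\Oh(f(n)n^2)$ and randomized bound $23.86^k\cdot n^{\Oh(1)}+\Oh(f(n)n^2)$.

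The heart of the argument, and the main obstacle, is the localization lemma of Stage~3. The difficulty is threefold: \emph{(i)} identify $u$ and $x$ from polynomially many candidates without losing any solution; \emph{(ii)} guarantee that the target length $\ell$ never exceeds $2k$, which is essential for the final \probLDP call to stay within the single-exponential budget; and \emph{(iii)} prove equivalence between long detours in $G$ and long $(u,x)$-paths in $H$. All three rely on carefully exploiting the twin failures of Stages~1 and~2, likely through an analysis of where a hypothetical optimal detour first deviates from, and last re-enters, a fixed shortest $(s,t)$-path, together with a cut-style argument that confines the excursion to $H$ and keeps its length under control.
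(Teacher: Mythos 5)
Your three-stage outline matches the paper's approach almost exactly: Stage~1 is the same \probED pre-processing that forces any remaining solution to have excess $\geq 2k$; Stage~2 is the same $\Oh(n^2)$ calls to the \probDPa oracle to harvest ``long'' detours that cross many BFS levels; and Stage~3 is the same localization to a bounded-length $\probLDP$ sub-problem. The running time bookkeeping ($6.745^{2k}\le 45.5^k$, $4^{2k}$, $4.884^{2k}\le 23.86^k$) is also the bookkeeping the paper does.

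However, the heart of the proof --- the combinatorial localization lemma that you yourself flag as ``the main obstacle'' in Stage~3 --- is only \emph{stated}, not proven, and this is precisely where essentially all of the technical work in the paper's proof lives. Concretely, the paper carries out the following argument that your proposal leaves as a conjecture. It fixes a minimum-length hypothetical solution $P$, BFS-levels $L_0,\dots,L_\ell$ from $s$, and lets $p$ be the first level visited twice by $P$, with $u,v$ the first two vertices of $P$ in $L_p$ and $w$ the farthest level reached by the $(u,v)$-subpath $P_2$. It then proves (Claim~\ref{cl:first-detour}) that $\length(P_2)\geq k$ by a replacement argument (swap $P_1$ for a shortest $(s,v)$-path, use minimality of $P$ together with the Stage-1 consequence that $\length(P)\geq \dist_G(s,t)+2k$), and uses the Stage-2 failure to establish $q-p\leq k-2$ (Claim~\ref{cl:first-detour-bounds}). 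This in turn confines $P_2$ to levels $L_p,\dots,L_{p+k-2}$ and enables the two-case analysis: when $t$ lies in one of these levels the problem becomes a single $\probLDP$ call of length $\leq 2k-2$; otherwise the paper constructs the set $X$ of vertices that still reach $t$ in $H=G[L_{p+k-1}\cup\cdots\cup L_\ell]$, identifies the first entry point $x$ of $P_3$ into $X$ and its predecessor $y\in L_{p+k-2}$, and proves (Claim~\ref{cl:first-detour-reach}) that the $(v,y)$-prefix of $P_3$ avoids $X$, so that the search reduces to a $(u,y)$-path of length $\geq 2k-2$ in $G[L_p\cup\cdots\cup L_\ell]-X$, plus a free $(x,t)$-path inside $G[X]$. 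None of these claims or the case analysis appears in your proposal. So while you have correctly reverse-engineered the shape of the algorithm and its running time, the proposal does not constitute a proof: the crucial ``why does such a $(u,x,H,\ell)$ always exist and why can we enumerate it'' part is missing.

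One smaller remark: in Stage~2 you pre-filter pairs $(v,w)$ by $\dist_G(s,w)-\dist_G(s,v)\geq k$, whereas the paper enumerates all pairs and checks the total returned length $\geq \dist_G(s,t)+k$; the paper's version, using $\length(R_2)\geq 1$, gets the slightly sharper conclusion $q-p\leq k-2$ upon failure, while your filter only gives $q-p\leq k-1$. This only shifts constants in Cases~1 and~2 and does not break the $\ell\leq 2k$ budget, so it is not a correctness issue, but it is worth being precise about which inequality you actually extract from the Stage-2 failure.
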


\begin{proof}
Let $(G,s,t,k)$ be an instance of \probLD with $G\in \mathcal{C}$. For $k=0$, the problem is trivial and we assume that $k\geq 1$. We also have that  $(G,s,t,k)$ is a trivial no-instance if $t$ is not reachable from $s$.
We assume from now that every vertex of $G$ is reachable from $s$. Otherwise, we set $G:=G[R]$, where $R$ is the set of vertices of $G$ reachable from $s$ using the straightforward property that  every $(s,t)$-path in $G$ is a path in $G[R]$. Clearly, $R$ can be constructed in $\Oh(n+m)$ time by the breadth-first search. 

Using Proposition~\ref{prop:exact-detour}, we check in $6.745^{2k}\cdot n^{\Oh(1)}$ time by a deterministic algorithm (in $4^{2k}\cdot n^{\Oh(1)}$ time by a randomized algorithm, respectively) 
whether $G$ has an $(s,t)$-path of length $\dist_G(s,t)+\ell$ for some $k\leq \ell\leq 2k-1$ by trying all values of $\ell$ in this interval. We return a solution and stop if we discover such a path. Assume from now that this is not the case, that is, if $(G,s,t)$ is a yes-instance, then the length of every $(s,t)$-path of length at least $\dist_G(s,t)+k$ is at least  $\dist_G(s,t)+2k$.

We perform the breadth-first search from $s$ in $G$. For an integer $i\geq 0$, denote by $L_i$ the set of vertices at distance $i$ from $s$. 
Let $\ell$ be the maximum index such that $L_\ell\neq\emptyset$. 
Because every vertex of $G$ is reachable from $s$, $V(G)=\bigcup_{i=0}^\ell L_i$. We call $L_0,\ldots,L_\ell$ \emph{BFS-levels}.

\begin{figure}[ht]
\centering
\scalebox{0.7}{
\input{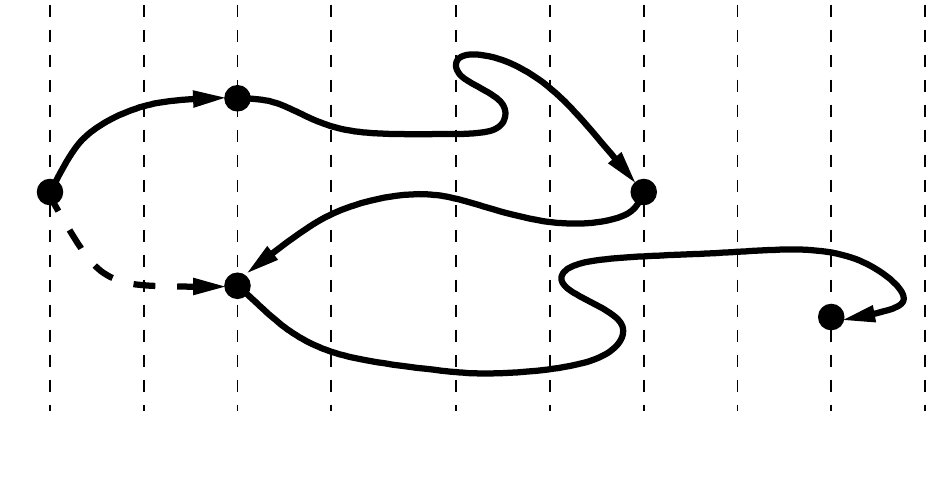_t}}
\caption{The choice of the BFS-levels $L_p$ and $L_q$, vertices $u$, $v$, and $w$, and the  paths $P_1$, $P_2$, and $P_3$. }
\label{fig:struct}
\end{figure}

Our algorithm is based on structural properties of potential solutions. Suppose that $(G,s,t,k)$ is a yes-instance and let a path $P$ be a solution of minimum length, that is, $P$
 is an $(s,t)$-path of length at least $\dist_G(s,t)+k$ and among such paths the length of $P$ is minimum. Denote by $p\in\{1,\ldots,\ell\}$ the minimum index such that $L_p$ contains at least two vertices of $G$. Such an index exists, because if $|V(P)\cap L_i|\leq 1$ for all $i\in\{1,\ldots,\ell\}$, then $P$ is a shortest $(s,t)$-path by the definition of $L_0,\ldots,L_\ell$ and the length of $P$ is $\dist_G(s,t)<\dist_G(s,t)+k$ as $k\geq 1$.  
Let $u$ be the first (in the path order) vertex of $P$ in $L_p$ and let $v\neq u$ be the second vertex of $P$ that occurs in $L_p$. Denote by $P_1$, $P_2$, and $P_3$ the $(s,u)$, $(u,v)$, and $(v,t)$-subpath of $P$, 
respectively. Clearly, $P=P_1\circ P_2\circ P_3$. Let $q\in\{p,\ldots,\ell\}$ be the maximum index such that $P_2$ contains a vertex of $L_q$. Then denote by $w$ the first vertex of $P_2$ in $L_q$.  
See Figure~\ref{fig:struct} for the illustration of the described configuration. We use this notation for a (hypothetical) solution throughout the proof of the theorem.
The following claim is crucial for us. 

\begin{claim}\label{cl:first-detour}
The length of $P_2$ is at least $k$. 
\end{claim} 
 
 \begin{proof}[Proof of Claim~\ref{cl:first-detour}]
For the sake of contradiction, assume that the length of $P_2$ is less than $k$. Let $Q$ be a shortest $(s,v)$-path in $G$. By the definition of BFS-levels, $V(Q)\subseteq L_0\cup\cdots\cup L_p$ and $v$ is a unique vertex of $Q$ in $L_p$. This implies that $Q$ is internally vertex disjoint with $P_3$. Note that the length of $Q$ is the same as the length of $P_1$, because $P_1$ contains exactly one vertex from each of the BFS levels $L_1,\ldots,L_p$. Then $P'=Q\circ P_3$ is an $(s,t)$-path and 
\begin{align*}
\length(P')=&\length(Q)+\length(P_3)=\length(P_1)+\length(P_3)\\=&\length(P)-\length(P_2)\leq \length(P)-k.
\end{align*}
Recall that the length of every $(s,t)$-path of length at least $\dist_G(s,t)+k$ is at least  $\dist_G(s,t)+2k$. This means that $\length(P)\geq \dist_G(s,t)+2k$ and, therefore, the length of $P'$ is at least $\dist_G(s,t)+k$, that is, $P'$ is a solution to the considered instance. However, $\length(P')<\length(P)$, because $P_2$ contains at least one arc. This contradicts the choice of $P$ as a solution of minimum length. This completes the proof of the claim.
\end{proof}
 
By Claim~\ref{cl:first-detour}, solving  \probLD  on $(G,s,t,k)$ boils down to identifying internally disjoint $P_1$, $P_2$, and $P_3$, where the length of $P_2$ is at least $k$. 

First, we check whether we can find paths for $q-p\geq k-1$.  Notice that if $q-p\geq k-1$, then for every internally disjoint $(s,w)$-, $(w,v)$-, and $(v,t)$-paths $R_1$, $R_2$, and $R_3$ respectively, their concatenation $R_1\circ R_2\circ R_3$ is an $(s,t)$-path of length at least $\dist_G(s,t)+k$.  Recall that $G\in \mathcal{C}$ and \probDP can be solved in polynomial time on this graph class for $p=3$. For every choice of two vertices $w,v\in V(G)$, we solve  \probDP on the instance $(G,(s,w),(w,v),(v,s))$. Then if there are paths $R_1$, $R_2$, and $R_3$ forming a solution to this instance, we check whether $\length(R_1)+\length(R_2)+\length(R_3)\geq \dist_G(s,t)+k$. If this holds, we conclude that the path $R_1\circ R_2\circ R_3$ is a solution to the instance $(G,s,t,k)$ of \probLD and return it. Assume from now that this is not the case, that is, we failed to find a solution of this type. Then we can complement Claim~\ref{cl:first-detour} by the following observation about our hypothetical solution $P$.

\begin{claim}\label{cl:first-detour-bounds}
$q-p\leq k-2$. 
\end{claim} 

This means that we can assume that $k\geq 2$ and 
have to check whether we can identify $P_1$, $P_2$, and $P_3$, where $V(P_2)\subseteq \bigcup_{i=p}^{p+k-2}L_i$. For this, we go over all possible choices of $u$. Note that the choice of $u$ determines $p$, i.e., the index of the BFS-level containing $u$. We consider the following two cases for each considered choice of $u$.

\begin{figure}[ht]
\centering
\scalebox{0.7}{
\input{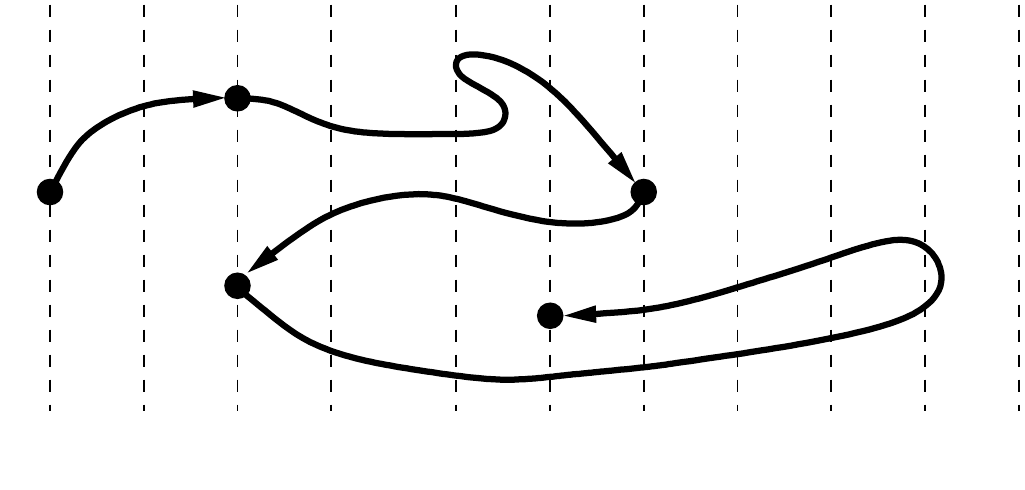_t}}
\caption{The structure of paths  $P_1$, $P_2$, and $P_3$ in Case~1. }
\label{fig:struct-one}
\end{figure}

 \noindent\textbf{Case~1.}
$t\in L_r$ for some $p\leq r\leq p+k-2$ (see Figure~\ref{fig:struct-one}).  Then $\dist_G(s,t)=r$ and $(G,s,t,k)$ is a yes-instance if and only if $G[L_p\cup\cdots\cup L_\ell]$ has a $(u,t)$-path $S$ of length at least $(r-p)+k$, because the $(s,u)$-subpath of a potential solution should be a shortest $(s,u)$-path. Since $r-p\leq k-2$, we have that $(r-p)+k\leq 2k-2$ and we can find $S$ in  $4.884^{2k}\cdot n^{\Oh(1)}$ time by Proposition~\ref{prop:LDP} if it exists. If we obtain $S$, then we consider an arbitrary shortest $(s,u)$-path $S'$ in $G$ and conclude that $S'\circ S$ is a solution. This completes Case~1.

\begin{figure}[ht]
\centering
\scalebox{0.7}{
\input{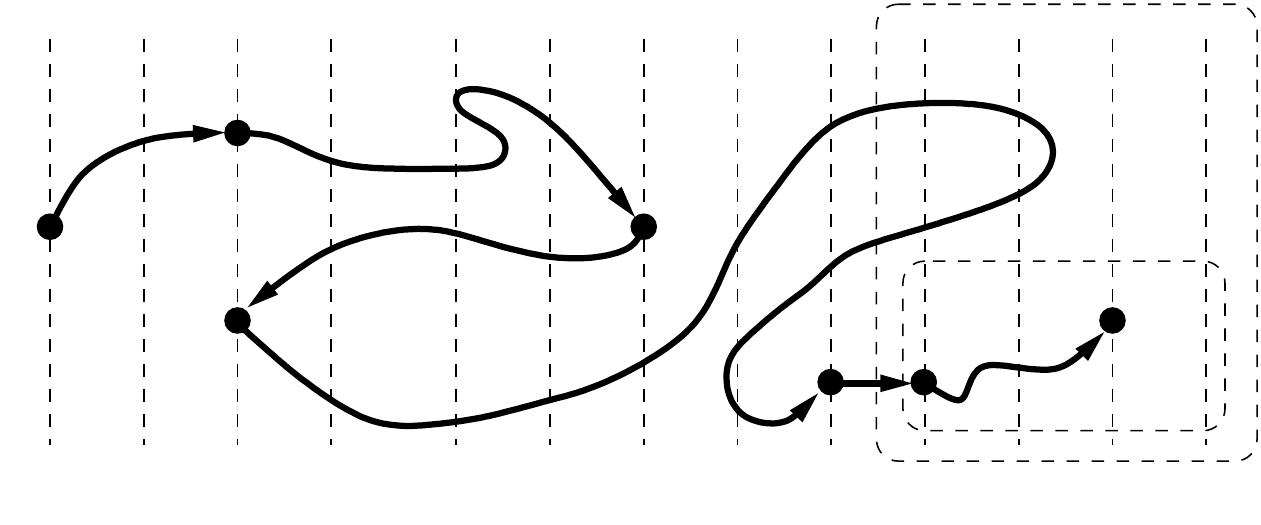_t}}
\caption{The structure of paths  $P_1$, $P_2$, and $P_3$ in Case~2. }
\label{fig:struct-two}
\end{figure}

 \noindent\textbf{Case~2.}
$t\in L_r$ for some $r\geq p+k-1$ (see Figure~\ref{fig:struct-two}). We again consider our hypothetical solution $P=P_1\circ P_2\circ P_3$. Let $H=G[L_{p+k-1}\cup\cdots\cup L_\ell]$. Denote by $X$ the set of vertices $x\in V(H)$  such that $t$ is reachable from $x$ in $H$. 
Denote by $x$ the first vertex of $P_3$ in $X$. Clearly, such a vertex exists because $t\in X$. Moreover, $x\in L_{p+k-1}$ and its predecessor $y$ in $P_3$ is in $L_{p+k-2}$. Otherwise, $t$ would be reachable from $y\in V(H)$ in $H$ contradicting the choice of $x$.  Let $Q_1$ and $Q_2$ be the $(v,y)$- and $(x,t)$-subpaths of $P_3$. Then $P_3=Q_1\circ yx\circ Q_2$. We show one more claim about the hypothetical solution $P$.

\begin{claim}\label{cl:first-detour-reach}
$V(Q_1)\cap X=\emptyset$. 
\end{claim} 

\begin{proof}[Proof of Claim~\ref{cl:first-detour-reach}]
The proof is by contradiction. Assume that $z\in V(Q_1)\cap X$. Then $t$ is reachable from $z$ in $H$. However, $x$ is the first vertex of $P_3$ with this property by the definition; a contradiction.
\end{proof}

Notice that because $x\in X$, there is an $(x,t)$-path $Q_2'$ with $V(Q_2')\subseteq X$. By Claim~\ref{cl:first-detour-reach}, $Q_1$ and $Q_2'$ are disjoint. Since $X\subseteq L_{p+k-1}\cup\cdots\cup L_\ell$, we have that $(V(P_1)\cup V(P_2))\cap X=\emptyset$. In particular,
$Q_2'$ is disjoint with $P_1$ and $P_2$ as well. Let $P_3'=Q_1\circ yx\circ Q_2'$. By Claim~\ref{cl:first-detour}, $P'=P_1\circ P_2\circ P_3'$ is a solution, because $\length(P_2)\geq k$.  This allows us to conclude that $(G,s,t,k)$ has a solution (for the considered choice of $u$) if and only if there is $y\in L_{p+k-2}$ such that 
\begin{itemize}
\item[(i)] there is $x\in X$ such that $(y,x)\in A(G)$, and
\item[(ii)] the graph $G[L_p\cup\cdots\cup L_{\ell}]-X$ has a $(u,y)$-path of length at least $2k-2$.
\end{itemize}

Our algorithm proceeds as follows. We construct the set $X$ using the breadth-first search in $\Oh(n+m)$ time. Then for every $y\in  L_{p+k-2}$ we check (i) whether  there is $x\in X$ such that $(y,x)\in A(G)$, and (ii) whether  $G[L_p\cup\cdots\cup L_{\ell}]-X$ has a $(u,y)$-path $S$ of length at least $2k-2$. To verify (ii), we apply Proposition~\ref{prop:LDP} that allows to perform the check in $4.884^{2k}\cdot n^{\Oh(1)}$ time.
If we find such a vertex $y$ and path $S$, then to obtain a solution, we consider an arbitrary shortest $(s,u)$-path $S'$ and an arbitrary $(x,t)$ path $S''$ in $G[X]$. Then $P'=S'\circ S\circ yx\circ S''$ is a required solution to $(G,s,t,k)$. This concludes the analysis in Case~2 and the construction of the algorithm. 

\medskip
The correctness of our algorithm has been proved simultaneously with its construction. The remaining task is to evaluate the total running time. Recall that we verify  in $6.745^{2k}\cdot n^{\Oh(1)}$ time whether $G$ has an $(s,t)$-path of length $\dist_G(s,t)+\ell$ for some $k\leq \ell\leq 2k-1$ by a deterministic algorithm, and we need $4^{2k}\cdot n^{\Oh(1)}$ time if we use a randomized algorithm. Then we construct the BFS-levels in linear time. Next, we consider $\Oh(n^2)$ choices of $v$ and $w$ and apply the algorithm for 
 \probDPa $(G,(s,w),(w,v),(v,s))$ in $f(n)$ time. If we failed to find a solution so far, we proceed with $\Oh(n)$ possible choices of $u$ and consider either Case~1 or 2 for each choice. In Case~1, we solve the problem in $4.884^{2k}\cdot n^{\Oh(1)}$ time. In Case~2, we construct $X$ in $\Oh(n+m)$ time. Then for $\Oh(n)$ choices of $y$, we verify conditions (i) and (ii) in  $4.884^{2k}\cdot n^{\Oh(1)}$ time. Summarizing, we obtain that the total running time is  $6.745^{2k}\cdot n^{\Oh(1)}+\Oh(f(n)n^2)$.  Because $6.745^2<45.5$, we have that the deterministic algorithm runs in $45.5^k\cdot n^{\Oh(1)}+\Oh(f(n)n^2)$ time. Since $4^2<4.884^2<23.86$, we conclude that the problem can be solved in $23.86^k\cdot n^{\Oh(1)}+\Oh(f(n)n^2)$ time by a bounded-error randomized algorithm. 
 \end{proof}

In particular, combining Theorem~\ref{thm:main} with the results of Cygan et al.~\cite{CyganMPP13}, we obtain the following corollary.

\begin{corollary}\label{cor:planar}
 \probLD can be solved  in $45.5^k\cdot n^{\Oh(1)}$ time by a deterministic algorithm and in $23.86^k\cdot n^{\Oh(1)}$ time by a bounded-error randomized algorithm on planar directed graphs.
 \end{corollary}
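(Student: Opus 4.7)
The plan is to apply Theorem~\ref{thm:main} with $\mathcal{C}$ being the class of planar directed graphs, so the only task is to exhibit a polynomial-time algorithm for \probDPa on planar directed graphs and plug its running time into the bound supplied by the theorem.

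First, I would invoke the algorithm of Cygan, Marx, Pilipczuk and Pilipczuk~\cite{CyganMPP13}, which solves \probDP on planar directed graphs in time $f(p) \cdot n^{\Oh(1)}$ for some computable function $f$; alternatively, one could cite Schrijver's algorithm~\cite{Schrijver94}, which runs in time $n^{\Oh(p)}$ on planar directed graphs. Since we need only the case $p=3$, both results yield a polynomial-time algorithm for \probDPa on planar directed inputs, i.e.\ $f(n) = n^{\Oh(1)}$.

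Next, I would substitute this bound for $f(n)$ into the statement of Theorem~\ref{thm:main}. The additive term $\Oh(f(n) n^2)$ becomes $n^{\Oh(1)}$, which is absorbed into the $n^{\Oh(1)}$ factor in front of the exponential. Therefore, on planar directed graphs the deterministic running time becomes $45.5^k \cdot n^{\Oh(1)}$, and the bounded-error randomized running time becomes $23.86^k \cdot n^{\Oh(1)}$, as claimed.

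There is essentially no obstacle here beyond ensuring that the cited disjoint paths result applies to planar directed graphs without any additional hypotheses (such as strong connectivity or absence of parallel arcs), which is indeed the case for both~\cite{CyganMPP13} and~\cite{Schrijver94}. The corollary then follows immediately.
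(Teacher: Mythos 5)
Your proposal is correct and follows the paper's own route: the paper derives the corollary by combining Theorem~\ref{thm:main} with the result of Cygan et al.~\cite{CyganMPP13} (the paper also mentions Schrijver's $n^{\Oh(p)}$ algorithm~\cite{Schrijver94} earlier in the introduction, exactly as you note, as an alternative for the fixed $p=3$ case). Substituting $f(n)=n^{\Oh(1)}$ into the running-time bound of Theorem~\ref{thm:main} and absorbing the $\Oh(f(n)n^2)$ term is precisely the intended argument.
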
 

Using the fact that \probDP can be solved in $\Oh(n^2)$ time by the results of Kawarabayashi, Kobayashi, and Reed~\cite{KawarabayashiKR12}, we immediately obtain the result for \probLD on undirected graphs. However, we can improve the running time of a randomized algorithm by tuning our algorithm for the undirected case.

\begin{corollary}\label{cor:undir}
 \probLD can be solved in $45.5^k\cdot n^{\Oh(1)}$ time by a deterministic algorithm and in $10.8^k\cdot n^{\Oh(1)}$ time by a bounded-error randomized algorithm on undirected graphs.
 \end{corollary}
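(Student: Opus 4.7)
The deterministic bound follows immediately from Theorem~\ref{thm:main}. By Kawarabayashi–Kobayashi–Reed~\cite{KawarabayashiKR12}, \probDPa can be solved on undirected graphs in $f(n)=\Oh(n^2)$ time, so the additive term $\Oh(f(n)\,n^2)=\Oh(n^4)$ is absorbed into the $n^{\Oh(1)}$ factor and the deterministic bound of Theorem~\ref{thm:main} specialises to $45.5^k\cdot n^{\Oh(1)}$.

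For the randomized bound my plan is to replay the proof of Theorem~\ref{thm:main} on an undirected input, keeping the whole structural argument (BFS-levels, Claims~\ref{cl:first-detour}--\ref{cl:first-detour-reach}, Cases~1 and~2) unchanged, and plugging faster undirected-specific randomized subroutines into the two bottleneck places. The first bottleneck is the preliminary loop that invokes Proposition~\ref{prop:exact-detour} for $\ell\in\{k,\ldots,2k-1\}$: on undirected graphs this is a bounded-error algorithm running in $2.746^{\ell}\cdot n^{\Oh(1)}$ rather than $4^{\ell}\cdot n^{\Oh(1)}$, so the whole loop costs at most a constant times $2.746^{2k}\cdot n^{\Oh(1)}\approx 7.54^{k}\cdot n^{\Oh(1)}$. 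The second bottleneck is the invocations of Proposition~\ref{prop:LDP} inside Cases~1 and~2, which look for a path of length at most $2k-2$; here I would substitute the best known randomized algorithm for the undirected \probLDP problem in place of the $4.884^{\ell}$ directed routine, giving a per-call bound of $c^{2k-2}\cdot n^{\Oh(1)}$ for a constant $c$ strictly smaller than $4.884$.

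Combining the two contributions, the total randomized running time becomes $\max\!\bigl(2.746^{2k},\,c^{2k}\bigr)\cdot n^{\Oh(1)}+\Oh(n^4)$, and the remaining task is to verify that the chosen long-path subroutine gives $c^{2}\le 10.8$, which (since $\sqrt{10.8}\approx 3.29$) leaves ample slack compared to the best available randomized undirected long $(s,t)$-path bounds. The main obstacle is therefore not structural but purely quantitative: one must pick the right published randomized undirected long-path algorithm and verify, by a direct arithmetic check, that its base constant raised to the $2k$-th power together with the $2.746^{2k}$ contribution from the exact-detour loop fits under $10.8^{k}$; no further change in the proof of Theorem~\ref{thm:main} is needed, because every other step is either polynomial (the $3$-Disjoint Paths calls, the BFS, the enumerations over $u$, $v$, $w$, $y$) or a lower-order term.
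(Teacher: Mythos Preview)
Your deterministic argument is fine and matches the paper. The randomized argument, however, has a real gap and also diverges from what the paper actually does.

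Your plan is to keep the structure of the proof of Theorem~\ref{thm:main} intact and swap in a faster randomized undirected \probLDPu routine in Cases~1 and~2. For this to yield $10.8^k$ you need such a routine with base $c\le\sqrt{10.8}\approx 3.29$, and you do not name one; you only say ``one must pick the right published randomized undirected long-path algorithm''. That is the missing ingredient, not a bookkeeping detail: the paper only records the $4.884^k$ bound of Proposition~\ref{prop:LDP} for \probLDPu, and you would need to exhibit a concrete published result with the required constant. Without it the argument is incomplete.

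The paper does \emph{not} obtain $10.8^k$ by substituting a faster long-path subroutine. Instead it exploits a structural fact specific to undirected BFS: every edge joins vertices in the same or in consecutive BFS-levels, so a $(u,v)$-path that reaches level $L_q$ and returns to $L_p$ has length at least $2(q-p)$. Hence already $q-p\ge k/2$ forces $\length(P_2)\ge k$, which sharpens Claim~\ref{cl:first-detour-bounds} from $q-p\le k-2$ to $q-p\le k/2-1$. Consequently the target path length in Cases~1 and~2 drops from roughly $2k$ to roughly $3k/2$, and the same $4.884^{\ell}$ routine now contributes $4.884^{3k/2}<10.8^k$. The $2.746^{2k}$ from the exact-detour loop is dominated by this. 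So the saving comes from halving the exponent, not from a better base. If you want to fix your write-up, replace the unspecified ``faster subroutine'' step by this undirected BFS observation and redo Cases~1 and~2 with the $3k/2$ target.
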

 
\begin{proof} 
The deterministic algorithm is the same as in the directed case. To obtain a better randomized algorithm, 
we follow the algorithm from Theorem~\ref{thm:main} and use the notation introduced in its proof.  Let $(G,s,t,k)$ be an instance of \probLD with $G\in \mathcal{C}$. We assume without loss of generality that $k\geq 1$ and $G$ is connected. 
Using Proposition~\ref{prop:exact-detour}, we check in $2.746^{2k}\cdot n^{\Oh(1)}$ time by a randomized algorithm 
whether $G$ has an $(s,t)$-path of length $\dist_G(s,t)+\ell$ for some $k\leq \ell\leq 2k-1$. If we fail to find a solution this way, we construct the BFS-levels $L_0,\ldots,L_\ell$. 

Suppose that $(G,s,t,k)$ is a yes-instance with a hypothetical solution $P$ composed by the concatenation of $P_1$, $P_2$, and $P_3$ as in the proof of Theorem~\ref{thm:main}. Let also $L_p$ and $L_q$ be the corresponding BFS-levels. 
Observe that if $q-p\geq k/2$, then $\length(P_2)\geq k$, because for every edge $\{x,y\}$ of $G$,  $x$ and $y$ are either in the same BFS-level or in consecutive levels contrary to the directed case where we may have an arc $(x,y)$ where $x\in L_i$ and $y\in L_j$ for arbitrary $j\in\{0,\ldots,i\}$. Recall that for every choice of two vertices $w,v\in V(G)$, we solve  \probDP on the instance $(G,(s,w),(w,v),(v,s))$ and try to find a solution 
to $(G,s,t,k)$ by concatenating the solutions for these instances of  \probDP. If we fail to find a solution this way, we can conclude now that $q-p\leq k/2-1$ improving Claim~\ref{cl:first-detour-bounds}.
Further, we pick $u$ and consider two cases.

In Case~1, where $t\in L_r$ for some $p\leq r\leq p+k/2-1$, we now find a $(u,t)$-path $S$ in $G[L_p\cup\cdots\cup L_\ell]$  of length at least $(r-p)+k\leq 3k/2$  in $4.884^{3k/2}\cdot n^{\Oh(1)}$ time. If such a path exists, we obtain a solution.

In Case~2, where $t\in L_r$ for some $r\geq p+k/2$,  we consider $H=G[L_{h+1}\cup\cdots\cup L_\ell]$ for $h=p+\lceil k/2\rceil$  and denote by $X$ the set of vertices of the connected component of $H$ containing $X$.  
Then for every $y\in  L_{h}$ we check (i) whether  there is $x\in X$ such that $\{y,x\}\in E(G)$, and (ii) whether  $G[L_p\cup\cdots\cup L_{\ell}]-X$ has a $(u,y)$-path $S$ of length at least $k+ \lceil k/2\rceil$ in 
$4.884^{3k/2}\cdot n^{\Oh(1)}$ time. If such a path exists, we construct a solution containing it in the same way as on the directed case.

The running time analysis is essentially the same as in the proof of Theorem~\ref{thm:main}. The difference is that now we have that $2.746^2\leq 4.884^{3/2}<10.80$. This implies that the algorithm runs in $10.8^k\cdot n^{\Oh(1)}$ time.
\end{proof}

\section{\textsc{Longest Path Above Diameter}}\label{sec:LPAD}

In this section, we investigate the complexity of \probLPDiam.
It can be noted that this problem is \classNP-complete in general even for $k=1$.

\begin{proposition}\label{prop:diam-hard}
	\probLPDiam is \classNP-complete for $k=1$ on undirected graphs. 
\end{proposition}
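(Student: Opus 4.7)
The plan is to prove NP-hardness by a polynomial-time reduction from \textsc{Hamiltonian Path}, which is NP-complete on undirected graphs; membership in \classNP is immediate since a path of length at least $\diam(G)+1$, together with the polynomial-time-computable value of $\diam(G)$, is a polynomial-size certificate.

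Given an instance $H$ of \textsc{Hamiltonian Path} with $n$ vertices, I will construct in polynomial time an undirected graph $G$ such that $G$ has a path of length at least $\diam(G)+1$ if and only if $H$ has a Hamiltonian path. The first step is to attach a small ``calibration gadget'' to $H$ that pins down $\diam(G)$ at a specific value $D = D(n)$. One natural realization is to enumerate candidate endpoints $s, t \in V(H)$ of the hypothetical Hamiltonian path and, for each choice, introduce two new pendant vertices $a, b$ attached as leaves only to $s$ and $t$ respectively, together with auxiliary structure (such as subdivisions of edges outside a designated $(s,t)$-backbone) calibrated so that $\diam(G) = n$. The second step is to verify the forward direction: given a Hamiltonian $(s,t)$-path $s, v_1, \ldots, v_{n-2}, t$ in $H$, the $(a,b)$-path $a, s, v_1, \ldots, v_{n-2}, t, b$ in $G$ has length $n+1 = \diam(G)+1$ as required.

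The third step, which I expect to be the most delicate, is the converse direction: ruling out paths in $G$ of length $\diam(G)+1$ that do not encode a Hamiltonian traversal of $H$. The pendants $a, b$ have degree $1$, so any path longer than $\diam(G)$ must either have $a$ and $b$ as both its endpoints, or live inside $G - \{a, b\} = H$ plus gadget. The calibration ensures that the diameter-realizing pair is $\{a, b\}$ with $\dist_G(a,b) = \diam(G) = n$, so reaching length $n+1$ as an $(a,b)$-path requires the $V(H)$-portion of the path to be an $(s,t)$-path of $H$ of length $n-1$, i.e., a Hamiltonian $(s,t)$-path. Paths not using both $a$ and $b$ can be excluded by the same calibration, which guarantees that any such path has length strictly below $n+1$.

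The main obstacle will be tuning the auxiliary structure so that $\diam(G)$ is \emph{exactly} $n$ while simultaneously avoiding ``shortcut'' paths in $G$ that would achieve length $n+1$ without witnessing a Hamiltonian traversal of $H$. This requires a careful case analysis over the possible long paths in $G$, checking that each non-Hamiltonian candidate falls short of the threshold $\diam(G)+1$. Once this calibration is carried out, the equivalence between the existence of a Hamiltonian path in $H$ and of a path of length $\diam(G)+1$ in $G$ is immediate, completing the reduction and hence establishing NP-completeness of \probLPDiam for $k=1$.
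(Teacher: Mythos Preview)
Your high-level plan—reduce from \textsc{Hamiltonian Path} and engineer a graph whose diameter is pinned so that beating it by one forces a Hamiltonian traversal—is the right one, and it is exactly what the paper does. But the concrete construction you propose does not work, and the missing piece is precisely the hard part of the argument.

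Attaching pendants $a,b$ to chosen vertices $s,t$ gives $\dist_G(a,b)=\dist_H(s,t)+2$, which is not $n$ in general; you acknowledge that ``auxiliary structure'' is required to fix this, but the only mechanism you name—subdividing edges of $H$—destroys the reduction, since subdividing an edge $\{u,v\}$ inserts a new degree-$2$ vertex that any Hamiltonian path of the modified graph must traverse via that edge, so Hamiltonicity of the modified graph is no longer equivalent to Hamiltonicity of $H$. Your converse direction is likewise only asserted: you say ``the calibration'' excludes long paths that avoid one of the pendants, but no calibration has actually been specified, so there is nothing to verify. In short, the proposal identifies the target but never supplies a gadget that hits it.

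The paper sidesteps all of these difficulties with a single clean idea: add a \emph{universal} vertex $u$ adjacent to every vertex of $H$, and then hang two paths $P_s$, $P_t$ of length $n-1$ off $u$. The universal vertex forces $\diam(G')=2(n-1)$ (realized by the far endpoints of $P_s$ and $P_t$), and because $u$ is an articulation point, any path in $G'$ visits at most two of the three pieces $P_s-u$, $P_t-u$, $H$. A short count then shows that length $2n-1$ is achievable if and only if one of the two pieces used is $H$ and the portion inside $H$ has $n$ vertices, i.e., is a Hamiltonian path. No enumeration over $(s,t)$ pairs is needed, no edge of $H$ is touched, and both directions of the equivalence are immediate. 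The universal-vertex-plus-long-tails construction is the missing idea in your sketch.
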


\begin{proof}
	Let $G$ be an undirected graph with $n\geq 2$ vertices. We construct the graph $G'$ as follows (see Figure~\ref{fig:hard}).
	\begin{itemize}
		\item Construct a copy of $G$.
		\item Construct a vertex $u$ and make it adjacent to every vertex of the copy of $G$.
		\item Construct two vertices $s$ and $t$, and then $(s,u)$ and $(u,t)$ paths $P_s$ and $P_t$, respectively, of length $n-1$.   
	\end{itemize}
	
	\begin{figure}[ht]
		\centering
		\scalebox{0.7}{
			\input{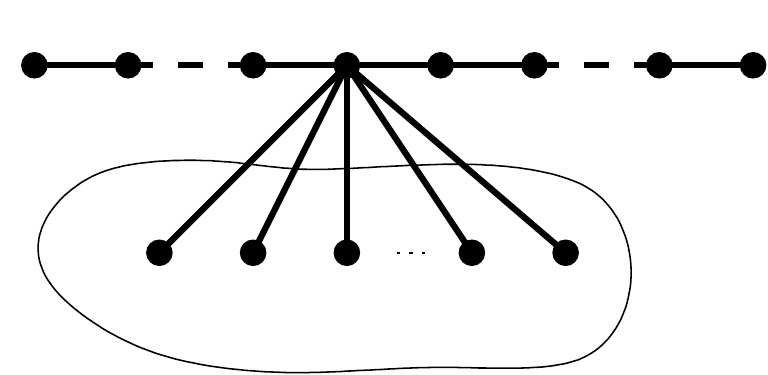_t}}
		\caption{Construction of $G'$. }
		\label{fig:hard}
	\end{figure} 
	
	Notice that $\diam(G)=\length(P_s)+\length(P_t)=2n-2$. It is easy to verify that $G'$ has a path of length $2n-1$ if and only if $G$ has a path of length $n-1$, that is, $G$ is Hamiltonian. Because \textsc{Hamiltonian Path} is well-known to be \classNP-complete~\cite{GareyJ79}, we conclude that  \probLPDiam is \classNP-complete for $k=1$
\end{proof}

Proposition~\ref{prop:diam-hard} immediately implies that \probLPDiam is \classNP-complete for $k=1$ on strongly connected directed graphs as we can reduce the problem on undirected graphs to the directed variant by replacing each edge by the pair of arcs with opposite orientations. Still, it can be observed that the reduction in  Proposition~\ref{prop:diam-hard}  strongly relies on the fact that the constructed graph $G'$ has an articulation point $u$. Hence, it is natural to investigate the problem further imposing connectivity constraints on the input graphs. And indeed, it can be easily seen that  \probLPDiam is \classFPT on 2-connected undirected graphs.

\begin{observation}\label{obs:diam-FPT}
	\probLPDiam can be solved in time  $6.523^k\cdot n^{\Oh(1)}$ on undirected 2-connected graphs.
\end{observation}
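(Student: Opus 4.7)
The plan is a Win/Win dichotomy based on the value of $\diam(G)$. The central structural claim is that, for a $2$-connected undirected graph $G$, the inequality $\diam(G)\geq k+1$ already forces the existence of a path of length at least $\diam(G)+k$, so in that regime the algorithm returns \yes without any search.

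To prove the claim, I would pick two diametral vertices $s,t$ with $\dist_G(s,t)=d=\diam(G)$. Since $G$ is $2$-connected, Menger's theorem provides two internally vertex-disjoint $(s,t)$-paths $P_1$ and $P_2$, and each has length at least $d$ because $d$ is the distance between $s$ and $t$. Their union is therefore a cycle $C$ of length at least $2d$, and removing an arbitrary edge of $C$ yields a (simple) path of length at least $2d-1$. Whenever $d\geq k+1$, this quantity is at least $d+k$, which proves the claim.

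In the complementary regime $\diam(G)\leq k$, the target length $\diam(G)+k$ is bounded by $2k$, so \probLPDiam reduces to deciding whether $G$ contains a path of length at least some value at most $2k$. I would invoke Proposition~\ref{prop:KPath} with parameter $\diam(G)+k$; its running time is at most $2.554^{2k}\cdot n^{\Oh(1)}=(2.554^2)^k\cdot n^{\Oh(1)}\leq 6.523^k\cdot n^{\Oh(1)}$. Since $\diam(G)$ itself can be computed in polynomial time by running BFS from every vertex, the overall procedure matches the claimed running time.

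The only genuinely nontrivial ingredient is the structural lemma, and it follows cleanly from Menger's theorem plus the trivial bound "a cycle of length $L$ contains a path of length $L-1$", so I do not anticipate any real obstacle; the rest is bookkeeping and an invocation of the \probKPath oracle.
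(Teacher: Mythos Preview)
Your proposal is correct and follows essentially the same approach as the paper: both split on whether $\diam(G)\leq k$ or $\diam(G)\geq k+1$, invoke Proposition~\ref{prop:KPath} in the first case with the same $2.554^{2k}\leq 6.523^k$ bound, and in the second case use Menger's theorem on a diametral pair to obtain a cycle of length at least $2d\geq d+k+1$ and hence a path of length at least $d+k$. The paper states the cycle bound slightly differently (directly as $d+k+1$ rather than $2d$), but the underlying argument is identical.
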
   

\begin{proof}
	Let $(G,k)$ be an instance of   \probLPDiam where $G$ is 2-connected. If $d=\diam(G)\leq k$, we can solve the problem in time $2.554^{d+k}\cdot n^{\Oh(1)}$ by using the algorithm of Proposition~\ref{prop:KPath} to check whether $G$ has a path of length $d+k$. Note that $2.554^{d+k}\leq 2.554^{2k}\leq 6.523^k$. Otherwise, if $d>k$, consider a pair of vertices $s$ and $t$ with $\dist_G(s,t)=d$. Because $G$ is 2-connected, by Menger's theorem (see, e.g.,~\cite{Diestel12}),  $G$ has a cycle $C$ containing $s$ and $t$. Since  $\dist_G(s,t)=d$ and $d\geq k+1$, the length of $C$
	is at least $d+k+1$. This implies that $C$ contains a path of length $d+k$.
\end{proof} 

However, the arguments from the proof of Observation~\ref{obs:diam-FPT} cannot be translated to directed graphs. In particular, if a directed graph $G$ is 2-strongly-connected, it does not mean that for every two vertices $u$ and $v$, $G$ has a cycle containing  $u$ and $v$. We show the following theorem providing a full dichotomy for the complexity of \probLPDiam on $2$-strongly-connected graphs.

\begin{theorem}\label{thm:dichotomy_diameter}
	On $2$-strongly-connected directed graphs, \probLPDiam with $k\le 4$ can be solved in polynomial time, while for $k\ge 5$ it is \classNP-complete.
\end{theorem}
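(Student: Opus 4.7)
The theorem splits into a tractability statement for $k\le 4$ and a hardness statement for $k\ge 5$; both are driven by the same combinatorial question: for which $k$ does every $2$-strongly-connected digraph of sufficiently large diameter necessarily contain a path of length $\diam(G)+k$?

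\textbf{Polynomial-time algorithm for $k\le 4$.} The plan is a win/win dichotomy. I aim to prove the following combinatorial claim: for each $k\in\{1,2,3,4\}$ there is a constant $c_k$ such that every $2$-strongly-connected digraph $G$ with $\diam(G)\ge c_k$ contains a path of length $\diam(G)+k$. Given this, the algorithm is immediate: compute $d=\diam(G)$ in polynomial time; if $d\ge c_k$, output YES; otherwise $d+k$ is a constant and one call to the \probKPath algorithm of Proposition~\ref{prop:KPath} with parameter $d+k$ runs in $2.554^{d+k}\cdot n^{\Oh(1)}=n^{\Oh(1)}$ time. The bulk of the work is the lemma. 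Starting from a diametral pair $(s,t)$ and a shortest $(s,t)$-path $P=sv_1\cdots v_{d-1}t$, I would use $2$-strong-connectivity to produce, for every internal vertex $v_i$, an $(s,t)$-walk in $G-v_i$ and bound its surplus length via the BFS-layer structure from $s$. For $k=1$ a single such detour suffices by a Menger-style argument; as $k$ grows one must either find a longer detour or combine several detours and argue that the joint gain is at least $k$. The principal obstacle here is the $k=4$ case: one must show that even when no individual detour contributes four extra edges, the global structure of a large-diameter $2$-strongly-connected digraph still forces some path of length $\diam(G)+4$, while the counterexamples appearing at $k=5$ show that this bound is tight.

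\textbf{NP-hardness for $k\ge 5$.} The decisive step is to construct the family of gadgets mentioned in the introduction: $2$-strongly-connected digraphs $\{H_d\}$ of arbitrarily large diameter $d$ in which no path has length greater than $d+4$. A natural candidate is obtained by joining two oppositely-oriented long cycles by a small number of carefully placed cross arcs, so that the cross arcs restore $2$-strong-connectivity while no combination of them lifts the longest path above $d+4$. Rigorous verification of both properties, particularly the upper bound on path length, requires a careful case analysis of how an arbitrary path enters and leaves the two cycles, and is the main technical obstacle of this direction. Given such a gadget, I would reduce from \textsc{Directed Hamiltonian $(s,t)$-Path}: attach the input digraph $D$ to a copy of $H_d$ with $d$ chosen much larger than $|V(D)|$, in such a way that the combined graph is $2$-strongly-connected with diameter exactly $d$, and a path of length $d+k$ exists in the combined graph if and only if $D$ admits a Hamiltonian $(s,t)$-path. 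For $k>5$ one either inflates the instance or chains several gadgets so that the required excess over the diameter is still achievable only by threading through the \textsc{Hamiltonian Path} instance. Membership in \classNP is immediate throughout.
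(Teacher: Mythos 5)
Your outline reproduces the correct high-level skeleton (a dichotomy lemma plus win/win for the positive side; a bounded-excess gadget family plus a Hamiltonicity reduction for the negative side), and that skeleton does match the paper. However, at the level of actual proof content there are genuine gaps in both directions, and in one place the concrete suggestion you make would not work.

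\textbf{Positive direction.} You propose starting from a single shortest $(s,t)$-path $P$ and, for each internal vertex $v_i$, using an $(s,t)$-walk in $G-v_i$ as a ``detour'' whose surplus you would bound via BFS layers. This does not by itself produce a long path: in $G-v_i$ the distance from $s$ to $t$ is still at least $d$, so the detour can have length exactly $d$, and two different shortest $(s,t)$-paths do not obviously splice into a path of length $d+4$. You correctly flag $k=4$ as the hard case, but you do not supply a mechanism that closes it. The paper works with a richer structure: by Menger it fixes \emph{two} internally disjoint $(s,t)$-paths $P_1,P_2$ and \emph{two} internally disjoint $(t,s)$-paths $Q_1,Q_2$, studies ``outer'' paths joining internal vertices of $P_1$ and $P_2$, and shows (Lemma~\ref{lemma:diameter_many_alter}, Claim~\ref{claim:diameter_paths}) that either a path of length $d+4$ already exists or one can assemble two short prefix/suffix paths of length five near $s$ and $t$ together with a long traversal of $P_i$, giving length at least $d+4$. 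None of this accounting is present or implied in your sketch.

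\textbf{Negative direction, gadget.} Your candidate gadget --- two oppositely oriented long cycles joined by ``a small number of carefully placed cross arcs'' --- cannot be $2$-strongly-connected. If the cross arcs are bounded in number while the cycles have length growing with $d$, then deleting a cycle vertex lying far from every cross arc leaves a long stretch of that cycle with no way back, so the residual graph is not strongly connected. Cross connections must therefore occur at bounded intervals along the whole structure; once they do, controlling the longest path is the real difficulty, because periodic cross arcs typically admit zig-zag paths whose length grows linearly in $d$. The paper's $G_\ell$ is built precisely to block this: it is a chain of a source gadget, $2\ell-1$ ``hat'' gadgets, and a sink gadget, and Lemma~\ref{lemma:separation} shows that any path crossing a hat gadget that separates its endpoints traverses it essentially only once, which is what caps the excess at $4$. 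Your proposal neither provides such a construction nor a reason to believe one exists along the lines you describe.

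\textbf{Negative direction, reduction.} ``Attach $D$ to a copy of $H_d$ in such a way that the combined graph is $2$-strongly-connected with diameter exactly $d$'' hides the substance. Three things must be arranged and verified: the attachment must not create shortcuts that lower the diameter or raise it; it must preserve $2$-strong-connectivity, which forces the Hamiltonicity instance itself to be $2$-connected (the paper therefore reduces from \textsc{Hamiltonian Path} on $2$-connected undirected graphs, not arbitrary directed ones, and uses a dedicated $4$-vertex connector gadget $C$ consisting of two $2$-cycles so that a path can cross from $G_\ell$ into $H$ at most once); and the attachment point must be chosen so that the longest path in $G_\ell$ reaching it has length roughly $d/2$, so that appending a Hamiltonian path of $H$ (of size $\approx d/2 + (k-5)$) yields total length exactly $d+k$ and no more. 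The paper attaches $C$ to the \emph{median} hat gadget and proves (second part of Lemma~\ref{lemma:Gk_paths}) that the longest path in $G_\ell$ ending there has length exactly $4\ell+15$. It also iterates over the choice of which vertex $w$ of $H$ plays the role of $c_2$, since the Hamiltonian path in $H$ has to start at the attachment point. Your sketch omits all of these, and your remark that for $k>5$ one should ``inflate the instance or chain several gadgets'' is unnecessary once the size of $H$ is set to $4\ell+(k-5)$, as the paper does uniformly for all $k\ge 5$.
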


In what remains of this section, we prove the theorem. In Subsection~\ref{sec:alg-diam}, we show the positive part, and Subsection~\ref{sec:hard-daim} contains the hardness proof.

\subsection{Algorithm for $k\leq 4$}\label{sec:alg-diam}
We start with the positive part of Theorem~\ref{thm:dichotomy_diameter}. Note that it is sufficient to consider graphs with diameter greater than some fixed constant,
as in graphs with smaller diameter the problem can be solved in linear time.
The crucial part of the proof  is encapsulated in the following lemma, which states that a path of length $\diam(G)+4$ always exists in a $2$-strongly-connected graph $G$ of sufficiently large diameter. To construct such a path, we take the diameter pair $(s,t)$ and employ $2$-strong-connectivity of the graph to find two disjoint $(s,t)$-paths and two disjoint $(t,s)$-paths in the graph. We then show that out of the several possible ways to comprise a path out of the parts of these four paths, at least one always obtains a path of desired length.
The most non-trivial case of this construction involves constructing two paths of length five, one ending in a vertex $u$ that is at distance three from $s$ and the other starting in a vertex $v$ from which we can reach $t$ using three arcs.
We then concatencate these two paths using a specific $(u,v)$-path inbetween.
Since $(s,t)$ is a diameter pair, the length of any $(u,v)$-path is at least diameter minus six, so the length of the concatenation is at least diameter plus four. The other cases are analyzed in a similar fashion.

\newcommand{\diamind}[1]{2^{3^{#1}}}

\begin{lemma}\label{lemma:path_above_diameter}
	Any 2-strongly-connected directed graph $G$ with $\diam(G)\ge \diamind{17}$ has a path of length $\diam(G)+4$.
\end{lemma}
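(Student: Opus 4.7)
Let $s,t$ be a diametral pair in $G$, so $d := \dist_G(s,t) = \diam(G)$. My strategy is to apply the directed version of Menger's theorem to obtain two internally vertex-disjoint $(s,t)$-paths $P_1, P_2$ and two internally vertex-disjoint $(t,s)$-paths $Q_1, Q_2$, and then to build a path of length at least $d + 4$ by splicing together appropriately chosen segments of these four paths together with short detours near $s$ and near $t$.

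The easy cases are when $\length(P_i) \geq d + 4$ for some $i$, or symmetrically when one of the $Q_j$'s is long enough that combining it (reversed) with a short forward path exceeds $d+4$. So I may assume $\length(P_1), \length(P_2) \in \{d, d+1, d+2, d+3\}$ and matching bounds on the $Q_j$'s. The main case will exhibit a vertex $u$ with $\dist_G(s,u) = 3$ admitting an $(s,u)$-path $R_u$ of length $5$, and symmetrically a vertex $v$ with $\dist_G(v,t) = 3$ admitting a $(v,t)$-path $R_v$ of length $5$. Both should exist because $2$-strong-connectivity forces alternative routings through the BFS ball of radius $3$ around $s$ (respectively, through the reverse BFS ball near $t$). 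Granted these detours, the triangle inequality gives $\dist_G(u,v) \geq d - \dist_G(s,u) - \dist_G(v,t) = d - 6$, so concatenating $R_u$, a shortest $(u,v)$-path $M$, and $R_v$ produces a walk of length at least $5 + (d-6) + 5 = d + 4$.

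The main obstacle will be to upgrade this walk into a simple path. Since $R_u$ lies inside the forward ball of radius $5$ around $s$ and $R_v$ inside the analogous backward ball at $t$, they are automatically vertex-disjoint from each other once $d$ is sufficiently large. A shortest $(u,v)$-path $M$ cannot re-enter either of these small balls without producing a shortcut that would contradict $\dist_G(s,t) = d$. The subtler difficulty is to guarantee that the length-$5$ detours $R_u$ and $R_v$ exist at all: if every candidate fails, then every pair of internally disjoint $(s,u)$-paths supplied by $2$-connectivity consists of two paths of length exactly $3$, forcing very rigid local structure near $s$ (and symmetrically near $t$). Handling these corner cases will require an iterated pigeonhole extraction inside the BFS layers, and the double-exponential bound $\diam(G) \geq \diamind{17}$ is precisely what will allow this extraction to be pushed through enough layers either to expose a valid detour configuration, or to build a long path directly from pieces of the $P_i$'s and $Q_j$'s. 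This structural iteration will be the technically most delicate part of the argument.
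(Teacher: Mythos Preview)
Your overall shape—two internally disjoint $(s,t)$-paths $P_1,P_2$, two internally disjoint $(t,s)$-paths, a length-$5$ detour near $s$, a symmetric one near $t$, and a long middle segment giving $5+(d-6)+5=d+4$—matches the paper's proof. Two concrete points in your plan, however, do not go through as written.

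First, the middle segment cannot be taken to be an arbitrary shortest $(u,v)$-path $M$. Your claim that $M$ ``cannot re-enter either of these small balls without producing a shortcut'' is wrong: the triangle inequality only yields that any vertex $w\in V(M)$ with $\dist_G(s,w)\le 5$ occurs within the first $O(1)$ positions of $M$, not that it is absent. Shortcutting through such a common vertex loses too much length (you drop to roughly $d+2$). The paper sidesteps this entirely by \emph{anchoring} the detours to $P_1,P_2$: the length-$5$ path near $s$ is required to end precisely at the third internal vertex $v_{i,3}$ of some $P_i$ while avoiding every $v_{x,y}$ with $y>3$, and symmetrically near $t$ the detour starts at $v_{i',p_{i'}-2}$ and avoids every $v_{x,y}$ with $y<p_{i'}-2$. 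The middle segment is then a subpath of $P_i$ itself (when $i=i'$), or a subpath of $P_i$ followed by one ``outer'' $(P_i\to P_{i'})$-hop extracted from a $(t,s)$-path and then a subpath of $P_{i'}$ (when $i\ne i'$). Disjointness from the two detours is automatic by construction, not by a metric argument. Note also that the paper's detours are \emph{not} required to start at $s$ (several of the constructions start at $v_{i,1}$ and pass through $s$); insisting on an $(s,u)$-path as you do is more restrictive and harder to guarantee.

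Second, the bound $d\ge 2^{3^{17}}$ is not used to extract the detours by pigeonhole in BFS layers; it is used to guarantee an abundant supply of outer hops for the $i\ne i'$ case. Concretely, the paper shows that if $G$ has no $(d+4)$-path, then along any $(t,s)$-path the $j$-th alternation point between $P_1$ and $P_2$ lies within distance $2^{3^j}$ of $t$ on the corresponding $P_i$; since $d\ge 2^{3^{17}}$ this forces at least $17$ alternations, hence at least $8$ outer subpaths from inner vertices of $P_i$ to inner vertices of $P_{3-i}$ for each $i$. Taking two disjoint $(t,s)$-paths yields $16$ such outer hops, enough that at least one avoids the at most $14$ vertices occupied by the two length-$5$ detours. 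Your ``iterated pigeonhole extraction'' intuition is in the right spirit, but the actual iteration tracks alternation points of $(t,s)$-paths along $P_1,P_2$, not BFS layers, and its output feeds the middle-segment construction rather than the detours.
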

\begin{proof}

	Let $d=\diam(G)$ and $(s,t)$ be a pair such that $\diam(G)=\dist_G(s,t)$.
	Since $G$ is 2-strongly-connected, there exist two internally disjoint paths $P_1$ and $P_2$ from $s$ to $t$. 
	Denote the number of internal vertices of $P_1$ and $P_2$ by $p_1$ and $p_2$ respectively.
	Denote the internal vertices of $P_i$ by $v_{i,1}, v_{i,2},\ldots, v_{i,p_i}$ for each $i\in[2]$.
	Since $\diam(G)=\dist_G(s,t)$, we know that $p_i\ge d-1$. Therefore, if the length of $P_i$ is at least $d+4$, then $P_i$ is a path of length at least $\diam(G)+4$ and we are done. Hence, from now on we assume that $p_i \le d+2$.
	
	We say that a path between two arbitrary vertices in $G$ is an \emph{outer} path if no internal vertices belong to $V(P_1)\cup V(P_2)$.
	We now investigate sufficient conditions for $G$ to contain a path of length at least $d+4$.
	
	\begin{claim}\label{claim:diameter_no_long_alter}
		If there exists an outer path in $G$ going from $v_{i,j}$ to $v_{3-i,j'}$ with $j' \le j-3$, then there exists a path of length at least $d+4$ in $G$.
	\end{claim}
	\begin{proof}[Proof of Claim~\ref{claim:diameter_no_long_alter}]
		Let $T$ be such a path and consider the path $sP_iv_{i,j}Tv_{3-i,j'}P_{3-i}t$.
		This path is an $(s,t)$-path of length at least $j+1+(p_{3-i}+1-j')\ge d+(j-j')+1\ge d+4$.
	\end{proof}

	\begin{claim}\label{claim:diameter_longjump}
	If there exists an outer $(v_{i,j},s)$-path in $G$ with $i \in [2]$ and $j \ge 4$, then $G$ has a path of length at least $\diam(G)+4$.
	The same holds for an outer $(t,v_{i,j})$-path with $j \le p_i-3$.
\end{claim}
\begin{proof}[Proof of Claim~\ref{claim:diameter_longjump}]
	Assume that a $(v_{i,j},s)$-path with described properties exists.
	Then concatenate the path $v_{i,1}P_iv_{i,j}$, the $(v_{i,j},s)$-path and the path $P_{3-i}$.
	As all three paths are internally disjoint, we obtain a $(v_{i,1},t)$-path of length at least $(j-1)+1+d=d+j\ge d+4$ in $G$ as desired.
	
	The case of a $(t,v_{i,j})$-path is symmetrical and we need to concatenate the path $P_{3-i}$ with the $(t,v_{i,j})$-path and with the path $v_{i,j}P_iv_{i,p_i}$.
	The combined path is of length at least $d+1+(p_i-j)\ge d+4$.
\end{proof}

The following lemma shows that we can find either a path of length $d+4$ or many outer paths connecting $P_1$ and $P_2$ in $G$.
	\begin{lemma}\label{lemma:diameter_many_alter}
		If $G$ has no path of length at least $d+4$, then in any $(t,s)$-path and for every $i \in [2]$ there are at least $8$ outer subpaths going from an inner vertex of $P_i$ to an inner vertex of $P_{3-i}$.
	\end{lemma}
	\begin{proof}
		Take an $(t,s)$-path $Q$.
		If $Q$ has no inner vertices in $V(P_1)\cup V(P_2)$, then we can concatenate $v_{1,1}P_1t$ with $Q$ and then with $sP_2v_{2,p_2}$ and obtain a path of length at least $p_1+p_2+1>d+4$.
		
		Thus, $Q$ should have at least one inner vertex in $V(P_1)\cup V(P_2)$.
		Denote all inner vertices of $Q$ from $V(P_1)\cup V(P_2)$ by $q_1, q_2, \ldots, q_z$ in the order they appear on $Q$.
		Hence, $tQq_1$, $q_zQs$ and $q_kQq_{k+1}$ for every $k \in [z-1]$ are outer paths in $G$.
		
		Without loss of generality, we can assume that $q_1 \in V(P_1)$.
		Let $r$ be the largest number such that $q_k \in V(P_1)$ for each $k \in [r]$.
		First note that $r \le 3$, otherwise we can concatenate $P_2$ with $tQq_r$ and obtain a path of length at least $d+r\ge d+4$.
		Suppose now that the length of $q_kP_1t$ is greater than $3(r-1)$ for some $k \in [r]$.
		The vertices $q_1,q_2,\ldots,q_{k-1},q_{k+1},\ldots,q_r$ split this path into $r-1$ parts, and the length of one of these parts is at least four.
		Hence, for some $a\in\{q_1,q_2,\ldots,q_r\}$ and $b \in \{q_1,q_2,\ldots,q_r,t\}$ the length of $aP_1b$ is at least four and contains no inner vertex among $q_1,q_2,\ldots,q_r,t$.
		Then concatenate $P_2$ with $tQaP_1b$ without the vertex $b$.
		The obtained path is of length at least $d+1+3=d+4$.
		Thus, we have that for each $k \in [r]$ the vertex $q_k$ is at distance at most $3(r-1)\le 6$ from $t$ on $P_1$.
		In particular, $q_r=v_{1,j}$ for $j \ge p_1-5$.
		
		If $r=t$, i.e.\ there is no vertex from $P_2$ among $q_1,\ldots, q_r$, then we have an outer $(q_r,s)$-path in $G$.
		Since $q_r=v_{1,j}$ for $j\ge p_1-5\ge 4$, by Claim~\ref{claim:diameter_longjump} $G$ has a path of length at least $d+4$.
		We now have that $z>r$ and $q_{r+1}\in V(P_2)$, i.e.\ $Q$ alternates at least once between $P_1$ and $P_2$.
		We say that $k$ is an alternation point in $Q$ if $q_k \in V(P_i)$ and $q_{k+1}\in V(P_{3-i})$ for some $i \in [2]$.
		As a convenient exception, we also consider $k=z$ an alternation point in $Q$.
		Let $k_1<k_2<\ldots <k_{c-1}<k_c$ be the sequence of all such alternation points in $Q$.
		We know that $c\ge 1$ and $k_1=r$ and $k_c=z$.

		\begin{claim}\label{claim:diameter_close_to_t}
		For each $j \in [c]$, for every $k\in [k_j]$ the distance between $q_k$ and $t$ on $P_1$ or $P_2$ is at most $\diamind{j}$.
		\end{claim}
		\begin{proof}[Proof of Claim~\ref{claim:diameter_close_to_t}]
		Let us prove this lemma by induction, where the case $j=1$ has already been proved.
		Take $j>1$ and assume that the induction hypothesis holds for $j-1$.
		Then we have an outer $(q_{k_{j-1}},q_{k_{j-1}+1})$-path in $G$ and for each $k \in [k_{j-1}+1,k_j]$ $q_k\in V(P_i)$.
		We know that $q_{k_{j-1}}=v_{3-i,j'}$ and $j'\ge p_{3-i}-\diamind{j-1}+1$ by induction.
		If $k_j-k_{j-1}>\diamind{j-1}+3$, then $sP_{3-i}q_{k_{j-1}}Qq_{k_j}$ is a path of length at least $p_{3-i}+5\ge d+4$.
		Hence, we have that $k_j-k_{j-1} \le \diamind{j-1}+3$.
		On the other hand, $q_{k_{j-1}+1}=v_{i,j''}$ and by Claim~\ref{claim:diameter_no_long_alter} we have that $j''\ge j-2\ge p_{3-i}-\diamind{j-1}-1\ge p_i-3^{2^{j-1}}-4$.
		If $k_j=k_{j-1}+1$, then everything is proved since $v_{i,j''}$ is on the distance $p_i-j''+1\le\diamind{j-1}+5\le \diamind{j}$ from $t$ on $P_i$.
		From now we assume that $k_j-k_{j-1}>1$.
		
		Let $\ell,u\in  [k_{j-1}+1,k_j]$ be such that $q_\ell$ is the farthest from $t$ on $P_i$ and $q_u$ is the closest to $t$ on $P_i$.
		Then $q_\ell P_i q_u$ contains $q_k$ for each $k \in [k_{j-1}+1,k_j]$.
		The vertices $q_{k_{j-1}+1},\ldots, q_{k_j}$ split this path into $k_j-(k_{j-1}+1)$ parts.
		If the length of $q_\ell P_i q_u$ is more than $(\diamind{j-1}+2)(k_j-k_{j-1}-1)$, then one of these parts has length at least $\diamind{j-1}+3$.
		Denote the endpoints of this part by $q_a$ and $q_b$, i.e.\ this part is $q_aP_iq_b$.
		Then consider the path $sP_{3-i}q_{k_{j-1}}Qq_aP_iq_b$ without the vertex $q_b$.
		This path is of length at least $(p_{3-i}-\diamind{j-1}+1)+1+(\diamind{j-1}+3)\ge p_{3-i}+5\ge d+4$.
		Thus, we now assume that the length of $q_\ell P_iq_u$ is at most $(\diamind{j-1}+2)(k_j-k_{j-1}-1)\le (\diamind{j-1}+2)^2$.
		
		Then for each $k \in [k_{j-1}+1,k_j]$ the vertex $q_k$ is on a distance at most $(\diamind{j-1}+2)^2$ from $q_{k_{j-1}+1}$ on $P_i$.
		Since $q_{k_{j-1}+1}$ is on a distance at most $\diamind{j-1}+5$ from $t$, we have that for each such $q_k$ the distance between $q_k$ and $t$ on $P_i$ is at most \[\diamind{j-1}+5+(\diamind{j-1}+2)^2\le 2^{2\cdot 3^{j-1}}+5\cdot \diamind{j-1}+9\le3\cdot 2^{2\cdot 3^{j-1}}\le \diamind{j}.\]
		\end{proof}
	
		Using Claim~\ref{claim:diameter_close_to_t}, we obtain that $q_z=q_{k_c}=v_{i,j}$, where $i\in[2]$ and $j \ge p_i-\diamind{c}+1$.
		$Q$ also yields an outer $(q_z,s)$-path, and by Claim~\ref{claim:diameter_longjump}, we have that $j\le 3$ or $G$ has a path of length $d+4$.
		Then $p_i\le \diamind{c}+2$.
		Since $p_i \ge d-1$, we obtain that $\diamind{c}+3\ge d$ so $c\ge  17$.
		It follows that for each $i\in[2]$ there are at least $8$ outer $(v_{i,j},v_{3-i,j'})$-paths in $G$, and all of them are subpaths of $Q$.
	\end{proof}

	The last claim in the proof shows that we can find two disjoint paths, one near $s$ and one near $t$ in $G$.
	We shall then combine them in a single path of length $d+4$ in $G$ using paths from Lemma~\ref{lemma:diameter_many_alter}.
	
	\begin{claim}\label{claim:diameter_paths}
		In $G$, there is either:
		\begin{enumerate}
			\item A path of length at least $d+4$, or
			\item A path of length $5$ ending in $v_{1,3}$ or in $v_{2,3}$ that avoids all vertices of form $v_{i,j}$ for $j>3$, and
			
			a path of length $5$ starting in $v_{1,p_1-2}$ or in $v_{2,p_2-2}$ that avoids all vertices of form $v_{i,j}$ for $j<p_2-2$.
			These two paths do not share any common vertex.
		\end{enumerate}
	\end{claim}
	\begin{proof}[Proof of Claim~\ref{claim:diameter_paths}]
		Since $G$ is 2-strongly-connected, there are two internally disjoint $(s,t)$-paths $Q_1$ and $Q_2$.
		By Lemma~\ref{lemma:diameter_many_alter}, either $G$ contains a path of length $d+4$ or for each $i\in [2]$ $Q_i$ contains at least four vertices in $V(P_1)\cup V(P_2)\setminus \{s,t\}$.
		Hence, for each $k \in [2]$, we have four outer paths $tQ_ka_k$, $a_kQ_kb_k$, $d_kQ_ks$, $c_kQ_kd_k$ in $G$.
		Note that all these eight paths are internally disjoint and the eight vertices $a_k,b_k,c_k,d_k$ are pairwise distinct.
		
		We first show how to construct a path of length $5$ ending in $v_{i,3}$ for some $i\in[2]$, using the paths $d_kQ_ks$ and $c_kQ_kd_k$.
		If $d_k=v_{i,j}$ for $k,i\in [2]$ and $j \ge 2$, then take the path $v_{i,1}P_iv_{i,j}Q_ksP_{3-i}v_{3-i,3}$.
		This is a path of length at least $(j-1)+1+3\ge j+3\ge 5$ ending in $v_{3-i,3}$ avoiding all vertices of form $v_{x,y}$ with $y>3$ as required.
		Hence, it is left to consider the case when for each $k\in [2]$ $d_k=v_{i,j}$ where $j\le 2$.
		Without loss of generality, we assume that for each $k\in [2]$ $d_k=v_{k,1}$.
		
		We now require the path $c_1Q_1d_1$ for the construction.
		Let $c_1=v_{i',j'}$ for some $i'\in [2]$ and $j'\in [p_{i'}]$.
		Suppose first that $i'=1$.
		If $j'\ge 4$ then take the path $v_{1,2}P_1v_{1,j'}Q_1v_{1,1}Q_1sP_{2}t$.
		This is a path of length at least $(j'-2)+2+d\ge d+j'\ge d+4$ in $G$.
		If $j' \le 3$, then consider the path $v_{1,j'}Q_1v_{1,1}Q_1sP_{2}v_{2,3}$.
		This is a path of length at least five ending in $v_{2,3}$ that avoids all vertices $v_{x,y}$ with $y>3$.
		
		It is left to consider $c_1=v_{2,j'}$.
		If $j'\ge 4$, then $c_1Q_1d_1$ is an outer $(v_{2,j'},v_{1,j})$-path with $j\le j'-3$ and we are done by Claim~\ref{claim:diameter_no_long_alter}.
		Hence, $j'\in \{2,3\}$.
		Then consider $sP_{2}v_{2,j'}Q_1v_{1,1}P_1v_{1,3}$.
		This path has length at least $j'+1+2\ge 5$, ends in $v_{1,3}$ and avoids all vertices required.
		The part of the proof for paths ending in $v_{i,3}$ is complete.
		Note that all constructed paths can contain only vertices $v_{i,j}$ with $j\le 3$, the vertex $s$ and inner vertices of the paths $c_kQ_ks$.

		The proof for paths starting in $v_{i,p_i-2}$ is symmetrical.
		The symmetry lies in that when we take the transpose of $G$, the roles of $s$ and $t$ exchange, and paths starting in $v_{i,p_i-2}$ become paths ending in $v_{i,3}$.
		Role of the paths $c_kQ_ks$ take the transpose of the paths $tQ_kb_k$.
		Hence, we can consider the graph $G^T$, exchange $s$ and $t$ and renumerate each vertex $v_{i,j} := v'_{i,p_i-j+1}$ for $i \in [2]$ and $j \in [p_i]$. By applying the previous proof to $G^T$, we obtain either a path of length $d+4$ in $G^T$ or a path of length at least $5$ ending in $v'_{i,3}$.
		Thus in $G$ we obtain either a path of length at least $d+4$ or a path of length at least $5$ ending in $v_{i,p_i-2}$ and avoiding all required vertices.		
		Note that the constructed paths in this part can contain only vertices $v_{i,j}$ with $j\ge p_{i}-2$, the vertex $t$ and inner vertices of the paths $tQ_kb_k$.
		That is, the two paths of length five from the different parts of the proof do not share any common vertex.
	\end{proof}

	To conclude the proof of Lemma~\ref{lemma:path_above_diameter}, we combine results of Lemma~\ref{lemma:diameter_many_alter} and Claim~\ref{claim:diameter_paths} together.
	By Claim~\ref{claim:diameter_paths}, if $G$ doesn't contain a path of length $d+4$, there exists a path of length five ending in $v_{i,3}$ avoiding all $v_{x,y}$ with $y\ge 4$ and a path of length five starting in $v_{i',p_{i'}-2}$ avoiding all $v_{x,y}$ with $y\le p_{i'}-3$.
	Denote these paths by $R$ and $R'$ respectively.
	If $i=i'$, then we take the concatenation $R\circ v_{i,3}P_i v_{i,p_i-2}\circ R'$.
	This is indeed a path without self-intersections as $R$ and $R'$ avoid all vertices of $P_i$ between $v_{i,3}$ and $v_{i,p_i-2}$ and are disjoint.
	The obtained path is of length $5+(p_i-2-3)+5\ge p_i+5\ge d+4$.
	
	If $i'=3-i$, then we require an outer $(v_{i,y},v_{i',y'})$-path $T$ with $y\ge 3$ and $y'\le p_{i'}-2$ for the concatenation $R\circ v_{i,3}P_iv_{i,y}Tv_{i',y'}P_{i'}v_{i',p_{i'}-2}\circ R'$.
	This path can only share vertices $v_{i,3}$ and $v_{i',p_{i'}-2}$ with $R$ and $R'$.
	Thus, we want $T$ to avoid vertices $v_{i,1},v_{i,2},v_{i',p_{i'}},v_{i',p_{i'-1}}$ and all vertices in $V(R)\cup V(R')\setminus \{v_{i,3},v_{i',p_{i'}-2}\}$.
	These sum up to a total of $14$ vertices that should be avoided.
	Since there are two internally disjoint $(t,s)$-paths in $G$, by Lemma~\ref{lemma:diameter_many_alter} there are at least $16$ outer paths in $G$ going from an inner vertex of $P_i$ to an inner vertex of $P_{i'}$.
	As each vertex to avoid lies on at most one paths among these $16$, at least two paths are suitable candidates for $T$.
	Take any of these candidates and denote it by $T$.
	
	To estimate the length of $T$, consider the path $sP_iv_{i,y} T v_{i',y'}P_{i'}t$.
	The length of this concatenation equals $y+\ell+(p_{i'}-y'+1)=(p_{i'}+1)-(y'-y)+\ell$, where $\ell$ is the length of $T$.
	Since the concatention is an $(s,t)$-path we have $(p_{i'}+1)-(y'-y)+\ell\ge d$, so the length of $T$ is at least $(y'-y)-(p_{i'}+1-d)$.
	The length of the path $v_{i,3}P_iv_{i,y}Tv_{i',y'}P_{i'}v_{i',p_{i'}-2}$ is at least $(y-3)+(y'-y)-(p_{i'}+1-d)+(p_{i'}-2-y')=d-6$.
	It follows that $R\circ v_{i,3}P_iv_{i,y}Tv_{i',y'}P_{i'}v_{i',p_{i'}-2}\circ R'$ is of length at least $d+4$ as required.
\end{proof}

	We note that the proof of Lemma~\ref{lemma:path_above_diameter} is constructive and can be turned into a polynomial-time algorithm finding a path of length $\diam(G)+4$ in a graph with diameter at least $\diamind{17}$.
	For turning the proof into an algorithm, we require a procedure to find two internally disjoint $(s,t)$-paths or two internally disjoint $(t,s)$-paths in $G$.
	This can be done in polynomial time using any polynomial-time maximum flow algorithm.
	For $\diam(G)<\diamind{17}$, we use the color coding algorithm for \textsc{Longest Path} to find a path of constant length $\diam(G)+4$.
	The running time of this algorithm is linear in $n$.
	We obtain that \probLPDiam~with $k\le 4$ can be solved in polynomial time on 2-strongly-connected digraphs.

\subsection{NP-Hardness}\label{sec:hard-daim}
We proceed to the second and negative result of Theorem~\ref{thm:dichotomy_diameter}.
The general idea  of the proof is similar to that of  Proposition~\ref{prop:diam-hard}.
We aim to take a path-like gadget graph, then take a sufficiently large \textsc{Hamiltonian Path} instance and connect it to the middle of the gadget.
However, while in the general case it suffices to simply take a path graph (Proposition~\ref{prop:diam-hard}), the $2$-strongly-connected case is much more technically involved.
First, we need a family of gadget graphs that are 2-strongly-connected, have arbitrarily large diameter, but each graph in the family does not have a path longer than diameter plus four.
This, in fact, is exactly a counterexample to the positive part of Theorem~\ref{thm:dichotomy_diameter},
as the existence of such family of graphs proves that there cannot always be a path of length diameter plus four in a sufficiently large $2$-connected directed graph.
Additionally, for the reduction we need that graphs in this family behave like paths,
specifically that the length of the longest path that ends in the ``middle'' of the gadget is roughly half of the diameter.
Constructing this graph family is a main technical challenge of the theorem.
After constructing the gadget graph family the proof is reasonably simple, as we take a 2-connected \textsc{Hamiltonian Path} instance, and connect it to the ``middle'' of a sufficiently large gadget graph.
The connection is done by a simple 4-vertex connector gadget that ensures that the resulting graph is 2-strongly-connected, but only allows for paths that alternate at most once between the gadget graph and the starting instance.
The whole reduction is visualized in Figure~\ref{fig:diameter_reduction}.

We start the proof with a construction of a family of directed graphs ${G_1,G_2,\ldots,G_\ell,\ldots}$ that are 2-strongly-connected, while the longest path in  ${G_\ell}$ has length $\diam({G_\ell})+4$ starting from some $\ell$.
We shall afterwards use this graph for a many-to-one reduction from \textsc{Hamiltonian Path} to \probLPDiam.

\medskip
\noindent \textbf{Construction of the graph ${G_\ell}$.}
We construct $G_\ell$ for arbitrary $\ell\ge 1$.
We require three types of gadgets for the construction.
The first two are the source and the sink gadgets, presented in Figure~\ref{fig:diameter}a and Figure~\ref{fig:diameter}c.
They contain vertices $s$ and $t$ respectively.
Note that the sink gadget is isomorphic to the transpose of the source gadget, and isomorphism is clear from enumeration of the vertices of both gadgets.
The third type of gadget, namely the hat gadget, is presented in Figure~\ref{fig:diameter}b, and consists of ten vertices.
To construct the graph $G_\ell$ for $\ell\ge 1$, we take one source gadget, $2\ell-1$ hat gadgets, and one sink gadget.
Then we identify the vertices $s_8$ and $s_{14}$ with the vertices $h_1$ and $h_4$ of the first hat gadget respectively.
Further, for each $i \in [2\ell-2]$, we identify the vertices $h_3$ and $h_{10}$ of the $i^\text{th}$ hat gadget respectively with the vertices $h_4$ and $h_1$ of the $(i+1)^\text{th}$ hat gadget.
Finally, we identify the vertices $h_3$ and $h_{10}$ of the last, $(2\ell-1)^\text{th}$ gadget, with the vertices $t_8$ and $t_{14}$ of the sink gadget.
Thus, all gadgets are arranged into a chain and form a weakly connected graph.
This is the graph $G_\ell$, and later in this section we prove that it is 2-strongly-connected.

\begin{figure}	
	\centering
	\begin{tikzpicture}[scale=0.3,x={(0cm,-1cm)},y={(-1cm,0cm)}]
	\tikzstyle{vertex}=[circle, fill=black,draw,inner sep=0,minimum size=0.2cm]
	\usetikzlibrary{decorations.markings}
	\usetikzlibrary{arrows.meta}
	\tikzstyle{medge}=[thick,decoration={
		markings,
		mark=at position 0.6 with {\arrow{Latex}}},postaction={decorate}]
	\tikzstyle{redpath}=[]
	\tikzstyle{orangepath}=[]
	\tikzstyle{purpleedges}=[]
	\usetikzlibrary{math} 
	\usetikzlibrary{patterns}

	\def\n{25};
	\def\nm{24};
	\def\gap{3}

	\newcommand{\drawPatternUB}[2]
	{
		\pgfmathtruncatemacro{\r}{#1};
		\pgfmathtruncatemacro{\o}{#2};
		
		\pgfmathtruncatemacro{\nr}{\r-1};
		\pgfmathtruncatemacro{\no}{\o-1};
		
		\draw[redpath,medge] (v-\nr) to (v-\r);
		\draw[orangepath,medge] (v-\no) to (v-\o);
		
		\pgfmathtruncatemacro{\r}{\nr};
		\pgfmathtruncatemacro{\o}{\no};
		
		\pgfmathtruncatemacro{\nr}{\r + 8};
		\pgfmathtruncatemacro{\no}{\o + 12};
		
		\draw[redpath,medge] (v-\nr) to[ bend right] (v-\r);
		\draw[orangepath,medge] (v-\no) to[ bend right] (v-\o);
		
		\pgfmathtruncatemacro{\r}{\nr};
		\pgfmathtruncatemacro{\o}{\no};
		
		\pgfmathtruncatemacro{\a}{\r - 2};
		\pgfmathtruncatemacro{\b}{\a - 2};
		\pgfmathtruncatemacro{\c}{\b - 2};
		\pgfmathtruncatemacro{\d}{\c + 1};
		
		\draw[purpleedges,medge] (v-\a) to [ bend right] (v-\b);
		\draw[purpleedges,medge] (v-\b) to [ bend right] (v-\c);
		\draw[purpleedges,medge] (v-\c) to (v-\d);
		\draw[purpleedges,medge] (v-\d) to (v-\a);
	};
	
	\newcommand{\drawPatternBU}[2]
	{
		\pgfmathtruncatemacro{\r}{#1};
		\pgfmathtruncatemacro{\o}{#2};
		
		\pgfmathtruncatemacro{\nr}{\r-3};
		\pgfmathtruncatemacro{\no}{\o-3};
		
		\draw[redpath,medge] (v-\nr) to (v-\r);
		\draw[orangepath,medge] (v-\no) to (v-\o);
		
		\pgfmathtruncatemacro{\r}{\nr};
		\pgfmathtruncatemacro{\o}{\no};
		
		\pgfmathtruncatemacro{\nr}{\r + 12};
		\pgfmathtruncatemacro{\no}{\o + 8};
		
		\draw[redpath,medge] (v-\nr) to[ bend left] (v-\r);
		\draw[orangepath,medge] (v-\no) to[ bend left] (v-\o);
		
		\pgfmathtruncatemacro{\r}{\nr};
		\pgfmathtruncatemacro{\o}{\no};
		
		\pgfmathtruncatemacro{\a}{\o - 2};
		\pgfmathtruncatemacro{\b}{\a - 2};
		\pgfmathtruncatemacro{\c}{\b - 2};
		\pgfmathtruncatemacro{\d}{\c + 3};
		
		\draw[purpleedges,medge] (v-\a) to [ bend left] (v-\b);
		\draw[purpleedges,medge] (v-\b) to [ bend left] (v-\c);
		\draw[purpleedges,medge] (v-\c) to (v-\d);
		\draw[purpleedges,medge] (v-\d) to (v-\a);
	};

	\def\tikzsegment#1#2#3{ 
		\path let
		\p1=($(#3)-(#2)$),
		\n1={veclen(\p1)*1.25}
		in (#2) -- (#3) 
		node[minimum width=\n1, 
		inner sep=0pt, 
		pos=0.5,sloped,rectangle,rounded corners,
		#1] 
		(line){};
	}
	
	\newcommand{\drawCut}[2] {
		\tikzsegment{very thick,dash dot,draw,minimum height=1cm}{v-#1}{v-#2};
	}

	\begin{scope}[shift={(17,0)}]
		\foreach \x in {6,7,8,9,10,11,12} {
			
			\pgfmathtruncatemacro{\label}{\x*2+1}
			\pgfmathtruncatemacro{\l}{\x-2}
			\node [vertex,label=right:{$h_{\l}$}] (v-\label) at (-4+\gap*\x,-1.5) {};
		}
		\foreach \x in {8,9,10} {
			\pgfmathtruncatemacro{\label}{\x*2}
			\pgfmathtruncatemacro{\l}{\x-7}
			\node [vertex,label=left:{$h_{\l}$}] (v-\label) at (-4+\gap*\x,1.5) {};
			
		}
	
	\node at ($(v-13)+(1,6)$) {b)};
		
		\foreach \x in {8,9} {
			\pgfmathtruncatemacro{\labela}{\x*2}
			\pgfmathtruncatemacro{\labelb}{\x*2+2}
			\draw[medge] (v-\labela) -- (v-\labelb);
			
		}
		\foreach \x in {6,...,11} {
			\pgfmathtruncatemacro{\labela}{\x*2+1}
			\pgfmathtruncatemacro{\labelb}{\x*2+3}
			\draw[medge] (v-\labela) -- (v-\labelb);
			
		}
		\pgfmathtruncatemacro{\r}{16};
		\pgfmathtruncatemacro{\o}{14};
		
		\pgfmathtruncatemacro{\nr}{\r-1};
		\pgfmathtruncatemacro{\no}{\o-1};
		
		\draw[redpath,medge] (v-\nr) to (v-\r);
		
		\pgfmathtruncatemacro{\r}{\nr};
		\pgfmathtruncatemacro{\o}{\no};
		
		\pgfmathtruncatemacro{\nr}{\r + 8};
		\pgfmathtruncatemacro{\no}{\o + 12};
		
		\draw[redpath,medge] (v-\nr) to[ bend right] (v-\r);
		\draw[orangepath,medge] (v-\no) to[ bend right] (v-\o);
		
		\pgfmathtruncatemacro{\r}{\nr};
		\pgfmathtruncatemacro{\o}{\no};
		
		\pgfmathtruncatemacro{\nr}{\r -3};
		\draw[redpath,medge] (v-\nr) to (v-\r);
		
		\pgfmathtruncatemacro{\a}{\r - 2};
		\pgfmathtruncatemacro{\b}{\a - 2};
		\pgfmathtruncatemacro{\c}{\b - 2};
		\pgfmathtruncatemacro{\d}{\c + 1};
		
		\draw[purpleedges,medge] (v-\a) to [ bend right] (v-\b);
		\draw[purpleedges,medge] (v-\b) to [ bend right] (v-\c);
		\draw[purpleedges,medge] (v-\c) to (v-\d);
		\draw[purpleedges,medge] (v-\d) to (v-\a);
	\end{scope}
	
	\begin{scope}[shift={(10,0)}]
		\node at (-4, 4) {a)};
		\node [vertex,label=above:{$s$}] (s) at (-4,0) {};
		\foreach \x in {1,2,3,4,5,6} {
			
			\pgfmathtruncatemacro{\label}{\x*2+1}
			\pgfmathtruncatemacro{\l}{\x+8}
			\node [vertex,label=right:{$s_{\l}$}] (v-\label) at (-4+\gap*\x,-1.5) {};
		}
		\foreach \x in {1,2,3,4,5,6,7,8} {
			\pgfmathtruncatemacro{\label}{\x*2}
			\pgfmathtruncatemacro{\l}{\x}
			\node [vertex,label=left:{$s_{\l}$}] (v-\label) at (-4+\gap*\x,1.5) {};
			
		}
		
		\foreach \x in {1,...,7} {
			\pgfmathtruncatemacro{\labela}{\x*2}
			\pgfmathtruncatemacro{\labelb}{\x*2+2}
			\draw[medge] (v-\labela) -- (v-\labelb);
			
		}
		\foreach \x in {1,...,5} {
			\pgfmathtruncatemacro{\labela}{\x*2+1}
			\pgfmathtruncatemacro{\labelb}{\x*2+3}
			\draw[medge] (v-\labela) -- (v-\labelb);
			
		}
		\draw[medge] (s) -- (v-2);
		\draw[medge] (s) -- (v-3);
	\draw[redpath,medge] (v-16) to [ bend left] (v-12);
	\draw[redpath,medge] (v-12) to [ bend left] (v-6);
	\draw[redpath,medge] (v-6) to (v-5);
	\draw[redpath,medge] (v-5) to[out=45,in=45] (s);
	
	\draw[orangepath,medge] (v-14) to [ bend left] (v-10);
	\draw[orangepath,medge] (v-10) to [ bend left] (v-8);
	\draw[orangepath,medge] (v-8) to (v-11);
	\draw[orangepath,medge] (v-11) to[ bend right] (v-9);
	\draw[orangepath,medge] (v-9) to[ bend right] (v-7);
	\draw[orangepath,medge] (v-7) to (v-4);
	\draw[orangepath,medge] (v-4) to[out=135,in=135] (s);
	
	\draw[purpleedges,medge] (v-3) to[ bend right] (v-4);
	\draw[purpleedges,medge] (v-4) to[ bend right] (v-3);
	\draw[purpleedges,medge] (v-2) to[ bend right] (v-5);
	\draw[purpleedges,medge] (v-5) to[ bend right] (v-2);
	\draw[medge] (v-13) -- (v-14);
	\end{scope}
	
	\begin{scope}[shift={(0,0)}]
		\node [vertex,label=below:{$t$}] (t) at (-4+\gap*\n+\gap,0) {};
		\foreach \x in {20,...,25} {
			
			\pgfmathtruncatemacro{\label}{\x*2+1}
			\pgfmathtruncatemacro{\l}{26-\x+8}
			\node [vertex,label=right:{$t_{\l}$}] (v-\label) at (-4+\gap*\x,-1.5) {};
		}

		\foreach \x in {18,...,25} {
			\pgfmathtruncatemacro{\label}{\x*2}
			\pgfmathtruncatemacro{\l}{26-\x}
			\node [vertex,label=left:{$t_{\l}$}] (v-\label) at (-4+\gap*\x,1.5) {};
			
		}
	\node at ($(v-36)+(1,3)$) {c)};		
				\foreach \x in {18,...,24} {
			\pgfmathtruncatemacro{\labela}{\x*2}
			\pgfmathtruncatemacro{\labelb}{\x*2+2}
			\draw[medge] (v-\labela) -- (v-\labelb);
			
		}
		\foreach \x in {20,...,24} {
			\pgfmathtruncatemacro{\labela}{\x*2+1}
			\pgfmathtruncatemacro{\labelb}{\x*2+3}
			\draw[medge] (v-\labela) -- (v-\labelb);
			
		}
		\draw[medge] (v-50) -- (t);
		\draw[medge] (v-51) -- (t);
	\pgfmathtruncatemacro{\r}{(\n-7)*2}
	\pgfmathtruncatemacro{\nr}{\r+3}

	\pgfmathtruncatemacro{\nr}{\r+4}
	
	\draw[redpath,medge] (v-\nr) to [ bend left] (v-\r);
	\pgfmathtruncatemacro{\r}{\nr}
	\pgfmathtruncatemacro{\nr}{\r+6}
	\draw[redpath,medge] (v-\nr) to [ bend left] (v-\r);
	\pgfmathtruncatemacro{\r}{\nr}
	\pgfmathtruncatemacro{\nr}{\r+3}
	\draw[redpath,medge] (v-\nr) to (v-\r);
	\draw[redpath,medge] (t) to[out=315,in=315] (v-\nr);
	
	\pgfmathtruncatemacro{\o}{(\n-6)*2}
	\pgfmathtruncatemacro{\no}{\o+3}
	\draw[orangepath,medge] (v-\o) to  (v-\no);
	\pgfmathtruncatemacro{\no}{\o+4}
	\draw[orangepath,medge] (v-\no) to [ bend left] (v-\o);
	\pgfmathtruncatemacro{\o}{\no}
	\pgfmathtruncatemacro{\no}{\o+2}
	\draw[orangepath,medge] (v-\no) to [ bend left] (v-\o);
	\pgfmathtruncatemacro{\o}{\no}
	\pgfmathtruncatemacro{\no}{\o-1}
	\draw[orangepath,medge] (v-\no) to (v-\o);
	\pgfmathtruncatemacro{\o}{\no}
	\pgfmathtruncatemacro{\no}{\o+2}
	\draw[orangepath,medge] (v-\no) to [ bend right] (v-\o);
	\pgfmathtruncatemacro{\o}{\no}
	\pgfmathtruncatemacro{\no}{\o+2}
	\draw[orangepath,medge] (v-\no) to [ bend right] (v-\o);
	\pgfmathtruncatemacro{\o}{\no}
	\pgfmathtruncatemacro{\no}{\o+1}
	\draw[orangepath,medge] (v-\no) to (v-\o);
	\draw[orangepath,medge] (t) to[out=225,in=225] (v-\no);
	
	\pgfmathtruncatemacro{\o}{\no}
	\pgfmathtruncatemacro{\no}{\o+3}
	\draw[purpleedges,medge] (v-\no) to[ bend right] (v-\o);
	\draw[purpleedges,medge] (v-\o) to[ bend right] (v-\no);
	\pgfmathtruncatemacro{\r}{\nr}
	\pgfmathtruncatemacro{\nr}{\r+1}
	\draw[purpleedges,medge] (v-\nr) to[ bend right] (v-\r);
	\draw[purpleedges,medge] (v-\r) to[ bend right] (v-\nr);
	\end{scope}
\end{tikzpicture}
	\hspace{150pt}
	\begin{tikzpicture}[scale=0.5,x={(0cm,-1cm)},y={(-1cm,0cm)}]
	\tikzstyle{vertex}=[circle, fill=black,draw,inner sep=0,minimum size=0.2cm]
	\usetikzlibrary{decorations.markings}
	\usetikzlibrary{arrows.meta}
	\tikzstyle{medge}=[thick,decoration={
		markings,
		mark=at position 0.6 with {\arrow{Latex}}},postaction={decorate}]
	\tikzstyle{medge5}=[thick,decoration={
		markings,
		mark=at position 0.5 with {\arrow{Latex}}},postaction={decorate}]
	\tikzstyle{medge7}=[thick,decoration={
		markings,
		mark=at position 0.7 with {\arrow{Latex}}},postaction={decorate}]
	\tikzstyle{medge8}=[thick,decoration={
		markings,
		mark=at position 0.8 with {\arrow{Latex}}},postaction={decorate}]
	\tikzstyle{redpath}=[red]
	\tikzstyle{orangepath}=[orange]
	\tikzstyle{purpleedges}=[blue]
	\usetikzlibrary{math} 
	\usetikzlibrary{patterns}
	\usetikzlibrary{quotes}
	
	\node [vertex,label=above:{$s$}] (s) at (-4,0) {};
	\node [inner sep=0,minimum size=0cm] (pp1) at (-4,2) {$P_1$};
	\node [inner sep=0,minimum size=0cm] (pp2) at (-4,-2) {$P_2$};	
	
	\def\n{25};
	\def\nm{24};
	\def\gap{1.5}

	\foreach \x in {1,...,\n} {
		\pgfmathtruncatemacro{\label}{\x*2}
		\node [vertex] (v-\label) at (-4+\gap*\x,1.5) {};
		\pgfmathtruncatemacro{\label}{\x*2+1}
		\node [vertex] (v-\label) at (-4+\gap*\x,-1.5) {};
	}
	
	\draw[medge] (s) -- (v-2) node[midway,inner sep=0,minimum size=0cm] (p1) {};
	\draw[medge] (s) -- (v-3) node[midway,inner sep=0,minimum size=0cm] (p2) {};

	
	\foreach \x in {1,...,\nm} {
		\pgfmathtruncatemacro{\labela}{\x*2}
		\pgfmathtruncatemacro{\labelb}{\x*2+2}
		\draw[medge] (v-\labela) -- (v-\labelb);
		
		\pgfmathtruncatemacro{\labela}{\x*2+1}
		\pgfmathtruncatemacro{\labelb}{\x*2+3}
		\draw[medge] (v-\labela) -- (v-\labelb);
	}

	\draw[redpath,medge] (v-16) to [bend left] (v-12);
	\draw[redpath,medge] (v-12) to [bend left] (v-6);
	\draw[redpath,medge8] (v-6) to (v-5);
	\draw[redpath,medge] (v-5) to[out=45,in=45] (s);
	
	\draw[orangepath,medge] (v-14) to [bend left] (v-10);
	\draw[orangepath,medge] (v-10) to [bend left] (v-8);
	\draw[orangepath,medge] (v-8) to (v-11);
	\draw[orangepath,medge] (v-11) to[ bend right] (v-9);
	\draw[orangepath,medge] (v-9) to[ bend right] (v-7);
	\draw[orangepath,medge8] (v-7) to (v-4);
	\draw[orangepath,medge] (v-4) to[out=135,in=135] (s);
	
	\draw[purpleedges,medge7] (v-3) to[ bend right] (v-4);
	\draw[purpleedges,medge7] (v-4) to[ bend right] (v-3);
	\draw[purpleedges,medge5] (v-2) to[ bend right] (v-5);
	\draw[purpleedges,medge5] (v-5) to[ bend right] (v-2);
	
	\newcommand{\drawPatternUB}[2]
	{
		\pgfmathtruncatemacro{\r}{#1};
		\pgfmathtruncatemacro{\o}{#2};
		
		\pgfmathtruncatemacro{\nr}{\r-1};
		\pgfmathtruncatemacro{\no}{\o-1};
		
		\draw[redpath,medge] (v-\nr) to (v-\r);
		\draw[orangepath,medge] (v-\no) to (v-\o);
		
		\pgfmathtruncatemacro{\r}{\nr};
		\pgfmathtruncatemacro{\o}{\no};
		
		\pgfmathtruncatemacro{\nr}{\r + 8};
		\pgfmathtruncatemacro{\no}{\o + 12};
		
		\draw[redpath,medge] (v-\nr) to[ bend right] (v-\r);
		\draw[orangepath,medge] (v-\no) to[ bend right] (v-\o);
		
		\pgfmathtruncatemacro{\r}{\nr};
		\pgfmathtruncatemacro{\o}{\no};
		
		\pgfmathtruncatemacro{\a}{\r - 2};
		\pgfmathtruncatemacro{\b}{\a - 2};
		\pgfmathtruncatemacro{\c}{\b - 2};
		\pgfmathtruncatemacro{\d}{\c + 1};
		
		\draw[purpleedges,medge] (v-\a) to [ bend right] (v-\b);
		\draw[purpleedges,medge] (v-\b) to [ bend right] (v-\c);
		\draw[purpleedges,medge] (v-\c) to (v-\d);
		\draw[purpleedges,medge] (v-\d) to (v-\a);
	};
	
	\newcommand{\drawPatternBU}[2]
	{
		\pgfmathtruncatemacro{\r}{#1};
		\pgfmathtruncatemacro{\o}{#2};
		
		\pgfmathtruncatemacro{\nr}{\r-3};
		\pgfmathtruncatemacro{\no}{\o-3};
		
		\draw[redpath,medge] (v-\nr) to (v-\r);
		\draw[orangepath,medge] (v-\no) to (v-\o);
		
		\pgfmathtruncatemacro{\r}{\nr};
		\pgfmathtruncatemacro{\o}{\no};
		
		\pgfmathtruncatemacro{\nr}{\r + 12};
		\pgfmathtruncatemacro{\no}{\o + 8};
		
		\draw[redpath,medge] (v-\nr) to[bend left] (v-\r);
		\draw[orangepath,medge] (v-\no) to[bend left] (v-\o);
		
		\pgfmathtruncatemacro{\r}{\nr};
		\pgfmathtruncatemacro{\o}{\no};
		
		\pgfmathtruncatemacro{\a}{\o - 2};
		\pgfmathtruncatemacro{\b}{\a - 2};
		\pgfmathtruncatemacro{\c}{\b - 2};
		\pgfmathtruncatemacro{\d}{\c + 3};
		
		\draw[purpleedges,medge] (v-\a) to [bend left] (v-\b);
		\draw[purpleedges,medge] (v-\b) to [bend left] (v-\c);
		\draw[purpleedges,medge] (v-\c) to (v-\d);
		\draw[purpleedges,medge] (v-\d) to (v-\a);
	};

	\drawPatternUB{16}{14}
	
	\drawPatternBU{23}{25}
	
	\drawPatternUB{32}{30}
	
	\node [vertex,label=below:{$t$}] (t) at (-4+\gap*\n+\gap,0) {};
	\pgfmathtruncatemacro{\label}{\n*2}
	\draw[medge] (v-\label) to (t);
	\pgfmathtruncatemacro{\label}{\n*2+1};
	\draw[medge] (v-\label) to (t);
	
	\pgfmathtruncatemacro{\r}{(\n-7)*2}
	\pgfmathtruncatemacro{\nr}{\r+3}
	\draw[redpath,medge] (v-\r) to (v-\nr);
	\pgfmathtruncatemacro{\nr}{\r+4}

	\draw[redpath,medge] (v-\nr) to [bend left] (v-\r);
	\pgfmathtruncatemacro{\r}{\nr}
	\pgfmathtruncatemacro{\nr}{\r+6}
	\draw[redpath,medge] (v-\nr) to [bend left] (v-\r);
	\pgfmathtruncatemacro{\r}{\nr}
	\pgfmathtruncatemacro{\nr}{\r+3}
	\draw[redpath,medge8] (v-\nr) to (v-\r);
	\draw[redpath,medge] (t) to[out=315,in=315,"$Q_2$" right] (v-\nr);
	
	\pgfmathtruncatemacro{\o}{(\n-6)*2}
	\pgfmathtruncatemacro{\no}{\o+3}
	\draw[orangepath,medge] (v-\o) to  (v-\no);
	\pgfmathtruncatemacro{\no}{\o+4}
	\draw[orangepath,medge] (v-\no) to [bend left] (v-\o);
	\pgfmathtruncatemacro{\o}{\no}
	\pgfmathtruncatemacro{\no}{\o+2}
	\draw[orangepath,medge] (v-\no) to [bend left] (v-\o);
	\pgfmathtruncatemacro{\o}{\no}
	\pgfmathtruncatemacro{\no}{\o-1}
	\draw[orangepath,medge] (v-\no) to (v-\o);
	\pgfmathtruncatemacro{\o}{\no}
	\pgfmathtruncatemacro{\no}{\o+2}
	\draw[orangepath,medge] (v-\no) to [bend left] (v-\o);
	\pgfmathtruncatemacro{\o}{\no}
	\pgfmathtruncatemacro{\no}{\o+2}
	\draw[orangepath,medge] (v-\no) to [bend left] (v-\o);
	\pgfmathtruncatemacro{\o}{\no}
	\pgfmathtruncatemacro{\no}{\o+1}
	\draw[orangepath,medge8] (v-\no) to (v-\o);
	\draw[orangepath,medge] (t) to[ out=225,in=225,"$Q_1$" left] (v-\no);
	
	\pgfmathtruncatemacro{\o}{\no}
	\pgfmathtruncatemacro{\no}{\o+3}
	\draw[purpleedges,medge5] (v-\no) to[ bend right] (v-\o);
	\draw[purpleedges,medge5] (v-\o) to[ bend right] (v-\no);
	\pgfmathtruncatemacro{\r}{\nr}
	\pgfmathtruncatemacro{\nr}{\r+1}
	\draw[purpleedges,medge7] (v-\nr) to[ bend right] (v-\r);
	\draw[purpleedges,medge7] (v-\r) to[ bend right] (v-\nr);
	
	\def\tikzsegment#1#2#3{ 
		\path let
		\p1=($(#3)-(#2)$),
		\n1={veclen(\p1)*0.7}
		in (#2) -- (#3) 
		node[minimum width=\n1, 
		inner sep=0pt, 
		pos=0.5,sloped,rectangle,rounded corners,
		#1] 
		(line){};
	}
	
	\newcommand{\drawCut}[2] {
		\tikzsegment{very thick,dash dot,draw,minimum height=0.31cm}{v-#1}{v-#2};
	}
	
	\foreach \i in {0,1} {
		\pgfmathtruncatemacro{\a}{13+\i * 16};
		\pgfmathtruncatemacro{\b}{16 + \i * 16};
		\drawCut{\a}{\b};
		\pgfmathtruncatemacro{\a}{25+\i * 16};
		\pgfmathtruncatemacro{\b}{20 + \i * 16};
		\drawCut{\a}{\b};
	}

	\draw (pp1) -- (p1);
\draw (pp2) -- (p2);	

\end{tikzpicture}
	
	\caption{\label{fig:diameter}Three types of gadgets used for the construction of $G_\ell$, and the graph $G_2$.
		The orange arcs and the red arcs in $G_2$ are the arcs of $Q_1$ and $Q_2$ respectively.
		$P_1$ and $P_2$ are the $(s,t)$-paths on the left and the right respectively. Blue arcs are the arcs that do not belong to neither of $P_1, P_2$ and $Q_1, Q_2$. Dashed rounded rectangles show the boundaries of the gadgets used in the construction.}
\end{figure}

\medskip
\noindent \textbf{Paths $P_1,P_2$ and $Q_1,Q_2$.}
The four paths we describe are clearly seen in Figure~\ref{fig:diameter}d, where the graph $G_2$ is presented.
By construction, there are two internally disjoint $(s,t)$-paths in $G_\ell$.
The path $P_1$ starts in $s$, then goes through vertices $s_1, s_2$ up to $s_8$ of the source gadget.
Then the vertices $h_2$ and $h_3$ of the first hat gadget follow, then the vertices $h_5, h_6$ up to $h_{10}$ of the second hat gadget follow on $P_1$, and so on.
The path $P_1$ ends with vertices $t_8, t_7$ down to $t_1$ and, finally, the vertex $t$.
The path $P_2$ starts in $s$, follows $s_9$ through $s_{14}$, and ends with $t_{14}$ through $t_9$ in $t$.
The paths $Q_1$ and $Q_2$ are two $(t,s)$-paths in $G_\ell$.
Their construction is clear from Figure~\ref{fig:diameter}d, where the arcs of each of them receive specific color.

Note that we used $2\ell+1$ gadgets to construct $G_\ell$.
Any of these gadgets is an induced subgraph of $G_\ell$.
Two gadgets can share either zero or two vertices, however, they cannot share any arc of $G_\ell$.
Moreover, each arc of $G_\ell$ belongs to exactly one gadget.
From now on, by \textit{gadgets} we refer to these $2\ell+1$ induced subgraphs of $G_\ell$.

\medskip \noindent \textbf{Separating and containing gadgets.}
We say that a gadget \emph{separates} two distinct vertices $u$ and $v$ in $G_\ell$ if there is no $(u,v)$-path and $(v,u)$-path in $G_\ell-X$, where $X$ is the arc set of the gadget.
We say that a gadget \emph{strictly contains} vertex $v \in V(G_\ell)$ if $v$ belongs to the vertex set of this gadget and does not belong to the vertex set of any other gadget in $G_\ell$.
Thus, there exist vertices that are not strictly contained in any gadget.
We observe some trivial facts about the gadgets.

\begin{observation}
	The following holds:
	\begin{enumerate}
		\item For every vertex $v \in V(G_\ell)$, if a gadget strictly contains $v$, then this gadget separates $v$ and every $u\in V(G_\ell)\setminus\{v\}$.
		\item For every distinct $u, v \in V(G_\ell)$, if $u$ and $v$ are not separated by any gadget in $G_\ell$, then $\{u,v\}$ is the intersection of the vertex sets of two gadgets in $G_\ell$.
		\item For every distinct $u, v \in V(G_\ell)$, if two gadgets both separate $u$ and $v$ in $G_\ell$ and share two vertices $u'$ and $v'$, then there is no path between $u$ and $v$ in $G_\ell-\{u',v'\}$.
	\end{enumerate}
\end{observation}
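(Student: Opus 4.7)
All three parts rest on two structural facts about the chain construction of $G_\ell$: every vertex lies in at most two of the $2\ell+1$ gadgets, and two gadgets share vertices precisely when they are consecutive in the chain, in which case they share exactly two. I would establish both by direct inspection of the source, hat, and sink gadgets and the identifications used to glue them.

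Claim (1) is immediate: if a vertex $v$ is strictly contained in a gadget $H$, every arc of $G_\ell$ incident to $v$ must lie in some gadget containing $v$, hence in $H$. Removing the arcs of $H$ therefore isolates $v$, so no $(u,v)$- or $(v,u)$-path survives for any $u\neq v$. For claim (2), claim (1) lets me reduce to the case where neither $u$ nor $v$ is strictly contained in any gadget, so each is an interface vertex of a consecutive pair, say $u \in V(H_i)\cap V(H_{i+1})$ and $v\in V(H_j)\cap V(H_{j+1})$. If $i\neq j$, WLOG $i<j$, then any intermediate gadget $H_k$ with $i<k\leq j$ separates $u$ from $v$: removing its arcs splits the chain into a left part containing $u$ and a right part containing $v$, contradicting the hypothesis. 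Hence $i=j$, and $\{u,v\} = V(H_i)\cap V(H_{i+1})$.

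The main content is in claim (3). Two gadgets sharing two vertices must be consecutive, so call them $H_i, H_{i+1}$ with $V(H_i)\cap V(H_{i+1})=\{u',v'\}$. The plan is to introduce a position function $\tau$ on $V(G_\ell)$: set $\tau(x)=k$ if $x$ is strictly in $H_k$, and $\tau(x)=k+\tfrac{1}{2}$ if $x$ is the interface between $H_k$ and $H_{k+1}$. Any arc of $H_k$ has both endpoints in $V(H_k)$, so their positions lie in $\{k-\tfrac{1}{2},k,k+\tfrac{1}{2}\}$; hence along any path, consecutive positions differ by at most one. Moreover, no single arc can jump from position $k$ directly to $k+1$, since the strict sets $V_k$ and $V_{k+1}$ share no gadget. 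Consequently, any path whose endpoints have positions $\leq i$ and $\geq i+1$ must at some step land on position exactly $i+\tfrac{1}{2}$, i.e., on $u'$ or $v'$.

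To finish, I must check that simultaneous separation by $H_i$ and $H_{i+1}$ forces $\tau(u)$ and $\tau(v)$ onto opposite sides of $i+\tfrac{1}{2}$. This is a short case distinction using (1): if one of $u,v$ is strictly inside $H_i$ or $H_{i+1}$, the other must lie outside that gadget on the opposite side of the chain (else the corresponding separation fails); otherwise both are interface vertices sitting in the left--right components of the two removals, which pins $\tau(u)\leq i$ and $\tau(v)\geq i+1$ up to swapping $u$ and $v$. The position argument then gives that every $(u,v)$- and $(v,u)$-path in $G_\ell$ visits $\{u',v'\}$, so no such path survives in $G_\ell-\{u',v'\}$. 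The main obstacle I anticipate is making this final case distinction uniform so that the position argument applies in every configuration.
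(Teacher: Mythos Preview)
Your argument is correct. The paper states this as an observation with no proof---the authors call these ``trivial facts about the gadgets''---so there is nothing to compare against; you have simply supplied the details they omitted.

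Your position function $\tau$ is a clean way to formalize the chain structure, and parts (1) and (2) go through exactly as you say. For part (3), the case distinction you are worried about can be bypassed entirely. The key structural fact (which your $\tau$-argument already establishes) is that $\{u',v'\}$ is a vertex separator of $G_\ell$: every arc lies in a single gadget, so no arc joins $V(H_1)\cup\cdots\cup V(H_i)\setminus\{u',v'\}$ to $V(H_{i+1})\cup\cdots\cup V(H_{2\ell+1})\setminus\{u',v'\}$. Now argue the contrapositive. If $u$ and $v$ lie in the same side of this cut, say both in $V(H_1)\cup\cdots\cup V(H_i)$, then deleting the arcs of $H_{i+1}$ leaves the induced subgraph on $V(H_1)\cup\cdots\cup V(H_i)$ untouched and connected, so $H_{i+1}$ does not separate $u$ and $v$; symmetrically for the other side and $H_i$. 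If instead one of $u,v$ equals $u'$ or $v'$, the conclusion is vacuous since that vertex is removed in $G_\ell-\{u',v'\}$. The only remaining possibility is that $u$ and $v$ lie on opposite sides of the cut $\{u',v'\}$, and then no path survives in $G_\ell-\{u',v'\}$. This handles all configurations uniformly and avoids tracking which of $H_i$ or $H_{i+1}$ strictly contains which endpoint.
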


\begin{lemma}
	If a gadget separates $u$ and $v$ in $G_\ell$, and does not strictly contain neither $u$ nor $v$, then the arcs of any $(u,v)$-path in $G_\ell$ induce a single path inside this gadget.
\end{lemma}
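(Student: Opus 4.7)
The plan is to analyse where the path $P$ is allowed to enter and leave the gadget $H$. The first step is the structural observation that $H$ meets the rest of $G_\ell$ only at its at-most-four \emph{boundary vertices}, i.e.\ the vertices shared with neighbouring gadgets in the chain (for a hat gadget these are $h_1,h_4$ on the left and $h_3,h_{10}$ on the right). Because every arc of $G_\ell$ belongs to exactly one gadget, any arc of $P$ outside $H$ lies in a gadget distinct from $H$ and therefore has both endpoints either strictly outside $H$ or on the boundary of $H$. Consequently $G_\ell - A(H)$ decomposes into a \emph{left part} (strictly-left vertices together with the left boundary of $H$), a \emph{right part} (strictly-right vertices together with the right boundary of $H$), and isolated interior vertices of $H$. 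Since $H$ separates $u$ from $v$ and neither is strictly contained in $H$, I may assume $u$ lies in the left part and $v$ lies in the right part.

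Next I decompose $P$ into maximal \emph{$H$-subpaths} (using consecutive arcs of $H$) interleaved with maximal \emph{outside subpaths} (using consecutive arcs not in $H$). At every transition between the two, the shared vertex lies simultaneously in $H$ and in a neighbouring gadget, and is therefore a boundary vertex of $H$. Let $k$ be the number of $H$-subpaths; since $P$ visits each vertex at most once, the $2k$ endpoints of these subpaths are pairwise distinct boundary vertices, so $2k\le 4$. The case $k=0$ is impossible, for then $P$ would connect $u$ to $v$ entirely in $G_\ell-A(H)$, where they lie in different components. Hence $k\in\{1,2\}$, and it remains to rule out $k=2$.

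To exclude $k=2$, write the endpoints of the two $H$-subpaths in the order they appear on $P$ as $(x_1,y_1)$ and $(x_2,y_2)$. The prefix of $P$ from $u$ to $x_1$ uses no arc of $H$ and so stays in the left part, forcing $x_1$ to be a left boundary vertex of $H$; symmetrically $y_2$ must be a right boundary vertex. The outside subpath of $P$ from $y_1$ to $x_2$ also stays in a single part, so $y_1$ and $x_2$ lie on the same side of $H$. But $\{x_1,y_1,x_2,y_2\}$ is a set of four distinct boundary vertices of $H$ and hence exhausts all four boundary vertices; of these $x_1$ is left and $y_2$ is right, so the remaining pair $\{y_1,x_2\}$ consists of one left and one right boundary vertex, contradicting that they lie on the same side. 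Therefore $k=1$, the $H$-arcs of $P$ form one consecutive subpath of $P$, and this subpath is a single directed path inside $H$, as required.

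The main obstacle I anticipate is not the crossing count but the preliminary bookkeeping: one has to justify that every transition of $P$ between $H$ and its complement really happens at a boundary vertex (using that each arc lies in exactly one gadget and that two distinct gadgets intersect in precisely their pair of shared boundary vertices), and to handle the degenerate cases where $u$ or $v$ is itself a boundary vertex of $H$ so that a prefix or suffix outside subpath may be empty; once this is set up, the elimination of $k=2$ is immediate.
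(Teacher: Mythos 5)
Your proof is correct and follows essentially the same route as the paper's: you identify the four boundary vertices of a hat gadget, observe that the $H$-arcs of the path decompose into maximal subpaths with pairwise-distinct endpoints among those four, rule out zero subpaths by separation, and derive a contradiction from the case of two subpaths. The paper phrases the final contradiction as a parity-of-crossings argument (two crossing subpaths is an even number, but an odd number is required), while you phrase it by noting that the outside subpath from $y_1$ to $x_2$ would have to join a left boundary vertex to a right one without using $H$-arcs; these are the same obstruction viewed from two angles. One small thing the paper does explicitly that you leave implicit: it notes up front that the separating gadget must be a hat gadget, because the source and sink gadgets have only two boundary vertices and hence can separate only vertices they strictly contain, which the hypothesis forbids. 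You gesture at this by saying ``at-most-four boundary vertices'' and then arguing as if there are exactly four, so the assumption should be stated rather than silently used; the fix is exactly the paper's one sentence.
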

\begin{proof}
	First, since the gadget is a separating gadget, it should contain at least one arc of any $(u,v)$-path in $G_\ell$.
	Hence, the arcs of the $(u,v)$-path induce one or several paths inside the gadget.
	The gadget is a hat gadget, since the source and the sink gadget can be separating only to the vertices they strictly contain.
	Any path induced in the gadget by the arcs of the $(u,v)$-path has endpoints in $\{h_1,h_3,h_4,h_{10}\}$, since both $u$ and $v$ are not strictly contained in the gadget.
	Then one of the paths induced in the gadget should be a path between $\{h_1,h_4\}$ and $\{h_3,h_{10}\}$, otherwise the gadget is not separating.
	Hence, if there are more than one paths induced, then there are exactly two paths induced, and the second path is also a path between $\{h_1,h_4\}$ and $\{h_3,h_{10}\}$.
	Since the gadget is separating and $\{h_1,h_4\}$ with $\{h_3,h_{10}\}$ are boundaries of the separation, we should have an odd number of paths going between them.
	Thus, there is only one induced path.
\end{proof}

\begin{lemma}
	${G_\ell}$ is 2-strongly-connected.
\end{lemma}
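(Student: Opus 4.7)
The plan is to verify the two defining conditions of 2-strong-connectivity: that $G_\ell$ is strongly connected, and that $G_\ell - v$ remains strongly connected for every $v \in V(G_\ell)$ (the size condition $|V(G_\ell)|\geq 2$ being immediate). For strong connectivity of $G_\ell$, I will invoke the constructed paths: $P_1$ is an $(s,t)$-path and $Q_1$ is a $(t,s)$-path, so $s$ and $t$ lie on a common directed cycle. A gadget-by-gadget inspection shows that every remaining vertex of $G_\ell$ belongs to $V(P_1)\cup V(P_2)\cup V(Q_1)\cup V(Q_2)$ or is linked in both directions to this union by the ``blue'' arcs internal to its own gadget, which establishes strong connectivity of $G_\ell$.

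The core of the proof is a \emph{local traversal lemma} stating that for each gadget $X$ (source, hat, or sink) and each $v\in V(X)$, the induced subgraph on $V(X)\setminus\{v\}$ contains, for every pair of surviving boundary vertices $b,b'$ on opposite sides of $X$, both a $(b,b')$-path and a $(b',b)$-path, and moreover every surviving internal vertex of $X$ has a path to and from at least one such boundary. I will verify this lemma by direct case analysis inside each of the three gadget types of Figure~\ref{fig:diameter}. Given the lemma, strong connectivity of $G_\ell - v$ follows by splitting on the location of $v$:
\begin{itemize}
\item[(i)] If $v$ is strictly contained in a single gadget $X$, only $X$ is perturbed; every other gadget is intact and the traversal lemma for $X$, concatenated with the unmodified traversals of the neighbouring gadgets, yields a closed walk through $s$ and $t$ containing every vertex of $G_\ell - v$.
\item[(ii)] If $v$ is shared between two consecutive gadgets $X$ and $Y$, both are perturbed; however, each of the two sides of each perturbed gadget still contains one surviving boundary vertex, so the lemma applies and the concatenation argument of (i) still works.
\item[(iii)] If $v\in\{s,t\}$, a direct check of the source (respectively sink) gadget shows that the surviving vertices of the gadget, together with the rest of $G_\ell$, form a strongly connected graph, using the fact that $s$ and $t$ each have in- and out-degree at least two within their own gadget.
\end{itemize}

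The main obstacle is the verification of the local traversal lemma for the hat gadget, whose ten vertices and arc set must be checked for each of the ten possible choices of deleted vertex. Fortunately the gadget is small, and the arc structure is symmetric under the map that swaps the two sides of the gadget while reversing arc directions, so up to this symmetry the number of genuinely distinct subcases is small. The analysis for the source and sink gadgets is analogous and similarly finite. Once the traversal lemma is in hand the global argument is immediate from the three cases above, completing the proof.
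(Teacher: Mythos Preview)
Your plan is a legitimate route to the result, but it differs from the paper's argument and contains one over-strong formulation that you should correct.

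\textbf{Comparison with the paper.} The paper does \emph{not} argue gadget-by-gadget. Instead it works entirely with the four global paths $P_1,P_2$ (internally disjoint $(s,t)$-paths) and $Q_1,Q_2$ (internally disjoint $(t,s)$-paths). Removing any $v\notin\{s,t\}$ leaves one $P_i$ and one $Q_j$ intact, so $s$ and $t$ stay mutually reachable; it then suffices to show that every other vertex $u$ reaches $\{s,t\}$ and is reached from $\{s,t\}$ in $G_\ell-v$. The key step is a short structural claim: whenever $u$ lies on some $Q_j$, the order in which $Q_j$ meets $P_i$ forces the $(s,u)$-prefix of $P_i$ and the $(t,u)$-prefix of $Q_j$ to be internally disjoint (and dually for $(u,t)$ and $(u,s)$), so $u$ survives the deletion of any single $v$. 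The only vertices not on any $Q_j$ are the neighbours of $s$ and $t$ and the four inner-cycle vertices $h_2,h_6,h_7,h_8$ of each hat gadget, and these are handled by one-line checks. This buys the paper a uniform argument with essentially no per-vertex case analysis inside the gadgets; your approach trades that for a finite but longer local case analysis, with the advantage that it needs no insight about how $Q_1,Q_2$ thread through $P_1,P_2$.

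\textbf{A needed correction.} Your local traversal lemma as stated is false. In the hat gadget the only out-arc of $h_{10}$ is $(h_{10},h_4)$, so after deleting $h_5$ (an internal vertex) there is no $(h_{10},h_1)$-path and no $(h_3,h_1)$-path inside the gadget: right-to-left traversal lands only at $h_4$. Hence ``for every pair of surviving opposite-side boundary vertices'' must be weakened to ``for some pair'' in each direction, together with your clause that every surviving internal vertex reaches and is reached from some surviving boundary. With this weakening the gluing in case~(i) still works, because the adjacent intact gadget contains an $(h_3,h_{10})$-path (namely $h_3h_9h_{10}$), so both shared boundary vertices of the perturbed gadget are mutually reachable through its neighbour. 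In case~(ii) you must additionally check that the surviving shared boundary vertex can serve as the endpoint of both directions, which again holds but is not automatic from the ``every pair'' version. Once you state and verify the correct weaker lemma, your argument goes through.
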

\begin{proof}
First note that ${G_\ell}$ is strongly-connected, since every vertex of ${G_\ell}$ is reachable from $s$ and $t$ is reachable from every vertex of ${G_\ell}$ and there is a $(t,s)$-path in ${G_\ell}$.
We now need to show that ${G_\ell}-v$ is strongly-connected for every vertex $v\in V({G_\ell})$.

Since ${G_\ell}$ is strongly-connected and ${G_\ell}[N^-_{G_\ell}(s)\cup N^+_{G_\ell}(s)]$ is strongly-connected, we have that ${G_\ell}-s$ is strongly-connected.
The same argument works for $t$, so ${G_\ell}-t$ is strongly-connected as well.
We shall now prove that ${G_\ell}-v$ is strongly-connected for an arbitrary vertex $v \in V({G_\ell})\setminus\{s,t\}$.
Since there are two disjoint $(s,t)$-paths in ${G_\ell}$, $t$ is reachable from $s$ in $({G_\ell},v)$.
Analogously, $s$ is reachable from $t$ in $({G_\ell},v)$. 
Hence, to prove that ${G_\ell}-v$ is strongly-connected it is enough to show for every vertex $u\in V({G_\ell}-v)\setminus\{s,t\}$ that there is an $(s,u)$-path or a $(t,u)$-path in ${G_\ell}$ and there is an $(u,s)$-path or a $(u,t)$-path in ${G_\ell}-v$.

Take a vertex $u \in V({G_\ell})\setminus\{v,s,t\}$.
First, note that $u $ lies on an $(s,t)$-path in ${G_\ell}$.
If $v$ does not belong to this path, then this path remains in ${G_\ell}-v$ yielding an $(s,u)$-path and an $(u,t)$-path in ${G_\ell}$ as required.
We now assume that $v$ belongs to every $(s,t)$-path in ${G_\ell}$ containing $u$.
Then exactly one of $(s,u)$-path and $(u,t)$-path exists in ${G_\ell}-v$.

\begin{claim}\label{claim:disjoint-paths}
	If $u \in V({G_\ell})\setminus \{s,t\}$ lies on $Q_1$ or $Q_2$ in ${G_\ell}$, then there exists
	\begin{itemize}
		\item an $(s,u)$-path and a $(t,u)$-path that are internally disjoint;
		\item an $(u,t)$-path and an $(u,s)$-path that are internally disjoint.
	\end{itemize}
\end{claim}
\begin{proof}[Proof of Claim~\ref{claim:disjoint-paths}]
	Let $i$ be such that $u$ belongs to $P_i$ and $j$ be such that $u$ belongs to $Q_i$.
	Note that the vertices in $V(Q_j)\cap V(P_i)$ appear on $Q_j$ in the order reverse to the order on $P_i$.
	Then the $(t,u)$-subpath of $Q_j$ only uses vertices of $P_i$ that appear after $u$ on $P_i$.
	Thus, the $(s,u)$-subpath of $P_i$ does not share any internal vertex with a $(t,u)$-subpath of $Q_j$.
	These two paths form the first pair of the claim.
	
	To proceed with the second pair of paths, take the $(u,s)$-subpath of $Q_j$ and the $(u,t)$-subpath of $P_i$.
	Since the $(u,s)$-subpath uses only those vertices of $P_i$ that appear before $u$ on $P_i$, these paths are also internally disjoint.
\end{proof}
Suppose that $u$ lies on a $(t,s)$-path in ${G_\ell}$.
From the claim follows that $u$ is reachable from $s$ or $t$ and $s$ or $t$ is reachable from $u$ in ${G_\ell}-v$ and we are done.
Hence, we can assume that $u$ does not lie neither on $Q_1$ nor $Q_2$.
Then $u$ is a vertex in $N^+_{G_\ell}(s)\cup N^-_{G_\ell}(t)$ or a vertex in the inner cycle of a hat gadget, i.e.\ one of the vertices among the vertices $h_2,h_6,h_7,h_8$ of some hat gadget.

If $u$ is a vertex in $N^+_{G_\ell}(s)$, then there are two internally disjoint paths from $s$ to $u$ and two internally disjoint paths from $u$ to $s$ in ${G_\ell}$, so both $(u,s)$-path and $(s,u)$-path exist in ${G_\ell}-v$ and we are done.
Analogously, if $u$ is a vertex in $N^-_{G_\ell}(t)$, we have both $(u,t)$-path and $(t,u)$-path in ${G_\ell}-v$.

It is left to consider the case when $u$ is a vertex of the inner cycle of a hat gadget.
Note that there exist two internally disjoint paths starting in distinct vertices of $V(Q_i)$ and ending in $u$, for each $i \in \{1,2\}$ in ${G_\ell}$.
Symmetrically, we have two internally disjoint paths starting in $u$ and ending in distinct vertices of $V(Q_i)$, for each $i \in \{1,2\}$ in ${G_\ell}$.
Since at least one of $Q_1$ and $Q_2$ is presented in ${G_\ell}-v$, we have a $(t,u)$-path and an $(u,s)$-path in ${G_\ell}$.
The proof is complete.
\end{proof}

\begin{lemma}
$\diam({G_\ell})=\dist_{G_\ell}(s,t)=8\ell+10$.
\end{lemma}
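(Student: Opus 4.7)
The lemma has two parts: $\dist_{G_\ell}(s,t) = 8\ell+10$ and $\diam(G_\ell) = \dist_{G_\ell}(s,t)$. I would tackle them in sequence.

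For the upper bound $\dist_{G_\ell}(s,t) \le 8\ell+10$, I simply count the length of the already-constructed path $P_1$: it uses $8$ arcs in the source ($s \to s_1 \to \cdots \to s_8$), $2$ arcs in each of the $\ell$ odd-indexed hats (the segment $h_1 \to h_2 \to h_3$), $6$ arcs in each of the $\ell-1$ even-indexed hats (the segment $h_4 \to h_5 \to \cdots \to h_{10}$), and $8$ arcs in the sink ($t_8 \to t_7 \to \cdots \to t_1 \to t$), giving $\length(P_1) = 8 + 2\ell + 6(\ell-1) + 8 = 8\ell+10$.

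For the matching lower bound, the pairs $B_0 = \{s_8, s_{14}\}$ and $B_i = \{h_3^{(i)}, h_{10}^{(i)}\}$ for $i \in [2\ell-1]$ form $2\ell$ size-two $(s,t)$-vertex-separators arranged linearly along the chain, so every $(s,t)$-path crosses them in sequence. By direct inspection one obtains $\dist(s, s_8) = 8$, $\dist(s, s_{14}) = 6$, and symmetrically $\dist(t_8, t) = 8$, $\dist(t_{14}, t) = 6$, while inside each hat one has $\dist(h_1, h_3) = 2$, $\dist(h_4, h_{10}) = 6$, and $\dist(h_1, h_{10}) = \dist(h_4, h_3) = 4$ (the ``crossover'' routes use the inner $4$-cycle $h_2 \to h_8 \to h_7 \to h_6 \to h_2$). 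Writing $\alpha_i = \dist(s, h_1^{(i)})$ and $\beta_i = \dist(s, h_4^{(i)})$, the separators yield the recurrence
\begin{align*}
  \alpha_{i+1} &= \min(\alpha_i + 4,\; \beta_i + 6), \\
  \beta_{i+1}  &= \min(\alpha_i + 2,\; \beta_i + 4),
\end{align*}
with base case $\alpha_1 = 8$, $\beta_1 = 6$. A one-line induction gives $\alpha_i = 4i+4$ and $\beta_i = 4i+2$, and then combining with the sink distances yields $\dist(s,t) = \min(\alpha_{2\ell-1}+2+8,\ \alpha_{2\ell-1}+4+6,\ \beta_{2\ell-1}+4+8,\ \beta_{2\ell-1}+6+6) = 8\ell + 10$, matching the upper bound.

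Finally, to conclude $\diam(G_\ell) = \dist_{G_\ell}(s,t)$, I show $\dist_{G_\ell}(u,v) \le 8\ell + 10$ for every pair $u, v$. Every vertex of $G_\ell$ lies on at least one of the four paths $P_1, P_2, Q_1, Q_2$, and a symmetric analysis to the one above (applied to $G_\ell^T$, which is isomorphic to $G_\ell$ with $s$ and $t$ exchanged) gives $\length(Q_1) = \length(Q_2) = 8\ell + 10$. For any pair $(u,v)$, a $(u,v)$-path of length at most $8\ell+10$ is then assembled by concatenating a backward sub-path of $Q_1$ or $Q_2$ (from $u$ toward the start of the chain) with a forward sub-path of $P_1$ or $P_2$ (to reach $v$), regardless of whether $v$ lies ``ahead'' of $u$ in the chain or ``behind''. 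The main obstacle in the whole argument is the intra-hat distance computation in the second paragraph, which rests on confirming the four boundary-to-boundary distances via the inner $4$-cycle; once these are in hand, the recurrence and the case analysis for the diameter are routine.
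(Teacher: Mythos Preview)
Your first two parts are reasonable sketches (although your lower-bound computation via separators is considerably more involved than the paper's one-line observation that no arc of $G_\ell$ jumps forward by more than one BFS-level along $P_1\cup P_2$), but the third part contains a genuine error that breaks the argument.

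The claim that $\length(Q_1)=\length(Q_2)=8\ell+10$ is false. The isomorphism $G_\ell\cong G_\ell^T$ exchanging $s$ and $t$ does \emph{not} send the specific paths $Q_1,Q_2$ to $P_1,P_2$; it only tells you that $\dist_{G_\ell}(u,v)=\dist_{G_\ell}(f(v),f(u))$, which is a tautology when applied to $(t,s)$. In fact a direct count (done in the paper) gives $\length(Q_1)=4\ell+7$ and $\length(Q_2)=4\ell+9$, roughly half of $d$. This is exactly why the diameter \emph{is} realized by $(s,t)$ and not by $(t,s)$.

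With the correct lengths in hand your concatenation argument also collapses: routing $u\to s$ along a $Q_i$ and then $s\to v$ along a $P_j$ can cost up to $(4\ell+9)+(8\ell+10)$, far beyond $d$. The paper instead argues case by case: first $\dist(t,v)<d$ for every $v$ (because $Q_1,Q_2$ are short and every vertex is within one or two arcs of them), then $\dist(v,s)<d$ by the $G_\ell\cong G_\ell^T$ symmetry, and finally $\dist(a_i,b_j)$ and $\dist(b_i,a_j)$ are bounded by showing that within every window of eight consecutive levels there is a crossing arc between $P_1$ and $P_2$, so crossing costs at most a constant additive overhead. Your sketch skips all of this, and without it the bound $\dist(u,v)\le d$ is simply not established.
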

\begin{proof}
	Let $d$ be the length of $P_1$ and $P_2$ in ${G_\ell}$.
	First, note that $\dist_{G_\ell}(s,t)$ equals $d$.
	Indeed, $V({G_\ell})=V(P_1)\cup V(P_2)$, so an $(s,t)$-path in ${G_\ell}$ consists only of vertices of $P_1$ and $P_2$.
	Note that there is no arc in ${G_\ell}$ going from an $i^\text{th}$ vertex on $P_x$ to a $j^\text{th}$ vertex on $P_y$ for every choice of $x,y\in \{1,2\}$ and $i, j \in [d+1]$ with $i + 1 < j$.
	Hence, to reach $t$ from $s$ one should use at least $d$ arcs in ${G_\ell}$.
	Our goal is to prove that $\dist_{G_\ell}(u,v)\le d$ for each $u,v\in V({G_\ell})$.
	
	Now denote the internal vertices in $P_1$ by $a_1, a_2, \ldots, a_{d-1}$ in the order corresponding to $P_1$.
	Thus, $\dist_{G_\ell}(s,v)\le i$ and $\dist_{G_\ell}(v,t)\le d-i$ for each $i \in [d-1]$.
	Note that it holds that $\dist_{G_\ell}(s,v)=i$ and $\dist_{G_\ell}(v,t)=d-i$, since otherwise $\dist_{G_\ell}(s,t)\le \dist_{G_\ell}(s,v)+\dist_{G_\ell}(v,t)<i+(d-i)=d$.
	Analogously denote the internal vertices in $P_2$ by $b_1, b_2, \ldots, b_{d-1}$.
	It holds that $\dist_{G_\ell}(s,b_i)=i$ and $\dist_{G_\ell}(b_i,t)=d-i$ for each $i \in [d-1]$.
	
	Now consider the distance $\dist_{G_\ell}(t,s)$.
	We know that $\dist_{G_\ell}(t,s)$ is at most the length of $Q_1$ or $Q_2$.
	Observe that $Q_1$ uses four arcs in the sink gadget, three arcs in $\ell$ hat gadgets, one arc in $\ell-1$ hat gadgets and four arcs in the source gadget.
	Hence, the length of $Q_1$ is $4+3\ell+(\ell-1)+4=4\ell+7$.
	As for $Q_2$, it uses six arcs in each of the sink and the source gadgets, one arc in $\ell$ hat gadgets and three arcs in $\ell-1$ hat gadgets.
	The length of $Q_2$ is $12+\ell+3(\ell-1)=4\ell+9$.
	Hence, $\dist_{G_\ell}(t,s)\le 4\ell+7<d$.
	
	Now take a vertex $v\in \{a_1,b_1,a_2,b_2,\ldots, a_{d-1},b_{d-1}\}$ and consider the distance $\dist_{G_\ell}(t,v)$.
	If $v$ belongs to $Q_1$ or $Q_2$, then $\dist_{G_\ell}(t,v)<4\ell+9<d$.
	If $v$ does not belong to $Q_1$ and $Q_2$, then $v \in N^+_{G_\ell}(s)$ or $v \in N^-_{G_\ell}(t)$ or $v$ is a vertex of the inner cycle of a hat gadget.
	In the first case, $\dist_{G_\ell}(t,v)\le \dist_{G_\ell}(t,s)+1\le 4\ell+8<d$.
	In the case $v \in N^-_{G_\ell}(t)$, $\dist_{G_\ell}(t,v)=2$.
	For $v$ being a vertex of some inner cycle, note that $v$ is reachable from a vertex of $Q_1$ or a vertex of $Q_2$ using a single arc.
	Hence, $\dist_{G_\ell}(t,v)< 4\ell+9+1<d$.
	
	To handle the distance $\dist_{G_\ell}(v,s)$, we note that the graph ${G_\ell}$ is isomorphic to the graph ${G_\ell}^T$ with isomorphism $f: V({G_\ell})\to V({G_\ell})$, such that $f(s)=t$, $f(t)=s$, $f(a_i)=a_{d-i}$ and $f(b_i)=b_{d-i}$ for each $i \in [d-1]$.
	Thus, $\dist_{G_\ell}(v,s)=\dist_{{G_\ell}^T}(f(v),t)=\dist_{G_\ell}(t,f(v))$.
	Since $f(v)$ is $a_i$ or $b_i$ for some $i \in [d-1]$, we have that $\dist_{G_\ell}(v,s)\le 4\ell+9<d$.

	It is left to prove that $\dist_{G_\ell}(u,v)\le d$ for every choice of $u,v\in \{a_1,b_1,a_2,b_2,\ldots, a_{d-1},b_{d-1}\}$.
	It is easy to see that for every $1\le i\le j <d$ $\dist_{G_\ell}(v,a_j)=j-i$ and $\dist_{G_\ell}(b_i,b_j)=j-i$, otherwise $\dist_{G_\ell}(s,t)<d$.
	
	Note that for each $i \in [d-1]$ with $i$ giving $1, 2$ or $3$ modulo $8$, we have an arc from $a_i$ to $b_j$ in ${G_\ell}$, where $j$ is such that $|j-i|=1$.
	Then, for each $i \in [d-1]$ with $i$ giving $6,7$ or $0$ modulo $8$ and $i\in\{1,2,3,d-1,d-2,d-3\}$, we have an arc from $b_i$ to $a_j$ in ${G_\ell}$, where $j$ is such that $|j-i|=1$.
	It follows that for each $i \in [d-1]$, we have an arc from $a_{j}$ to $b_t$, where $i \le j \le i + 5$ and $|t-j|=1$.
	Hence, $\dist_{G_\ell}(a_i,b_{j+1})\le \dist_{G_\ell}(a_i,a_j)+1+\max\{\dist_{G_\ell}(b_{j-1},b_{j+1}),\dist_{G_\ell}(b_{j+1},b_{j+1})\}\le j-i+1+2=j-i+3.$
	Then $\dist_{G_\ell}(a_i,b_{i+6})\le \dist_{G_\ell}(a_i,b_{j+1})+\dist_{G_\ell}(b_{j+1},b_{i+6})\le (j-i+3)+(i-j+5)=8$.
	Analogously, we have that $\dist_{G_\ell}(b_i,a_{i+6})\le 8$.
	We conclude that $\dist_{G_\ell}(a_i,b_j)\le j-i+2\le d$ and $\dist_{G_\ell}(b_i,a_j)\le j-i+2\le d$ for each $i, j \in [d-1]$ such that $j-i\ge 6$.
	
	It remains to consider distances of form $\dist_{G_\ell}(u,v)$, where $u\in\{a_i,b_i\}$ and $v\in \{a_j,b_j\}$, such that $j-i\le 5$.
	Then we have that both $u$ and $v$ are vertices of the same gadget or two adjacent gadgets.
	Diameter of any gadget is at most eight, so clearly the distance between $u$ and $v$ is at most $16$, which is less than $8\ell+10$ for $\ell\ge 1$.
\end{proof}

\begin{lemma}
	Let $u, v \in V({G_\ell})$ be a pair of vertices in ${G_\ell}$ and let $T$ be an $(u,v)$-path in ${G_\ell}$.
	If $T$ contains an arc of some gadget (either source, sink or hat gadget) of ${G_\ell}$, then
	\begin{itemize}
		\item this gadget contains $u$ or $v$;
		\item this gadget separates $u$ and $v$ in ${G_\ell}$;
		\item $T$ contains exactly two arcs of this gadget.
		There are at most two such gadgets for $T$ overall.
	\end{itemize}
	\label{lemma:separation}
\end{lemma}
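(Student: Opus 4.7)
The plan is to exploit the chain structure of $G_\ell$: the gadgets $\Gamma_0$ (source), $\Gamma_1,\ldots,\Gamma_{2\ell-1}$ (the hat gadgets), and $\Gamma_{2\ell}$ (sink) form a linear sequence in which consecutive gadgets share exactly two boundary vertices, non-consecutive ones share nothing, and every arc of $G_\ell$ lies in a unique gadget. Consequently, for each interior index $i$, deleting the arc set of $\Gamma_i$ disconnects the vertices strictly to the left of $\Gamma_i$ in the chain from those strictly to the right.

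First I would handle the easy dichotomy. Assume $T$ uses an arc of some gadget $\Gamma$. If $\Gamma$ strictly contains $u$ or $v$, the first bullet is immediate. Otherwise I would apply the previous lemma on single-path intersections: $T\cap \Gamma$ is induced by a single sub-path $S$ whose endpoints lie in the boundary of $\Gamma$ (shared with its neighbors in the chain). Next I would argue that the two endpoints of $S$ cannot both lie on the same side of $\Gamma$: if they did, $T$ would have to depart from $\Gamma$ on that side and then reach its remaining endpoint on the same side without re-entering $\Gamma$, but the only vertices on each side of $\Gamma$ that can serve as entry or exit are the two boundary vertices, and tracing the admissible arc orientations then leads to a contradiction with either the simplicity of $T$ or the orientations themselves. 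Hence $S$ bridges the two sides of $\Gamma$; any $(u,v)$-walk in $G_\ell$ avoiding the arcs of $\Gamma$ would have to backtrack along the chain, showing that $\Gamma$ separates $u$ from $v$ and giving the second bullet.

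For the third bullet I would perform a direct case analysis inside each type of gadget. In a hat gadget, the left boundary is $\{h_1,h_4\}$ and the right boundary is $\{h_3,h_{10}\}$; the relevant sub-paths from one side to the other must respect the arc orientations inside the gadget and must not use the remaining boundary vertex on either side as an internal vertex (such a vertex is shared with a neighbor of $\Gamma$, where $T$ must enter or leave at a single vertex per side). I would then enumerate and rule out all non-trivial alternatives through the cross arcs $h_5\to h_1,\ h_{10}\to h_4,\ h_9\to h_5,\ h_3\to h_9$, the inner cycle $h_2\to h_8\to h_7\to h_6\to h_2$, and the length-six bottom route, concluding that the only admissible sub-path is the length-two top path $h_1\to h_2\to h_3$. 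Analogous analyses for the source and sink gadgets yield two-arc sub-paths as well. For the final ``at most two such gadgets overall'' claim I would observe that the gadgets whose arcs $T$ uses form a contiguous sub-interval of the chain, so the gadgets permitted to receive more than two arcs of $T$ are exactly those strictly containing $u$ or $v$, of which there are at most two.

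The hard part will be the third bullet. The hat gadget offers several combinatorially plausible sub-paths between its two boundary sides, and the argument must carefully track how $T$ exits $\Gamma$ into the adjacent gadget and use the boundary-vertex accounting to rule each alternative out. Without this careful local analysis the orientations of the cross and inner-cycle arcs alone would be insufficient to force the two-arc conclusion, so most of the technical weight of the proof will sit in this case check.
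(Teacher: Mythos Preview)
Your plan rests on a misreading of the lemma's logical structure. The three bullets form a \emph{disjunction}: if $T$ uses an arc of a gadget $\Gamma$, then \emph{at least one} of the three alternatives holds. The paper's proof accordingly assumes all three fail simultaneously (so $\Gamma$ contains neither $u$ nor $v$, does \emph{not} separate them, and $T$ uses more than two arcs of $\Gamma$) and derives a contradiction. Your outline instead tries to establish each bullet in turn, effectively treating the statement as a conjunction.

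This misreading produces two concrete errors. First, your argument for the second bullet claims that whenever $\Gamma$ does not contain $u$ or $v$, the endpoints of the induced sub-path $S$ cannot both lie on the same boundary side of $\Gamma$. This is false: for instance, if $u$ and $v$ both lie to the right of a hat gadget, $T$ can enter through $h_3$, take $h_3\to h_9\to h_{10}$, and exit again on the right. There is no contradiction with simplicity or orientations; this is precisely the situation covered by the third bullet. Relatedly, the ``previous lemma'' you invoke has separation as a \emph{hypothesis}, so it cannot be used to deduce separation.

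Second, your analysis of the third bullet targets the wrong scenario. You enumerate sub-paths that \emph{cross} the hat gadget from $\{h_1,h_4\}$ to $\{h_3,h_{10}\}$ and claim the only option is $h_1\to h_2\to h_3$. But crossing sub-paths are exactly the separating case (bullet two), and there they can have length up to six; the later longest-path lemma relies on this. The genuine content of bullet three concerns a gadget that does \emph{not} separate $u$ and $v$: then $T$'s arcs inside $\Gamma$ form either a path between $h_1$ and $h_4$, a path between $h_3$ and $h_{10}$, or two disjoint crossing paths, and one must show each possibility either has exactly two arcs or is infeasible because the required external in- and out-neighbour of the boundary vertices coincide. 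That local analysis, together with the observation that at most two non-separating, non-containing gadgets can be touched by $T$, is what you should carry out.
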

\begin{proof}
	We first consider the hat gadgets.
	Targeting towards a contradiction, suppose that there exists an $(u,v)$-path $T$ and a hat gadget that do not satisfy the lemma statement.
	Denote the vertices of the hat gadget in ${G_\ell}$ by $h_1, \ldots, h_{10}$, respectively to the definition of a hat gadget.
	The hat gadget does not separate $u$ and $v$ in $G_\ell$ and $u, v \notin \{h_1,h_2,\ldots,h_{10}\}$.
	Finally, $T$ uses more than two arcs of this gadget.
	
	Note that since the gadget does not separate $u$ and $v$ in ${G_\ell}$, the arcs of $T$ in the gadget either form a path between $h_1$ and $h_4$, or a path between $h_3$ and $h_{10}$, or two disjoint paths, one going from $\{h_1, h_4\}$ to $\{h_3, h_{10}\}$, and one going in the opposite direction from $\{h_3,h_{10}\}$ to $\{h_1,h_4\}$.
	Other cases are not possible since $\{h_1,h_4\}$ and $\{h_3,h_{10}\}$ both are cuts of ${G_\ell}$.
	
	Consider first that this is a path between $h_1$ and $h_4$.
	First note that the only $(h_4,h_1)$-path inside the gadget uses exactly two arcs, $(h_4,h_5)$ and $(h_5,h_1)$, but $T$ uses more than two arcs.
	If this is a $(h_1,h_4)$-path, then $T$ contains a vertex $x$ outside the gadget such that $(x,h_1)\in A({G_\ell})$.
	Symmetrically, $T$ contains a vertex $y\notin\{h_1,\ldots, h_{10}\}$ such that $(h_4,y)\in A({G_\ell})$, and $x$ and $y$ are distinct.
	However, the indegree and outdegree of both $h_1$ and $h_4$ are equal to two, so there is only one option for $x$ and only one option for $y$ in ${G_\ell}$.
	However, from the construction of ${G_\ell}$ it is clear that both $x$ and $y$ should be a predecessor of $h_1$ on either $P_1$ or $P_2$ in ${G_\ell}$.
	Hence, the case of a path between $h_1$ and $h_4$ is not possible.
	
	The case of a path between $h_3$ and $h_{10}$ is symmetrical.
	The only $(h_3, h_{10})$-path consists of only two arcs, and the $(h_{10},h_3)$-path is not possible since $T$ cannot contain a $(h_{10}, h_3)$-subpath, as the only outside ingoing neighbour of $h_{10}$ is the only outside outgoing neighbour of $h_3$.
	
	The only case left for the intersection of $T$ and the arcs of the gadget is two disjoint paths.
	One path should start either in $h_1$ or $h_4$ and end either in $h_3$ or $h_{10}$, and the other should go in the opposite direction.
	Suppose that one of the paths starts in $h_3$.
	Then it should use the only outgoing arc $(h_3, h_9)$.
	However, $\{h_3,h_{10}\}$ separates all paths going from $\{h_1,h_4\}$ to $h_{10}$, so a disjoint path in the other direction is not possible in this case.
	Hence, one of the paths should start in $h_{10}$.
	The only arc outgoing of $h_{10}$ is $(h_{10},h_4)$, so one of the paths is a $(h_{10},h_4)$-path and the other is an $(h_1,h_3)$-path inside the gadget.
	But then $T$ should contain an $(h_1,h_4)$-subpath in ${G_\ell}$.
	We already know that this is not possible.
	
	It is left to consider the case when the source gadget or the sink gadget does not contain $u$ and $v$, but $T$ has at least three arcs of this gadget.
	We consider the case of the source gadget, and the case of the sink gadget is symmetrical.
	Since $T$ does not start nor end in the source gadget, the arcs of $T$ should induce a path between $s_8$ and $s_{14}$ inside it.
	The only $(s_{14},s_8)$-path consists of just two arcs, so it should be an $(s_8,s_{14})$-path inside the gadget.
	Then $T$ should contain an arc $(x,s_8)$ and an arc $(s_{14},y)$, where $x,y$ are outside the gadget.
	The only choice for both $x$ and $y$ is the vertex $h_5$ of the first hat gadget in $G_\ell$.
	This is a contradiction since $x$ and $y$ should be distinct.
	
	For the very last sentence in the lemma statement, suppose that there are three gadgets containing an arc of $T$ but not containing $u$ and $v$ and not separating $u$ and $v$.
	Then one of the three gadgets separates the other two from each other, and this gadget is necessarily a hat gadget.
	Then the arcs of $T$ should form at least one path between $\{h_1,h_4\}$ and $\{h_3,h_{10}\}$ inside this separating gadget, but this is not the case.

\end{proof}

\begin{lemma}
	For $\ell\ge 17$, the longest path length in ${G_\ell}$ is $\dist_{G_\ell}(s,t)+4 = 8\ell + 14$. Additionally,
	for a vertex $v \in G_\ell$ that is either $h_6$ or $h_8$ in the median hat gadget (i.e. the $\ell$-th hat gadget out of $2\ell - 1$), the longest path in $G_\ell$ that ends in $v$ has length at most $4\ell + 15$, and exactly $4\ell + 15$ for $h_8$.
	\label{lemma:Gk_paths}
\end{lemma}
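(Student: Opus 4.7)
The plan is to prove existence and upper bounds separately, using the chain-of-gadgets structure of $G_\ell$ together with Lemma~\ref{lemma:separation}. For existence I would exhibit concrete paths. A length-$(8\ell+14)$ path is assembled by concatenating: (i) a length-$8$ path from $s$ to $s_{14}$ inside the source (for instance $s, s_9, s_2, s_3, s_{10}, s_{11}, s_{12}, s_{13}, s_{14}$); (ii) the length-$6$ bottom-path traversal $h_4, h_5, \ldots, h_{10}$ inside the first hat; (iii) the length-$4$ traversal $h_1, h_2, h_8, h_9, h_{10}$ inside each of the remaining $2\ell - 2$ hats; and (iv) a symmetric length-$8$ path from $t_{14}$ to $t$ inside the sink. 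This sums to $8+6+4(2\ell-2)+8 = 8\ell+14$ and is simple because the shared boundary vertices appear exactly once at each gadget join. For the length-$(4\ell+15)$ path ending at $h_8$ of $H_\ell$ I concatenate a length-$10$ path from $s$ to $s_8$ (e.g.\ $s, s_9, s_{10}, s_1, s_2, s_3, s_4, s_5, s_6, s_7, s_8$), the length-$4$ hat traversal $h_1, h_2, h_8, h_9, h_{10}$ in each of $H_1, \ldots, H_{\ell - 1}$, and the Hamiltonian $(h_1, h_8)$-path $h_1, h_2, h_3, h_9, h_{10}, h_4, h_5, h_6, h_7, h_8$ of length $9$ inside $H_\ell$, for a total of $10 + 4(\ell - 1) + 9 = 4\ell + 15$.

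For the upper bound on an arbitrary $(u,v)$-path $T$ I would apply Lemma~\ref{lemma:separation} to classify each gadget carrying arcs of $T$ as (a) containing $u$ or $v$, (b) a separating gadget of $\{u,v\}$, or (c) one of at most two detour gadgets with exactly two arcs. A direct inspection of a single hat shows the longest forward boundary-to-boundary traversals have lengths $2, 4, 4, 6$ for the entry-exit pairs $(h_1,h_3)$, $(h_1,h_{10})$, $(h_4,h_3)$, $(h_4,h_{10})$ respectively, which sets up a two-state dynamic program (the state being the entry type of the next hat) whose maximum over $N$ consecutive separating hats is $4N+2$ starting from $h_4$ and $4N$ starting from $h_1$. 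Combined with the longest internal source paths ($10$ arcs for $s \to s_8$ and $8$ arcs for $s \to s_{14}$, plus their sink transposes), each $(s,t)$-configuration evaluates to exactly $8\ell + 14$, and a short argument shows that restricting the endpoints strictly to the interior cannot increase the total. The detour gadgets cannot raise the bound either, because the only boundary-to-boundary $2$-arc paths inside a hat are $h_1 \to h_2 \to h_3$, $h_4 \to h_5 \to h_1$, and $h_3 \to h_9 \to h_{10}$, and each such insertion either replaces a forward traversal of equal length or forces a second passage through an adjacent gadget that would overshoot its $2$-arc quota.

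The most delicate ingredient, which I expect to be the main obstacle, is the refined $4\ell + 15$ bound for paths ending at $v \in \{h_6, h_8\}$ of $H_\ell$. Here I need the longest internal paths from each entry point to the target: the $(h_1, h_8)$- and $(h_1, h_6)$-paths have lengths $9$ (Hamiltonian, as above) and $7$, while the $(h_4, h_8)$- and $(h_4, h_6)$-paths have lengths $4$ and $6$. Applying the state machine across the $\ell - 1$ separating hats and comparing all four combinations of source-boundary with hat-$\ell$-entry gives maximum forward contribution $4\ell + 15$ for $h_8$ and $4\ell + 13$ for $h_6$; the $h_8$ value is attained by the construction above. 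Paths originating beyond $H_\ell$ require a symmetric analysis for backward hat traversals, whose per-hat maximum is at most $5$ arcs (the longest being $h_3 \to h_9 \to h_{10} \to h_4 \to h_5 \to h_1$), producing totals of order $3\ell + O(1)$ that are strictly dominated. The last obstacle is to rule out a net gain from the up-to-two detour gadgets in the $h_6$ case: the only back-to-back $2$-arc detour available inside $H_{\ell + 1}$, namely $h_4 \to h_5 \to h_1$, precisely substitutes for the $h_3 \to h_9 \to h_{10}$ segment of $H_\ell$ rather than augmenting it, and chaining a detour through both $H_{\ell+1}$ and $H_{\ell+2}$ is impossible because the return trip would force a third arc inside $H_{\ell+1}$, contradicting Lemma~\ref{lemma:separation}.
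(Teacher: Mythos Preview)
Your overall scaffolding matches the paper's: decompose an arbitrary $(u,v)$-path $T$ via Lemma~\ref{lemma:separation} into the gadgets that contain an endpoint, the separating gadgets, and at most two ``detour'' gadgets contributing two arcs each, then bound each piece. Your enumeration of the forward hat traversal lengths $2,4,4,6$ and the two-state recursion producing $4N+2$ (resp.\ $4N$) are correct, and your existence constructions for both $8\ell+14$ and $4\ell+15$ are valid (though the paper uses different concrete witnesses: $s_9 s_{10} s P_1 t t_{10} t_9$ for the first, and a path going through an extra vertex of $H_{\ell+1}$ for the second).

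There are two genuine gaps. First, for the $8\ell+14$ upper bound you only treat \emph{forward} traversals; you never argue that a path with $\dist_{G_\ell}(s,u)>\dist_{G_\ell}(s,v)$ cannot be longer. The paper handles this as a separate case: the backward boundary-to-boundary traversals obey $x_{i+1}=3-2c_i+2c_{i+1}$ for an appropriate state $c_i\in\{0,1\}$, so the total over $p$ separating hats telescopes to $3p+2c_0-2c_p\le 3p+2$, and together with the endpoint gadgets and the four detour arcs this gives at most $6\ell+29<8\ell+14$ for $\ell\ge 17$. You allude to ``per-hat maximum at most $5$'' only in the second part, but this loose bound would yield $5(2\ell-1)+O(1)>8\ell+14$, so the sharper $3p+2$ telescoping is actually needed.

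Second, and more seriously, the phrase ``a short argument shows that restricting the endpoints strictly to the interior cannot increase the total'' conceals exactly the part of the proof that costs the most work. The paper's forward bound rests on the observation that the length through the separating hats is \emph{exactly} $\dist_{G_\ell}(s,v')-\dist_{G_\ell}(s,u')$ (cleaner than your DP, since it makes the ``excess'' over $d$ live entirely at the two ends), and then proves that inside the source gadget the longest path ending at $s_8$ (resp.\ $s_{14}$) has length at most $\dist(s,s_8)+2=10$ (resp.\ $8$). That claim is established by a twelve-case analysis over the first backward arc taken; it is not short. Your numbers ``$10$ arcs for $s\to s_8$ and $8$ arcs for $s\to s_{14}$'' are stated only for paths that \emph{start at $s$}, but the longest path in $G_\ell$ does not start at $s$, so you need the stronger statement for arbitrary starting vertices in the source gadget. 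The analogous endpoint bound for paths starting inside a hat (length at most $12$, since at most $11$ vertices are available) also has to be stated and used; the paper does this explicitly.
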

\begin{proof}
	Consider two vertices  $u$ and $v$ and an $(u,v)$-path $T$. We show that the length of $T$ is at most $d+4$, where $d=\dist_{G_\ell}(s,t)=8\ell+10$.
	For any hat gadget not containing $u$ and $v$ but separating $u$ and $v$, we know that the arcs of $T$ form a path between $\{h_1,h_4\}$ and $\{h_3, h_{10}\}$ inside this gadget.
	
	We first consider the case when $\dist_{G_\ell}(s,u)>\dist_{G_\ell}(s,v)$.
	Then in any separating hat gadget that does not contain $u$ or $v$ strictly the arcs of $T$ form a path from $\{h_3,h_{10}\}$ to $\{h_1, h_4\}$.
	Denote all separating gadgets by $H_1, H_2, \ldots, H_p$ in the order in which $T$ traverses them.
	Denote by $x_i$ the number of arcs of $T$ inside $H_i$ for each $i \in [p]$.
	Note that each $x_i \in [5]$, as the longest path from $\{h_3, h_{10}\}$ to $\{h_1, h_4\}$ in $H_i$ is an $(h_3,h_1)$-path of length five.
	However, if the path induced by $T$ in $H_i$ ends in $h_1$, then the path induced by $T$ in $H_{i+1}$ starts in $h_{10}$.
	A longest path starting in $h_{10}$ has length three if it ends in $h_1$, and length one if it ends in $h_4$.
	For each $i \in [p]$, let $c_i \in \{0,1\}$ be equal to $1$ if the path induced by $T$ in $H_i$ ends in $h_1$, and to $0$ otherwise.
	Additionally, let $c_0 = 1$ if the path induced by $T$ in $H_1$ starts in $h_3$, otherwise $c_0=0$.
	
	For each $i \in \{0,\ldots,p-1\}$, consider the values of $c_i$ and $c_{i+1}$.
	If $c_i=0$, then the path induced by $T$ in $H_{i+1}$ starts in $h_3$, otherwise it starts in $h_{10}$.
	If $c_{i+1}=0$, then the path induced by $T$ in $H_{i+1}$ ends in $h_4$, otherwise it ends in $h_1$.
	If $(c_i,c_{i+1})=(0,0)$, then the path inside $H_{i+1}$ is an $(h_3,h_4)$-path and has length three.
	If $(c_i,c_{i+1})=(0,1)$, then this path goes from $h_{3}$ to $h_1$ and has length five.
	If $(c_i,c_{i+1})=(1,0)$, then the path is an $(h_{10},h_4)$-path and has length one.
	Finally, if the pair equals $(1,1)$, then inside $H_{i+1}$ we have an $(h_{10},h_1)$-path of length three.
	Thus, the formula for $x_{i+1}$ is $3-2c_i+2c_{i-1}$.
	We obtain that $\sum_{i=1}^p x_i=3p+2c_0-2c_p\le 3p+2$.

	Now observe that apart from arcs inside separating gadgets, $T$ can contain arcs inside gadgets containing $u$ and $v$ but not separating them plus four arcs in two additional gadgets.
	Since each gadget consists of at most $14$ vertices, $T$ can have at most $13$ arcs inside gadgets containing $u$ and $v$.
	The vertex $u$ is either strictly contained in a gadget, or is non-strictly contained in two adjacent gadgets. In the second case, one of the two gadgets is necessarily
	a separating hat gadget. 
	Since the same holds $v$, there are at most two gadgets in $G_\ell$ that contain $u$ or $v$ but not separate them.
	Hence, the length of $T$ is at most $(3p+2)+2\cdot 13+4\le 3p + 32\le 3(2\ell-1)+32\le 6\ell+29$.
	For $\ell \ge 17$, we have that $6\ell+29 < 8\ell + 14$, which is the desired bound on the length of $T$.
	For the second part of the statement, if $v$ is strictly contained the median hat gadget, then $p$ is at most $\ell - 1$, and the length of $T$ is at most $3\ell + 32 \le 4\ell + 15$.
	
	We move on to the case $\dist_{G_\ell}(s,u)\le \dist_{G_\ell}(s,v)$.
	If $u$ and $v$ are in the same gadget or are in two adjacent gadgets, then the length of $T$ is at most $4\cdot 14+4\le 60<d+4$.
	Hence, there is at least one hat gadget separating $u$ and $v$ but not containing $u$ or $v$.
	Consider the subpath of $T$ that is formed by arcs of the separating gadgets that do not strictly contain $u$ or $v$.
	Let the starting point of this path be $u'$ and ending point be $v'$.
	Note that in any separating hat gadget, $T$ induces a path from $h_i \in \{h_1,h_4\}$ to $h_j \in \{h_3,h_{10}\}$.
	Note that the length of such path always equals $\dist_{G_\ell}(s,h_j)-\dist_{G_\ell}(s,h_i)$.
	Hence, the length of the $(u',v')$-subpath of $T$ is exactly $\dist_{G_\ell}(s,v')-\dist_{G_\ell}(s,u')$.
	It remains to estimate the length of the $(u,u')$-subpath and the $(v',v)$-subpath of $T$, consider the following expression for the length of the $(u,v)$-path:
    \begin{multline*}
        \length(T) = \length \left(T_{(u, u')}\right) + \dist_{G_\ell}(s,v')-\dist_{G_\ell}(s,u') + \length \left(T_{(v', v)}\right) =\\ d + \left(\length \left(T_{(u, u')}\right)-\dist_{G_\ell}(s,u')\right) + \left(\length \left(T_{(v', v)}\right) - \dist_{G_\ell}(v', t)\right),
    \end{multline*}
    where we denote the $(u, u')$-subpath of $T$ by $T_{(u, u')}$, and the $(v', v)$-subpath of $T$  by $T_{(v', v)}$.

    It suffices to show that the length of the $(u, u')$-subpath is at most $\dist_{G_\ell}(s, u') + 2$, then, by symmetry, each of the last two terms above is at most two. There are two cases, either $u$ belongs to the source gadget, or $u$ belongs only to hat gadgets. We start with the first case, $u'$ is then either $s_8$ or $s_{14}$ in the source gadget. The following claim completes this case.

    \begin{claim}\label{claim:source}
        In the source gadget, for $w \in \{s_8, s_{14}\}$, the longest path that ends in $w$ has length at most $\dist(s, w) + 2$.
    \end{claim}
    \begin{proof}[Proof of Claim~\ref{claim:source}]
        Let us call the arcs that inrease (resp. decrease) the distance to $s$ forward (resp. backward) arcs. 
        Observe that if a path that ends in $w$ does not take backward arcs, its length is at most $\dist(s, w)$.
        We now consider the choice of the first backward arc of the potential path. To recall the vertex numeration in the source gadget, see Figure~\ref{fig:diameter}a.

        \begin{itemize}[leftmargin=1.5cm]
            \item[$\boxed{s_2s}$] After $s$, the path has to proceed to $s_{10}$, since $\{s_2, s_{10}\}$ is a cut that separates $s$  from $\{s_8, s_{14}\}$.
                The longest choice for such a subpath starts in $s_1$ or $s_9$ and collects all vertices on one side of the cut, e.g. $s_1s_2ss_9s_{10}$.
                Then the path has the only option to proceed to $s_{14}$. if $w = s_{14}$ this case is settled as the length of the path is at most $8 = \dist(s, s_{14}) + 2$. If $w = s_8$, then the path has to take the arc $s_{14}s_7$, and then the arc $s_7s_8$; the arc $s_7s_5$ cannot be taken since $s_7$ separates $s_8$ from $s_5$. Again, the length of the path is at most $10 = \dist(s, s_8) + 2$.
            \item[$\boxed{s_{10}s}$] Identically to the previous case, the first part of the path ends in $s_2$ and takes at most 4 arcs inside the vertex set $\{s, s_1, s_2, s_9, s_{10}\}$.
                Then the path has to proceed along $P_1$ until it either takes the arc $s_4s_{13}$, and then the argument is exactly the same as in the previous case, or proceeds straight to $s_8$, thus in the case $w = s_8$ yielding  a path of length at most $10$.
            \item[$\boxed{s_2s_9}$] Exactly as in the case $\boxed{s_2s}$, the path has to reach $s_{10}$, and then the analysis is the same.
            \item[$\boxed{s_{10}s_1}$] Symmetrically to the previous case, the analysis repeats the case $\boxed{s_{10}s}$.
            \item[$\boxed{s_3s_{10}}$] After $s_{10}$, the path has to proceed to $s_{11}$ as $\{s_3, s_{10}\}$ is a separator, the analysis regarding the subpath after $s_{10}$ is identical to the case $\boxed{s_2s}$.
                Observe that the subpath leading to $s_3$ cannot take both vertices $s_1$ and $s_9$, since $s_2$ together with $s_{10}$ separates $s_1$ and $s_9$ from $s_3$. Therefore, prior to reaching $s_{10}$ the path takes at most 4 arcs, leading again to the total length of at most $\dist(s, w) + 2$.
            \item[$\boxed{s_{11}s_2}$] Symmetrically to the previous case, the analysis reduces to the case $\boxed{s_{10}s}$.
            \item[$\boxed{s_6s_3}$] Before $s_6$, only the vertices $s_4$ and $s_5$ can be taken since $s_6s_3$ is the first backward arc. After $s_3$, the path either proceeds to $s_{10}$, or $s_4$ and then $s_{13}$. In the first case, the path has to go directly from $s_{10}$ to $w$ via forward arcs, yielding a length of at most 8 or 10 for $w = s_{14}$ and $w = s_8$ respectively. In the second case, no other backward arc can be taken as well, and the resulting path can only be shorter.
            \item[$\boxed{s_5s_4}$] Since $s_5s_4$ is the first backward arc on the path, the path has to start at $s_5$. Afterwards, the arc $s_4s_{13}$ has to be taken, and then only forward arcs to $s_{14}$, resulting in the path of length 3, or $s_8$, resulting in length 5.
            \item[$\boxed{s_7s_5}$] The target endpoint $w$ cannot be $s_8$, as $s_7$ separates $s_8$ from $s_5$.
                If $w = s_{14}$, the path has to either continue through $s_5s_4$ and $s_4s_{13}$, or through $s_5s_6$ and $s_6s_3$.
                In the first case, the path has to finish immediately by taking the arc $s_{13}s_{14}$, and before $s_7$ only the vertex $s_6$ can be taken, resulting in the length of at most 5.
                In the second case, the path has to start at $s_7$ as $s_6$ is taken and $s_{14}$ is the end of the path, and then similarly to  the case $\boxed{s_6s_3}$ the only option is to proceed from $s_3$ to $s_{10}$ and then to $s_{14}$ along forward arcs, resulting in length 8, or from $s_3$ directly to $s_{14}$ via forward arcs, resulting in length 6.
            \item[$\boxed{s_8s_6}$] The endpoint $w$ must be $s_{14}$. After $s_6$, the path either goes through $s_6s_3$, or through $s_6s_7s_5s_4s_{13}s_{14}$. In the latter case, $s_8$ has to be the starting vertex of the path and the legnth is 6. In the former, the starting vertex is either $s_7$ or $s_8$, and then the analysis is identical to the case where $s_6s_3$ is the first backward arc.
            \item[$\boxed{s_{12}s_{11}}$] The vertex $s_{12}$ has to be the starting point of the path. Afterwards, the arc $s_{11}s_2$ has to be taken, and then only forward arcs. The length of the path is then exactly $\dist(s, w)$.
            \item[$\boxed{s_{13}s_{12}}$] The path has to start at $s_{13}$, as $\{s_4s_{13}\}$ separates $s_{12}$ from $w$. The only possible continuation is $s_{12}s_{11}s_2s_3s_4s_5s_6s_7s_8$, resulting in the path of length 9 ending at $s_8$.
        \end{itemize}
        As the above cases cover all backward arcs in the source gadget, the proof of the claim is concluded.
    \end{proof}

    In the other case, $u$ does not belong to the source gadget, so the first hat gadget does not separate $u$ and $v$.
    Consider the topmost hat gadget that contains $u$. The vertex $u'$ has to be either $h_3$ or $h_{10}$ in this hat gadget, since the next hat gadget necessarily separates $u$ and $v$.
    Thus, $\dist_{G_\ell}(s, u') \ge 10$. Now it suffices to show that the length of the $(u, u')$-subpath is at most twelve. From Lemma~\ref{lemma:separation}, it follows that this subpath lies completely inside the gadget, except for the two arcs of the gadget above that form a path from $h_4$ to $h_1$ through a vertex outside of the gadget. Therefore, the subpath visits at most eleven vertices, the number of vertices in a hat gadget plus one extra vertex, and thus cannot exceed the length of twelve.
    To complete the proof, we show that there is indeed a path of length $8\ell + 14$ in $G_\ell$. The path proceeds as follows: start at the vertex $s_9$ of the source gadget, then go to $s_{10}$ and $s$, then proceed along the path $P_1$ of length $d$ until $t$ is reached, and finally go to $t_{10}$ and $t_9$.

    For the second part of the statement, consider $v \in \{h_6, h_8\}$ in the median hat gadget. Then $v'$ is either $h_4$ or $h_1$ in the same gadget.
    From the above, the length of the $(u, v')$-subpath of $T$ is at most $\dist_{G_\ell}(s, v') + 2$. Going from $v'$ to $v$, $T$ may have arcs inside the median hat gadget, and possibly two arcs in the next gadget that form a two-path between $h_3$ and $h_{10}$.
    Thus, the length of $T$ is at most $\dist_{G_\ell}(s, v') + 12$, since there are at most 10 available vertices to $T$ after $v'$. If $v'$ is $h_4$ in the gadget, then length of $T$ is at most $4\ell + 14$, since $\dist_{G_\ell}(s, v')$ is $4\ell + 2$ in this case. It only remains to consider the case
    where $v'$ is $h_1$ in the median hat gadget, $\dist_{G_\ell}(s, v') = 4\ell + 4$. It suffices to show that $T$ cannot take simultaneously all the vertices in the hat gadget, and the two arcs of the next gadget.
    Assume that happens, then $T$ has to go from $h_1$ to $h_3$ and then to $h_{10}$ through the two additional arcs. This leaves only the option to proceed  to $h_4$ and then along the remaining path. Thus, $h_9$ cannot be reached before $h_6$ or $h_8$, if the path is to collect all vertices.
    Finally, to see that there is a path of length $4\ell + 15$ ending at $h_8$, consider the following path. Start at $s_1$, go to $s_2$, then $s$, then proceed along $P_2$ until the vertex $h_1$ of the median hat gadget is reached. From there, complete the path with the following sequence:
    \[h_1h_2h_3wh_{10}h_4h_5h_6h_7h_8,\]
    where all the vertices are labelled corresponding to the hat gadget, and $w$ is the available vertex of the next gadget.
    This completes the lemma.
\end{proof}

Now we are ready to prove the hardness result of Theorem~\ref{thm:dichotomy_diameter}.

\begin{lemma}
    \probLPDiam on 2-strongly-connected directed graphs is \classNP-complete for $k \ge 5$.
    \label{lemma:diameter_np_complete}
\end{lemma}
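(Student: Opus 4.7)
The plan is to reduce \textsc{Hamiltonian Path} on $2$-connected undirected graphs, which is \classNP-complete, to \probLPDiam with $k=5$ on $2$-strongly-connected directed graphs, and then to pad the construction uniformly for every $k\ge 5$. Given an instance $H$ on $n$ vertices, I construct a directed graph $G'$ by taking the gadget $G_\ell$ of Lemma~\ref{lemma:Gk_paths} for $\ell$ polynomially large in $n$, bidirecting each edge of $H$, and gluing the two via a small fixed $4$-vertex connector attached at the vertex $h_8$ of the median hat gadget of $G_\ell$. The point of choosing $h_8$ is that Lemma~\ref{lemma:Gk_paths} pins down the longest path ending there to exactly $4\ell+15$, which is essentially half of $\diam(G_\ell)$, so two maximum half-paths plus a Hamiltonian traversal of $H$ fit together.

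I would first design the connector so that $G'$ is $2$-strongly-connected while forcing every simple path through $V(H)$ to enter and leave it exactly once. Concretely, one can attach two arcs from (a neighbourhood of) $h_8$ to distinct ``entry'' connector vertices and two arcs from ``exit'' connector vertices back towards $h_8$, with these four vertices joined to $H$ so that removing any single vertex leaves $G'$ strongly connected. A case analysis in the style of Lemma~\ref{lemma:separation} then shows that any simple path in $G'$ either stays inside $V(G_\ell)$, or crosses the connector exactly twice, visiting some vertices of $H$ in a single contiguous segment. Simultaneously, the shortest $(s,t)$-path in $G_\ell$ is unaffected, so $\diam(G')=\diam(G_\ell)+c$ for a small constant $c$ depending only on the connector; this constant is absorbed into the choice of target.

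From a Hamiltonian path $p_1\cdots p_n$ of $H$ I build an $(s,t)$-style path in $G'$ by concatenating (i)~a longest path of $G_\ell$ ending at $h_8$, of length $4\ell+15$, (ii)~two arcs through the connector into $p_1$, (iii)~the Hamiltonian path of length $n-1$, and (iv)~two arcs back through the connector plus a longest path from $h_8$ to the sink end of $G_\ell$, of length $4\ell+15$ by the symmetric version of Lemma~\ref{lemma:Gk_paths}. Adjusting $n$ and $\ell$ so that the total exceeds $\diam(G')$ by exactly $5$ is a bookkeeping step; the extra flexibility needed can be obtained by harmlessly padding $H$ with a small always-extendable subgraph. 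Conversely, if $G'$ has a path $R$ of length $\diam(G')+5$, then Lemma~\ref{lemma:Gk_paths} rules out $R\subseteq V(G_\ell)$ since a pure $G_\ell$-path has length at most $\diam(G_\ell)+4$. Hence $R$ enters $H$ once, and its two $G_\ell$-pieces, which end respectively start at the connector, each have length at most $4\ell+15$ by Lemma~\ref{lemma:Gk_paths}. Subtracting these contributions from the total length of $R$ leaves an $H$-segment long enough to force a Hamiltonian path in $H$. Finally, to cover every $k\ge 5$, I attach a small $(k-5)$-length padding gadget (e.g.\ a bidirected cycle sharing a single vertex with $G_\ell$ far from the connector) that shifts the target from $\diam+5$ to $\diam+k$ without affecting $2$-strong-connectivity or the diameter-realising pair.

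The main obstacle is the design and verification of the $4$-vertex connector: one must simultaneously guarantee $2$-strong-connectivity of $G'$ and prove the ``at most one alternation'' property, which requires a careful connectivity analysis analogous to Lemma~\ref{lemma:separation}. Once that is in place, the length accounting combining Lemma~\ref{lemma:Gk_paths} with the contribution of $H$ is routine, and the padding argument for $k>5$ is a matter of attaching an innocuous side gadget and reverifying the diameter.
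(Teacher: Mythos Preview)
Your proposal has a genuine arithmetic gap in the converse direction that breaks the reduction. You envision a long path of the shape $G_\ell$--piece~$\to$~connector~$\to$~$H$--segment~$\to$~connector~$\to$~$G_\ell$--piece, and you bound each $G_\ell$--piece by $4\ell+15$ via Lemma~\ref{lemma:Gk_paths}. But then subtracting both pieces plus connector arcs from $\diam(G')+5\approx 8\ell+15$ leaves a \emph{negative} lower bound on the $H$--segment, so nothing forces the $H$--segment to be Hamiltonian (or even nonempty). The paper avoids this by using only \emph{one} $G_\ell$--piece: the target path goes from $G_\ell$ into $H$ and \emph{ends} there. With $|V(H)|=4\ell+(k-5)$, a single $G_\ell$--piece of length $\le 4\ell+15$ plus one connector arc plus the $H$--segment must total $8\ell+10+k$, which forces the $H$--segment to have length $4\ell+k-6=|V(H)|-1$, i.e.\ to be Hamiltonian. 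Your two--piece structure also runs into a disjointness problem: the maximal path of length $4\ell+15$ ending at $h_8$ in Lemma~\ref{lemma:Gk_paths} uses \emph{all} vertices of the median hat gadget, so a second disjoint $G_\ell$--piece starting near $h_8$ cannot also achieve $4\ell+15$.

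Two further issues. First, your padding for $k>5$ --- a bidirected cycle sharing a single vertex with $G_\ell$ --- destroys $2$-strong-connectivity, since that shared vertex is a cut vertex. The paper instead absorbs $k$ directly into the size of $H$ by taking $|V(H)|=4\ell+(k-5)$, so no separate padding is needed. Second, the paper's connector is a concrete directed $4$-cycle on $\{c_1,c_2,c_3,c_4\}$ with $c_3,c_4$ identified with $h_6,h_8$ and $c_1,c_2$ identified with two vertices of $H$; this specific structure is what makes the ``at most one crossing'' analysis (and the diameter computation $\diam(G)=\diam(G_\ell)$) go through cleanly, whereas your description leaves the connector unspecified.
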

\begin{proof}
    For each $k \ge 5$, we present a reduction from  \textsc{Hamiltonian Path} on undirected 2-connected graphs, for an intuitive illustration see Figure~\ref{fig:diameter_reduction}.
    Take a Hamiltonian path instance $H$ where $|V(H)| = n'$,
    we treat $H$ as a directed graph where every undirected edge is replaced by two directed arcs going in opposite directions.
    Assume that $n'$ has form $4\ell + (k - 5)$ for a certain integer $\ell \ge k/4 + 17$.
    For each vertex $w \in V(H)$, we construct the following instance of \probLPDiam. Consider a graph $C$ that we call a \emph{connector gadget}.
    The graph $C$ has four vertices $c_1$, $c_2$, $c_3$, $c_4$, and four arcs $c_3c_1$, $c_4c_2$, $c_1c_4$, $c_2c_3$. The resulting instance of \probLPDiam is the graph $G$ constructed by taking disjoint instances of the graphs $H$, $C$, $G_\ell$,
    and then associating the vertex $c_2$ of $C$ with $w$ in $H$, $c_1$ of $C$ with an arbitrary other vertex of $H$, $c_3$ of $C$ with the $h_6$ vertex of the median hat gadget of $G_\ell$, and $c_4$ with the $h_8$ vertex of the same gadget.
    This finishes the construction, and now we show the correctness of the reduction, that is,
    there is a Hamiltonian path in $H$ if and only if at least one of the constructed graphs $G$ has a path of length $\diam(G) + k$. 
    In the following claims, we show that $G$ shares most of the properties proved for $G_\ell$ above.

\begin{figure}	
	\centering
	\begin{tikzpicture}[scale=0.5,x={(0cm,-1cm)},y={(-1cm,0cm)}]
	\tikzstyle{vertex}=[circle, fill=black,draw,inner sep=0,minimum size=0.2cm]
	\usetikzlibrary{decorations.markings}
	\usetikzlibrary{arrows.meta}
	\tikzstyle{medge}=[thick,decoration={
		markings,
		mark=at position 0.6 with {\arrow{Latex}}},postaction={decorate}]
	\tikzstyle{medge5}=[thick,decoration={
		markings,
		mark=at position 0.5 with {\arrow{Latex}}},postaction={decorate}]
	\tikzstyle{medge7}=[thick,decoration={
		markings,
		mark=at position 0.7 with {\arrow{Latex}}},postaction={decorate}]
	\tikzstyle{medge8}=[thick,decoration={
		markings,
		mark=at position 0.8 with {\arrow{Latex}}},postaction={decorate}]
	\tikzstyle{dedge}=[thick,dashed, decoration={
		markings,
		mark=at position 0.6 with {\arrow{Latex}}},postaction={decorate}]
	\tikzstyle{redpath}=[red]
	\tikzstyle{orangepath}=[orange]
	\tikzstyle{purpleedges}=[blue]
	\usetikzlibrary{math} 
	\usetikzlibrary{patterns}
	
	\node [vertex,label=above:{$s$}] (s) at (-4,0) {};
	
	\def\n{25};
	\def\nm{24};
	\def\gap{1.5}

	\foreach \x in {1,...,\n} {
		\pgfmathtruncatemacro{\label}{\x*2}
		\node [vertex] (v-\label) at (-4+\gap*\x,1.5) {};
		\pgfmathtruncatemacro{\label}{\x*2+1}
		\node [vertex] (v-\label) at (-4+\gap*\x,-1.5) {};
	}
	
	\draw[medge] (s) to (v-2);
	\draw[medge] (s) to (v-3);
	
	\foreach \x in {1,...,\nm} {
		\pgfmathtruncatemacro{\labela}{\x*2}
		\pgfmathtruncatemacro{\labelb}{\x*2+2}
		\draw[medge] (v-\labela) -- (v-\labelb);
		
		\pgfmathtruncatemacro{\labela}{\x*2+1}
		\pgfmathtruncatemacro{\labelb}{\x*2+3}
		\draw[medge] (v-\labela) -- (v-\labelb);
	}

	\draw[redpath,medge] (v-16) to [bend left] (v-12);
	\draw[redpath,medge] (v-12) to [bend left] (v-6);
	\draw[redpath,medge8] (v-6) to (v-5);
	\draw[redpath,medge] (v-5) to[out=45,in=45] (s);
	
	\draw[orangepath,medge] (v-14) to [bend left] (v-10);
	\draw[orangepath,medge] (v-10) to [bend left] (v-8);
	\draw[orangepath,medge] (v-8) to (v-11);
	\draw[orangepath,medge] (v-11) to[ bend right] (v-9);
	\draw[orangepath,medge] (v-9) to[ bend right] (v-7);
	\draw[orangepath,medge8] (v-7) to (v-4);
	\draw[orangepath,medge] (v-4) to[out=135,in=135] (s);
	
	\draw[purpleedges,medge7] (v-3) to[ bend right] (v-4);
	\draw[purpleedges,medge7] (v-4) to[ bend right] (v-3);
	\draw[purpleedges,medge5] (v-2) to[ bend right] (v-5);
	\draw[purpleedges,medge5] (v-5) to[ bend right] (v-2);
	
	\newcommand{\drawPatternUB}[2]
	{
		\pgfmathtruncatemacro{\r}{#1};
		\pgfmathtruncatemacro{\o}{#2};
		
		\pgfmathtruncatemacro{\nr}{\r-1};
		\pgfmathtruncatemacro{\no}{\o-1};
		
		\draw[redpath,medge] (v-\nr) to (v-\r);
		\draw[orangepath,medge] (v-\no) to (v-\o);
		
		\pgfmathtruncatemacro{\r}{\nr};
		\pgfmathtruncatemacro{\o}{\no};
		
		\pgfmathtruncatemacro{\nr}{\r + 8};
		\pgfmathtruncatemacro{\no}{\o + 12};
		
		\draw[redpath,medge] (v-\nr) to[ bend right] (v-\r);
		\draw[orangepath,medge] (v-\no) to[ bend right] (v-\o);
		
		\pgfmathtruncatemacro{\r}{\nr};
		\pgfmathtruncatemacro{\o}{\no};
		
		\pgfmathtruncatemacro{\a}{\r - 2};
		\pgfmathtruncatemacro{\b}{\a - 2};
		\pgfmathtruncatemacro{\c}{\b - 2};
		\pgfmathtruncatemacro{\d}{\c + 1};
		
		\draw[purpleedges,medge] (v-\a) to [ bend right] (v-\b);
		\draw[purpleedges,medge] (v-\b) to [ bend right] (v-\c);
		\draw[purpleedges,medge] (v-\c) to (v-\d);
		\draw[purpleedges,medge] (v-\d) to (v-\a);
	};
	
	\newcommand{\drawPatternBU}[2]
	{
		\pgfmathtruncatemacro{\r}{#1};
		\pgfmathtruncatemacro{\o}{#2};
		
		\pgfmathtruncatemacro{\nr}{\r-3};
		\pgfmathtruncatemacro{\no}{\o-3};
		
		\draw[redpath,medge] (v-\nr) to (v-\r);
		\draw[orangepath,medge] (v-\no) to (v-\o);
		
		\pgfmathtruncatemacro{\r}{\nr};
		\pgfmathtruncatemacro{\o}{\no};
		
		\pgfmathtruncatemacro{\nr}{\r + 12};
		\pgfmathtruncatemacro{\no}{\o + 8};
		
		\draw[redpath,medge] (v-\nr) to[bend left] (v-\r);
		\draw[orangepath,medge] (v-\no) to[bend left] (v-\o);
		
		\pgfmathtruncatemacro{\r}{\nr};
		\pgfmathtruncatemacro{\o}{\no};
		
		\pgfmathtruncatemacro{\a}{\o - 2};
		\pgfmathtruncatemacro{\b}{\a - 2};
		\pgfmathtruncatemacro{\c}{\b - 2};
		\pgfmathtruncatemacro{\d}{\c + 3};
		
		\draw[purpleedges,medge] (v-\a) to [bend left] (v-\b);
		\draw[purpleedges,medge] (v-\b) to [bend left] (v-\c);
		\draw[purpleedges,medge] (v-\c) to (v-\d);
		\draw[purpleedges,medge] (v-\d) to (v-\a);
	};

	\drawPatternUB{16}{14}
	
	\drawPatternBU{23}{25}
	
	\drawPatternUB{32}{30}
	
	\node [vertex,label=below:{$t$}] (t) at (-4+\gap*\n+\gap,0) {};
	\pgfmathtruncatemacro{\label}{\n*2}
	\draw[medge] (v-\label) to (t);
	\pgfmathtruncatemacro{\label}{\n*2+1};
	\draw[medge] (v-\label) to (t);
	
	\pgfmathtruncatemacro{\r}{(\n-7)*2}
	\pgfmathtruncatemacro{\nr}{\r+3}
	\draw[redpath,medge] (v-\r) to (v-\nr);
	\pgfmathtruncatemacro{\nr}{\r+4}

	\draw[redpath,medge] (v-\nr) to [bend left] (v-\r);
	\pgfmathtruncatemacro{\r}{\nr}
	\pgfmathtruncatemacro{\nr}{\r+6}
	\draw[redpath,medge] (v-\nr) to [bend left] (v-\r);
	\pgfmathtruncatemacro{\r}{\nr}
	\pgfmathtruncatemacro{\nr}{\r+3}
	\draw[redpath,medge8] (v-\nr) to (v-\r);
	\draw[redpath,medge] (t) to[out=315,in=315] (v-\nr);
	
	\pgfmathtruncatemacro{\o}{(\n-6)*2}
	\pgfmathtruncatemacro{\no}{\o+3}
	\draw[orangepath,medge] (v-\o) to  (v-\no);
	\pgfmathtruncatemacro{\no}{\o+4}
	\draw[orangepath,medge] (v-\no) to [bend left] (v-\o);
	\pgfmathtruncatemacro{\o}{\no}
	\pgfmathtruncatemacro{\no}{\o+2}
	\draw[orangepath,medge] (v-\no) to [bend left] (v-\o);
	\pgfmathtruncatemacro{\o}{\no}
	\pgfmathtruncatemacro{\no}{\o-1}
	\draw[orangepath,medge] (v-\no) to (v-\o);
	\pgfmathtruncatemacro{\o}{\no}
	\pgfmathtruncatemacro{\no}{\o+2}
	\draw[orangepath,medge] (v-\no) to [bend left] (v-\o);
	\pgfmathtruncatemacro{\o}{\no}
	\pgfmathtruncatemacro{\no}{\o+2}
	\draw[orangepath,medge] (v-\no) to [bend left] (v-\o);
	\pgfmathtruncatemacro{\o}{\no}
	\pgfmathtruncatemacro{\no}{\o+1}
	\draw[orangepath,medge8] (v-\no) to (v-\o);
	\draw[orangepath,medge] (t) to[out=225,in=225] (v-\no);
	
	\pgfmathtruncatemacro{\o}{\no}
	\pgfmathtruncatemacro{\no}{\o+3}
	\draw[purpleedges,medge5] (v-\no) to[ bend right] (v-\o);
	\draw[purpleedges,medge5] (v-\o) to[ bend right] (v-\no);
	\pgfmathtruncatemacro{\r}{\nr}
	\pgfmathtruncatemacro{\nr}{\r+1}
	\draw[purpleedges,medge7] (v-\nr) to[ bend right] (v-\r);
	\draw[purpleedges,medge7] (v-\r) to[ bend right] (v-\nr);
	
	\def\tikzsegment#1#2#3{ 
		\path let
		\p1=($(#3)-(#2)$),
		\n1={veclen(\p1)*0.7}
		in (#2) -- (#3) 
		node[minimum width=\n1, 
		inner sep=0pt, 
		pos=0.5,sloped,rectangle,rounded corners,
		#1] 
		(line){};
	}
	
	\newcommand{\drawCut}[2] {
	}
	
	\foreach \i in {0,1} {
		\pgfmathtruncatemacro{\a}{13+\i * 16};
		\pgfmathtruncatemacro{\b}{16 + \i * 16};
		\drawCut{\a}{\b};
		\pgfmathtruncatemacro{\a}{25+\i * 16};
		\pgfmathtruncatemacro{\b}{20 + \i * 16};
		\drawCut{\a}{\b};
	}

	\draw[thick] (15.5, 16) ellipse (150pt and 200pt);
	\node [vertex] (c1) at (14, 12) {};
	\node [vertex] (c2) at (17, 12) {};
	\node at (13.5, 12) {$c_1$};
	\node at (16.5, 12) {$c_2$};
	\node at (13.7, 0.9) {$c_3$};
	\node at (17.3, 0.9) {$c_4$};
	\draw[medge7] (c1) to (v-28);
	\draw[medge7] (c2) to (v-24);
	\draw[medge] (v-24) to (c1);
	\draw[medge] (v-28) to (c2);
	\node at (13, 6.5) {Connector gadget $C$};
	\node at (7, 16) {\textsc{Hamiltonian Path} instance $H$};
	\node[vertex] (p1) at (21, 16) {};
	\node[vertex] (p2) at (18, 19) {};
	\node[vertex] (p3) at (17, 15) {};
	\node[vertex] (p4) at (15.5, 20) {};
	\node[vertex] (p5) at (10, 16.5) {};
	\node[vertex] (p6) at (12, 14) {};
	\draw[dedge] (c2) to (p1);
	\draw[dedge] (p1) to (p2);
	\draw[dedge] (p2) to (p3);
	\draw[dedge] (p3) to (c1);
	\draw[dedge] (c1) to (p4);
	\draw[dedge] (p4) to (p5);
	\draw[dedge] (p5) to (p6);


\end{tikzpicture}
	
	\caption{\label{fig:diameter_reduction} The illustration of the hardness reduction of Lemma~\ref{lemma:diameter_np_complete}. While, in fact, the graph $G_\ell$ used in the reduction must have $\ell \ge 17$, here we use the graph $G_2$ for clarity.}
\end{figure}

    \begin{claim}\label{claim:two-connectivity}
       $G$ is 2-strongly-connected. 
    \end{claim}
    \begin{proof}[Proof of Claim~\ref{claim:two-connectivity}]
        Clearly, $G$ is strongly connected. Now, assume we remove a vertex $v$ from $G$, $v$ either belongs to an induced copy of $H$ or to an induced copy of $G_\ell$.
        In the first case, any other vertex of $H$ is still reachable from $c_1$ or $c_2$ (whichever is not removed) and vice versa, since $H$ is 2-connected. There is also a path from any vertex of $G_\ell$ to $\{c_1, c_2\} \setminus \{v\}$ and back, since the graph $G[V(G_\ell) \cup \{c_1, c_2\} \setminus \{v\}]$ is unchanged.
        The second case is identical.
    \end{proof}
    \begin{claim}\label{claim:diam}
       $\diam(G) = \diam(G_\ell)$.
    \end{claim}
    \begin{proof}[Proof of Claim~\ref{claim:diam}]
        Clearly, $\diam(G) \ge \diam(G_\ell)$. To show the opposite direction, first, observe that $\diam(H) \le \frac{n'}{2} = 2\ell + \frac{k - 5}{2}$.
        That holds since, for any two vertices $u$ and $v$ in $H$, $\dist_H(u, v)$ is at most $\frac{|H|}{2}$,
        as there are two disjoint paths going from $u$ to $v$, and it cannot be that they both are longer than $\frac{|H|}{2}$.
        For any two vertices $u$ and $v$ inside $G_\ell$, the distance from $u$ to $v$ is unchanged from $G_\ell$, since it is impossible to go
        from $u$ to $H - V(C)$ and then back to $v$ in $G_\ell$, and the vertices of $C$ do not change the distances in $G_\ell$.
        Finally, for a vertex $u \in V(G_\ell)$ and a vertex $v \in V(H)$, observe that a path from $u$ to $v$ necessarily goes through the arc $wt$,
        where either $w = c_3$, $t = c_1$, or $w = c_4$, $t = c_2$.
        The subpath $(u, w)$ is a shortest such path inside $G_\ell$, and by Lemma~\ref{lemma:Gk_paths} its length is at most $4\ell + 15$.
        The subpath $(t, v)$ is a shortest path inside  $H$, and its length is at most $\diam(H) \le 2\ell + \frac{k-5}{2}$.
        Thus, $\dist_G(u, v) \le 6\ell + \frac{k - 5}{2} + 16 \le 8\ell + 10 = \diam(G_\ell)$, since by construction $2\ell \ge k/2 + 34$. The case $u \in V(H)$, $v \in V(G_\ell)$ is symmetrical.
    \end{proof}
    \begin{claim}\label{claim:equivalency}
       The longest path in $G$ has length at least $\diam(G) + k$ if and only if there is a Hamiltonian path in $H$ starting in $w$.
    \end{claim}
    \begin{proof}[Proof of Claim~\ref{claim:equivalency}]
        By Lemma~\ref{lemma:Gk_paths}, no path inside $G_\ell$ has length more than $\diam(G) + 4$.
        Since $|V(H)| = 4\ell + (k - 5)$, there cannot be a path of length more than $4\ell + (k - 6) < 8\ell + 10 + k$ inside $H$.
        Thus, if there is a path $T$ longer than $\diam(G) + k$ in $G$, it must use vertices in both $G_\ell$ and $H$.
        By the structure of $C$ any such path either lies completely inside $H$ or $G_\ell$ while taking only one extra vertex of $C$, or crosses from $G_\ell$ to $H$ through $C$ only once (or, from $H$ to $G_\ell$, but this case is symmetrical).
        In the first case, if the path starts and ends in $H$, its length is at most $4\ell + (k - 5)$.
        Now consider the case where $T$ starts and ends in $G_\ell$.
        If the vertex $h_7$ of the median hat gadget does not lie on $T$,
        $T$ can be transformed into a path $T'$ of the same length that lies completely inside $G_\ell$,
        by replacing the outer vertex of $C$ ($c_1$ or $c_2$) by $h_7$.
        Otherwise, if $T$ contains the vertex $h_7$, it has to start or end in this vertex, as the only two neighbors $h_6$ and $h_8$ of $h_7$ lie also on $T$ separated by the outer vertex of $C$.
        Then by Lemma~\ref{lemma:Gk_paths} the length of $T$ is at most $4\ell + 18 < 8\ell + 10 + k$, as $T$ is a path inside $G_\ell$ that starts or ends in $h_6$ or $h_8$ of the median hat gadget, plus three extra arcs.

        Therefore, the only option when $T$ can have length at least $\diam(G) + k$ is when it has the following structure: it starts at a vertex $u \in V(G_\ell)$, then continues inside $G_\ell$ until it takes the arc $wt$ in $C$, and then takes a final subpath $(t, v)$ inside $H$.
        Here either $w = c_3$, $t = c_1$,
        or $w = c_4$, $t = c_2$, and we drop the completely symmetrical case where the path goes from $H$ to $G_\ell$ through $C$.
        By Lemma~\ref{lemma:Gk_paths}, the length of the $(u, w)$-subpath is at most $4\ell + 15$.
        Now if the length of $T$ is at least $\diam(G) + k = 8\ell + 10 + k$, the subpath $(t, v)$ must be a Hamiltonian path in $H$ since $|V(H)| = 4\ell + (k - 5)$, and the length of $T$ is exactly $8\ell + 10 + k$.

        In the other direction, if there is a Hamiltonian path $P$ in $H$, consider its starting vertex $w$, and the instance of \probLPDiam constructed with this choice of $w$.
        By Lemma~\ref{lemma:Gk_paths}, there is a path of length $4\ell + 15$ inside $G_\ell$
        that ends in $c_4$, which is also the vertex $h_8$ of the median hat gadget. Continuing this path through the arc $c_4c_2$ and then along the Hamiltonian path (recall that $c_2$ in $C$ is $w$ in $H$), we obtain a path of length $8\ell + 10 + k$.
    \end{proof}

    Clearly, the lemma follows from the three claims above.
\end{proof}

\section{Conclusion}\label{sec:concl}
 We proved that if $\mathcal{C}$ is a class of directed graph such that \probDP is in \classP  on $\mathcal{C}$ for $p=3$, then \probLD is \classFPT on $\mathcal{C}$. However  \probDP is \classNP-complete on directed graphs for every fixed $p\geq 2$~\cite{FortuneHW80}.
This leaves open the question of Bez{\'{a}}kov{\'{a}} et al.~\cite{BezakovaCDF19}  about parameterized complexity of \probLD on general directed graphs.  Even the complexity (\classP versus \classNP) of deciding whether a directed graph contains an $(s,t)$-path longer than  $\dist_G(s,t)$  (the case of $k=1$) remains open. Notice that \probLD is not equivalent to \probDP for $p=3$ and, therefore, the hardness of \probDP does not imply hardness of  \probLD. 

Our result implies, in particular, that \probLD is \classFPT on planar directed graphs. There are various classes of directed graphs on which \probDP is tractable for fixed $p$ (see, e.g., the book of Bang-Jensen and Gutin~\cite{BangJensenG09}). 
For example,  by Chudnovsky, Scott, and Seymour~\cite{ChudnovskySS14},  \probDP can be solved in polynomial time for every fixed $p$ on semi-complete directed graphs. Together with Theorem~\ref{thm:main}, it implies that \probLD is \classFPT on semi-complete directed graphs and tournaments.  However, from what we know, these results could be too weak in the following sense. 
Using the structural results of Thomassen~\cite{Thomassen80a}, Bang-Jensen, Manoussakis, and Thomassen in~\cite{Bang-JensenMT92} gave a polynomial-time algorithm to decide whether a semi-complete directed graph has a Hamiltonian $(s,t)$-path for two given vertices $s$ and $t$. Thus the real question is whether  \probLD  is in \classP  on semi-complete directed graphs or tournaments.

The second part of our results is devoted to \probLPDiam.
We proved that this problem is \classNP-complete for general graphs for $k=1$ and showed that it is in \classFPT when the input graph is undirected and 2-connected.
We established the complexity dichotomy for \probLPDiam for the case of 2-strongly-connected directed graphs by showing that the problem can be solved in polynomial time for $k\le 4$ and is \classNP-complete for $k\ge 5$.
This naturally leaves an open question for larger values of strong connectivity. The computational complexity of \probLPDiam on $t$-strongly connected graphs for $t\ge 3$ is open. For a very concrete question, is there a polynomial algorithm for \probLPDiam with $k=5$ on graphs of strong connectivity $3$?

\end{document}